\ifdefined\pdfsuppressptexinfo
\fi

\ifdefined\pdfinfoomitdate
\fi

\ifdefined\pdftrailerid
  \pdftrailerid{}
\fi

\documentclass{article}
\usepackage{iclr2027_conference,times}

\usepackage{amsmath,amsfonts,bm}

\def\eqref#1{equation~\ref{#1}}
\def\Eqref#1{Equation~\ref{#1}}

\def\1{\bm{1}}

\DeclareMathAlphabet{\mathsfit}{\encodingdefault}{\sfdefault}{m}{sl}
\SetMathAlphabet{\mathsfit}{bold}{\encodingdefault}{\sfdefault}{bx}{n}

\usepackage[utf8]{inputenc}
\usepackage[T1]{fontenc}
\usepackage[english]{babel}
\usepackage{microtype}

\usepackage{amsmath}
\usepackage{amssymb}
\usepackage{amsfonts}
\usepackage{mathtools}
\usepackage{amsthm}
\usepackage{eucal}
\usepackage{bm}
\usepackage{nicefrac}

\usepackage{graphicx}
\usepackage{tikz}
\usepackage{subfigure}
\usepackage{wrapfig}
\usepackage{setspace}

\usepackage{booktabs}
\usepackage{colortbl}
\usepackage{multirow}
\usepackage{tabularx}
\usepackage{enumitem}

\usepackage{algpseudocode}

\usepackage{xcolor}
\usepackage{url}
\usepackage{xspace}
\usepackage{pifont}
\usepackage{stackengine}
\usepackage{etoc}
\usepackage{blindtext}

\usepackage{hyperref}
\hypersetup{
    breaklinks=true,
    colorlinks=true,
    citecolor=blue,
    linkcolor=blue,
    urlcolor=blue,
    hypertexnames=false
}

\newtheorem{lemma}{Lemma}

\newtheorem{proposition}{Proposition}

\newcolumntype{C}[1]{>{\centering\arraybackslash}p{#1}}
\newcolumntype{L}[1]{>{\raggedright\arraybackslash}p{#1}}

\def\ourSystem{\text{CoRF}\xspace}
\newcommand{\ourSystemPR}{\ensuremath{\mathrm{CoRF}_{\mathrm{PR}}}\xspace}
\newcommand{\ourSystemCAL}{\ensuremath{\mathrm{CoRF}_{\mathrm{CAL}}}\xspace}
\newcommand{\ourSystemPRCAL}{\ensuremath{\mathrm{CoRF}_{\mathrm{PR,CAL}}}\xspace}

\def\versus{\textit{vs.}\@\xspace}

\newlength{\mylen}
\graphicspath{{figures/}}

\title{CoRF: Cross-Scene RF Synthesis by Learning Propagation and Preserving Array Physics}

\author{
  Kang Yang\textsuperscript{1} \hspace{2em}
  Duaa Nakshbandi\textsuperscript{2} \hspace{2em}
  Wan Du\textsuperscript{2} \hspace{2em}
  Mani Srivastava\textsuperscript{1}\thanks{Mani Srivastava holds concurrent appointments as a Professor of ECE and CS (joint) at the University of California, Los Angeles.}
  \\[0.5em]
  \textsuperscript{1}Department of Electrical and Computer Engineering, University of California, Los Angeles\\
  \textsuperscript{2}Department of Computer Science and Engineering, University of California, Merced\\
  {\footnotesize\texttt{kyang73@g.ucla.edu \quad \{duaanakshbandi,wdu3\}@ucmerced.edu \quad mbs@ucla.edu}}
}

\iclrfinalcopy

\begin{document}

\maketitle

\lhead{}

\addtocontents{toc}{\protect\setcounter{tocdepth}{-1}}

\begin{abstract}

Existing radio-frequency~(RF) neural fields fit each scene separately, making new-scene deployment measurement- and optimization-intensive.
This work studies amortized cross-scene spatial spectrum synthesis, where a shared model learns propagation across scenes and instantiates an unseen scene from sparse target-scene measurements without scene-specific training.
To achieve this,~\ourSystem separates learned scene-dependent propagation from the known receiver-array observation model.
An unordered set of spectrum-only references conditions a canonical anchor field, producing arrival directions and query-dependent component powers for each query.
Analytic array physics maps these components to the receiver covariance and then to the spatial spectrum.
This factorization keeps the pretrained propagation model frozen, enabling synthesis at arbitrary query locations after a single reference-conditioning pass.
A necessary local reference-capacity bound and an error decomposition further characterize the formulation.
Across~\(35\) simulated scenes spanning seven categories,~\ourSystem outperforms the strongest baseline by~\(7.09\,\mathrm{dB}\)~PSNR on unseen variants of represented scene categories and by~\(6.97\,\mathrm{dB}\) on entirely unseen scene categories.

\end{abstract}

\vspace{\mylen}
\section{Introduction}
\label{sec:intro}
\vspace{\mylen}
    
Spatial spectrum synthesis predicts the angular distribution of received signal power at a receiver antenna array for transmitters at unmeasured locations~\citep{krim1996twodecades,zhao2023nerf2}.
This directional information supports wireless communication and sensing tasks including beam management, direction-of-arrival estimation, localization, and environment-aware communication~\citep{krim1996twodecades,schmidt1986music,guo2025nbf,wang2026radiodiffv2,zeng2024ckm}.
Recent radio-frequency~(RF) neural fields and Gaussian representations synthesize spectra at unmeasured locations~\citep{zhao2023nerf2,lu2024newrf,wen2025wrfgs,zhang2024rf3dgs,yang2025gsrf}, but fit each scene separately.
Deploying these methods in a new scene therefore requires a large target-scene measurement set and a new scene-specific optimization.
In contrast, we study amortized cross-scene synthesis, where a shared model learns a propagation prior across scenes and instantiates an unseen scene from sparse target-scene spectra without scene-specific training.
GRaF~\citep{yang2026graf} is the closest prior work under this input setting, but synthesizes each query from geographically nearby measurements and therefore depends on local reference coverage.

Cross-scene RF synthesis raises two challenges.
\emph{I. Reference correspondence.}
Generalizable rendering in vision establishes query-reference correspondences by projecting query rays into reference views using scene geometry~\citep{yu2021pixelnerf,wang2021ibrnet,charatan2024pixelsplat,chen2024mvsplat}.
RF spatial spectra lack such direct correspondence because each spectrum superposes propagation paths whose directions and powers change with the transmitter location~\citep{zhao2023nerf2,lu2024newrf,orekondy2023winert}.
\emph{II. Propagation and array entanglement.}
Each measured spectrum combines scene-dependent propagation with the receiver-array response~\citep{krim1996twodecades,schmidt1986music,stoica2011spice}.
Propagation varies with scene geometry and material properties~\citep{orekondy2023winert,hoydis2023sionnart,lu2024newrf}, whereas the observation operator is determined by the receiver configuration.
Jointly learning both factors expends model capacity on reproducing known array physics rather than modeling cross-scene propagation.

We introduce~\textbf{\ourSystem} for cross-scene RF spatial spectrum synthesis from sparse measurements.
\ourSystem is trained across source scenes to learn scene-dependent propagation while retaining the receiver-array response as an analytic operator.
Scene propagation is represented by canonical anchors, which are persistent representation slots at fixed coordinates in a normalized volume.
At deployment, an unordered set of target-scene spectra, termed \emph{references}, conditions these anchors once to instantiate the propagation representation of an unseen scene.
The resulting anchor field is reused across query locations to predict arrival directions, mean component powers, and power variances.
An analytic decoder constructs the receiver covariance from these components and then renders the spatial spectrum through the Bartlett response~\citep{krim1996twodecades}, with the power variances providing a query-level reliability estimate.
Thus, feed-forward~\ourSystem operates in an unseen scene through reference conditioning rather than scene-specific parameter fitting, keeping the pretrained propagation model frozen.
The formulation is characterized by a necessary local reference-capacity bound and a propagation-to-spectrum error bound linking component and renderer errors to spectrum error.

Beyond feed-forward~\ourSystem, we develop two optional deployment extensions.
\ourSystemPR refines only the mean component powers using the target-scene reference set, while~\ourSystemCAL calibrates only the analytic receiver response and keeps the learned propagation model fixed.
Cross-scene generalization is evaluated on~\(35\) simulated scenes generated with Sionna RT~\citep{hoydis2023sionnart}, spanning seven categories with five variants each.
Using~\(M=32\) references, feed-forward~\ourSystem outperforms the strongest baseline by~\(7.09\,\mathrm{dB}\)~PSNR on unseen variants of categories represented during training and by~\(6.97\,\mathrm{dB}\) on categories excluded entirely from training.
\textit{Our contributions are as follows:}
\begin{itemize}[label=\textbullet, leftmargin=1em, itemsep=0.2em, topsep=-0.2em, parsep=0pt, partopsep=0pt]

\item We formulate cross-scene spatial spectrum synthesis as amortized propagation inference from an unordered set of spectrum-only target-scene references.

\item We introduce~\ourSystem, which conditions canonical anchors on sparse references to infer scene-dependent propagation and renders these components using analytic receiver-array physics.

\item We demonstrate cross-scene generalization to unseen scene variants and entirely unseen scene categories using sparse target-scene references.

\end{itemize}

\vspace{\mylen}
\section{Related Work}
\label{sec:related_work}
\vspace{\mylen}

\textbf{Spatial Spectrum Synthesis.}
Existing RF neural fields and Gaussian representations reconstruct propagation from scene-specific measurements and typically require separate optimization for each scene~\citep{zhao2023nerf2,lu2024newrf,orekondy2023winert,wen2025wrfgs,wen2024wrfgs,zhang2024rf3dgs,yang2025gsrf,sudo2025swiftwrf,rfpgs2025,nukapotula2025gsparc,huang2024dart}.
Related work explores receiver generalization, sparse differentiable fitting, physics-informed generalizable architectures, and transfer using explicit scene geometry~\citep{chen2024rfcanvas,bian2025genert,shen2026gainerf,zhang2026radtwin}.
Geometry-conditioned methods such as~RadTwin require a point cloud or mesh~\citep{zhang2026radtwin}, whereas our setting uses measured spectra when such a scene model is unavailable.
GRaF~\citep{yang2026graf} is the closest prior work with the same input modality because it performs feed-forward synthesis from reference measurements.
However, each query relies on geographically nearby spectra and therefore requires local reference coverage.
In contrast,~\ourSystem constructs a reusable scene representation from a sparse reference set collected once for the scene and applies it across query locations.

\textbf{Feed-Forward Reconstruction in Vision.}
Generalizable neural rendering predicts new views by conditioning on reference observations~\citep{yu2021pixelnerf,chen2021mvsnerf,wang2021ibrnet,wang2023gnt,sajjadi2022srt,charatan2024pixelsplat,chen2024mvsplat,wang2025vggt,jin2025lvsm}.
Many of these methods exploit geometric structure and cross-view correspondence between reference and query views.
RF spatial spectra lack comparable correspondence because their multipath directions and powers vary across transmitter locations~\citep{zhao2023nerf2,lu2024newrf,orekondy2023winert}.
Permutation-invariant set models and conditional neural processes provide mechanisms for aggregating unordered reference observations~\citep{zaheer2017deepsets,lee2019settransformer,garnelo2018cnp,kim2019anp,nguyen2022tnp}.
Accordingly,~\ourSystem conditions a canonical anchor field on unordered RF reference spectra to form a shared scene representation.

\textbf{Analytic Array Processing.}
Classical array processing recovers angular structure from observed measurements through beamforming, subspace, sparse, and Bayesian methods~\citep{krim1996twodecades,stoica2011spice,yang2013offgridsbl}.
Neural methods further support direction estimation and spectral reconstruction~\citep{shmuel2023subspacenet,shmuel2024transmusic,merkofer2024damusic,spnet2025doa}.
\ourSystem instead predicts propagation for unmeasured transmitter locations and renders the spatial spectrum using the analytic Bartlett response.
Keeping the receiver response explicit allows~\ourSystem to support calibration without retraining propagation~\citep{pan2023insitu}.
Learned compensation can handle broader hardware mismatch but requires receiver-specific data~\citep{liu2018dnnimperfections}.

\begin{figure}[t]
\centering
\includegraphics[width=1.0\linewidth]{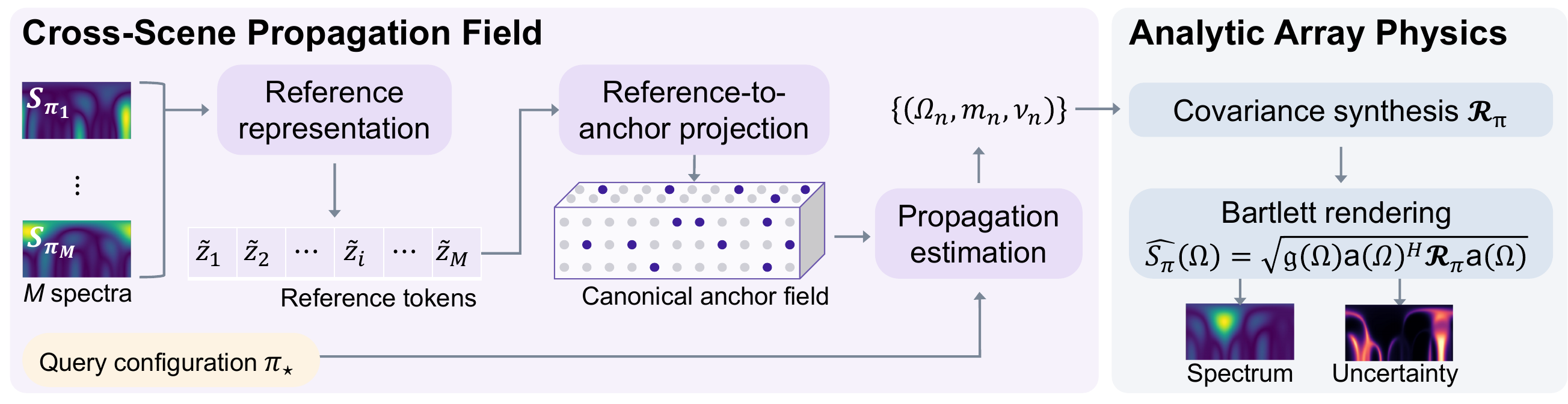}
\caption{
\textit{Framework of \ourSystem.}
Unordered spectrum-only references condition canonical anchors to predict scene-dependent propagation, which analytic array physics renders as the spatial spectrum.
}
\label{fig:overview}
\end{figure}

\vspace{\mylen}
\section{Method}
\label{sec:method}
\vspace{\mylen}

\textbf{Problem Setting.}
We consider a fixed receiver array at position~\(\mathbf t^{\mathrm{rx}}\in\mathbb R^3\) with orientation~\(\mathbf Q\in\mathrm{SO}(3)\).
For a transmitter at~\(\mathbf t_i^{\mathrm{tx}}\in\mathbb R^3\), the measurement configuration is~\(\pi_i=\left(\mathbf t_i^{\mathrm{tx}};\mathbf t^{\mathrm{rx}},\mathbf Q\right)\).
At each configuration, the receiver reports a spatial spectrum~\(S_{\pi_i}\in\mathbb R_+^{H\times W}\) on an~\(H\times W\) angular grid.

Let~\(\mathcal I_{\mathrm{tr}}\) denote the set of training scenes, where each scene provides measurements~\(\left\{\left(\pi_j,S_{\pi_j}\right)\right\}_j\).
For each training instance, we sample a scene~\(s\in\mathcal I_{\mathrm{tr}}\), an unordered reference set~\(\mathcal C_s=\left\{S_{\pi_i}\right\}_{i=1}^{M}\), and a query configuration~\(\pi_\star\) outside the reference set.
A single model~\(f_\theta\) is trained across scenes:
\begin{equation}
\widehat S_{\pi_\star}
=
f_\theta\left(
\mathcal C_s,
\pi_\star
\right),
\qquad
\min_\theta
\mathbb E_{s\in\mathcal I_{\mathrm{tr}}}
\mathbb E_{\mathcal C_s,\pi_\star}
\left[
\ell_{\mathrm{spec}}\left(
\widehat S_{\pi_\star},
S_{\pi_\star}
\right)
\right].
\label{eq:task}
\end{equation}
At test time,~\(M\) references from an unseen scene are processed once to construct a reusable scene representation for synthesizing spectra at arbitrary query transmitter locations.

\textbf{Overview.}
As shown in~\autoref{fig:overview},~\ourSystem first conditions a shared canonical anchor field on the reference spectra to construct the scene representation.
Given a query configuration, the conditioned field predicts~\(K\) propagation components~\(\left\{\left(\Omega_n,m_n,v_n\right)\right\}_{n=1}^{K}\) in~\S\ref{sec:cond}.
Analytic array physics then converts these components into the spatial spectrum and uncertainty in~\S\ref{sec:decoder}.
The formulation is analyzed in~\S\ref{sec:theory}, and training and inference are described in~\S\ref{sec:training}.

\vspace{\mylen}
\subsection{Cross-Scene Propagation Field}
\label{sec:cond}
\vspace{\mylen}

The cross-scene propagation field maps the reference spectra and query configuration to query-specific propagation parameters~\(\left\{\left(\Omega_n,m_n,v_n\right)\right\}_{n=1}^{K}\).
It first constructs a scene representation from the unordered references and then conditions this representation on the query to estimate propagation.

\textbf{(I)~Reference Representation.}
Because the reference measurements form an unordered set, their representation should be invariant to input ordering.
Each reference spectrum is encoded independently, and the resulting tokens are then contextualized across the full reference set:
\begin{equation}
\mathbf z_i
=
E_\phi\left(
S_{\pi_i}
\right),
\qquad
\left\{
\widetilde{\mathbf z}_i
\right\}_{i=1}^{M}
=
T_\psi\left(
\left\{
\mathbf z_i
\right\}_{i=1}^{M}
\right).
\label{eq:enc}
\end{equation}
Here,~\(E_\phi\) is a~2D CNN, while~\(T_\psi\) is a Transformer encoder without positional encoding and is therefore permutation-equivariant over the reference set~\citep{lee2019settransformer}.
The contextualized tokens capture each reference in relation to the full set.

\textbf{(II)~Reference-to-Anchor Projection.}
We define~\(K\) anchors~\(\left\{\mathbf x_n\right\}_{n=1}^{K}\subset\mathbb R^3\) at fixed coordinates in a normalized canonical volume, providing shared spatial support across scenes.
After reference conditioning, each anchor predicts a position offset, an occupancy weight, and an emission feature that parameterize a candidate propagation component.
The conditioning combines global context from the reference set with direction-specific evidence aligned with each anchor.

\emph{Reference-token conditioning.}
Let~\(\boldsymbol\eta\left(\mathbf x_n\right)\) denote the positional embedding of anchor~\(n\).
Each anchor cross-attends to the contextualized reference tokens, and the attended feature modulates its position-dependent representation:
\begin{equation}
\left(
\boldsymbol\gamma_n,
\boldsymbol\beta_n
\right)
=
\mathrm{MLP}\left(
\mathrm{CrossAttn}\left(
\boldsymbol\eta\left(\mathbf x_n\right),
\left\{
\widetilde{\mathbf z}_i
\right\}_{i=1}^{M}
\right)
\right),
\qquad
\mathbf h_n
=
\left(
\mathbf 1+\boldsymbol\gamma_n
\right)
\odot
\mathbf h_n^0
+
\boldsymbol\beta_n.
\label{eq:gate}
\end{equation}
Here,~\(\mathbf h_n^0=\mathrm{ReLU}\left(\mathbf W_0\boldsymbol\eta\left(\mathbf x_n\right)\right)\) is the anchor representation before reference conditioning.
The resulting~\(\mathbf h_n\) captures global reference context associated with anchor~\(n\).

\emph{Bearing-aligned sampling.}
Because each reference token compresses a full spatial spectrum, we additionally preserve direction-specific evidence from its angular feature map~\(\mathbf F_i\).
For anchor~\(n\), we sample each feature map along its bearing~\(\bar{\Omega}_n\) in the receiver array frame:
\begin{equation}
\bar{\Omega}_n
=
\angle\left(
\mathbf Q^{\mathsf T}
\frac{
\mathbf x_n-\mathbf t^{\mathrm{rx}}
}{
\left\|
\mathbf x_n-\mathbf t^{\mathrm{rx}}
\right\|
}
\right),
\qquad
\mathbf b_n
=
\left[
\sum_{i=1}^{M}
w_{n,i}
\mathbf W_{\mathrm v}
\mathbf F_i\left(
\bar{\Omega}_n
\right)
\;;\;
\max_i
\mathbf F_i\left(
\bar{\Omega}_n
\right)
\right].
\label{eq:bearing_sampling}
\end{equation}
Here,~\(w_{n,i}\) are attention weights over the references and~\(\mathbf W_{\mathrm v}\) is a value projection.
The attention pool aggregates evidence across references, while the element-wise maximum retains the strongest response along the anchor bearing.
Together,~\(\left\{\left(\mathbf h_n,\mathbf b_n\right)\right\}_{n=1}^{K}\) form the scene-conditioned canonical anchor field used for query-specific propagation estimation.

\textbf{(III)~Propagation Estimation.}
The canonical anchor field is query independent and captures reusable scene structure.
For each query, we decode the anchors into refined geometry and query-conditioned propagation parameters.
This separation allows the anchor geometry to represent scene structure, while the propagation strength adapts to the query transmitter location.

\emph{Anchor decoding.}
We first fuse the global and bearing-aligned features:
\begin{equation}
\left(
\Delta\mathbf x_n,\,
\alpha_n,\,
\mathbf e_n^{0}
\right)
=
\mathcal D_\omega\left(
\mathbf h_n
+
\mathbf W_{\mathrm b}\mathbf b_n
\right),
\qquad
\mathbf x_n'
=
\mathbf x_n+\Delta\mathbf x_n.
\label{eq:anchor}
\end{equation}
Here,~\(\mathcal D_\omega\) is a shared anchor decoder,~\(\Delta\mathbf x_n\) is a learned position offset that refines the canonical anchor location,~\(\alpha_n\) is its occupancy logit, and~\(\mathbf e_n^{0}\) is its query-independent base emission.

\emph{Propagation parameters.}
The refined anchor position~\(\mathbf x_n'\) determines the final arrival direction~\(\Omega_n\) using the same receiver-frame bearing transformation as~\Eqref{eq:bearing_sampling}, with~\(\mathbf x_n\) replaced by~\(\mathbf x_n'\).

In contrast, the component power depends on the query-anchor geometry, allowing the same anchor field to support different transmitter locations.
Let~\(\mathbf q_\star=\mathbf Q^{\mathsf T}\left(\mathbf t_\star^{\mathrm{tx}}-\mathbf t^{\mathrm{rx}}\right)\) and~\(\mathbf x_n^{\mathrm{rx}}=\mathbf Q^{\mathsf T}\left(\mathbf x_n'-\mathbf t^{\mathrm{rx}}\right)\) denote the query and anchor positions in the receiver array frame.
A query-conditioned FiLM head~\citep{perez2018film} modulates the base emission:
\begin{equation}
\left(
\boldsymbol\gamma_n^\star,
\boldsymbol\beta_n^\star
\right)
=
F_{\mathrm q}\left(
\mathbf q_\star,
\mathbf x_n^{\mathrm{rx}}
\right),
\qquad
\mathbf e_n
=
\left(
\mathbf 1+\boldsymbol\gamma_n^\star
\right)
\odot
\mathbf e_n^{0}
+
\boldsymbol\beta_n^\star.
\label{eq:film}
\end{equation}
Using the zeroth-order, direction-independent component of~\(\mathbf e_n\), denoted by~\(e_n\in\mathbb C\), we compute the mean power and power variance as
\begin{equation}
m_n
=
\operatorname{sigmoid}\left(
\alpha_n
\right)
\left|
e_n
\right|^2,
\qquad
v_n
=
\frac{
m_n^2
}{
\kappa_n
},
\label{eq:propagation_power}
\end{equation}
where~\(\kappa_n>0\) is predicted by an uncertainty head.
The resulting propagation parameters~\(\left\{\left(\Omega_n,m_n,v_n\right)\right\}_{n=1}^{K}\) are passed to the analytic array model.

\vspace{\mylen}
\subsection{Analytic Array Physics}
\label{sec:decoder}
\vspace{\mylen}

Given the estimated propagation parameters~\(\left\{\left(\Omega_n,m_n,v_n\right)\right\}_{n=1}^{K}\), the analytic array model first synthesizes the receiver covariance, then renders the spatial spectrum through the Bartlett response, and finally propagates component power uncertainty to the spectrum.

\textbf{(I)~Covariance Synthesis.}
For an~\(N\)-element receiver array, let~\(\mathbf a\left(\Omega\right)\in\mathbb C^N\) denote the steering vector at arrival direction~\(\Omega\).
Under the narrowband and far-field model~\citep{schmidt1986music}, a component with direction~\(\Omega_n\) and mean power~\(m_n\) contributes the rank-one covariance~\(m_n\mathbf a\left(\Omega_n\right)\mathbf a\left(\Omega_n\right)^{\mathsf H}\).
Assuming mutually incoherent components, their covariance contributions add without cross terms.
With an explicit line-of-sight~(LOS) component, the query covariance is
\begin{equation}
\mathbf R_{\pi_\star}
=
\sum_{n=1}^{K}
m_n
\mathbf a\left(\Omega_n\right)
\mathbf a\left(\Omega_n\right)^{\mathsf H}
+
m_{\mathrm{los}}
\mathbf a\left(\Omega_{\mathrm{los}}\right)
\mathbf a\left(\Omega_{\mathrm{los}}\right)^{\mathsf H}.
\label{eq:decoder}
\end{equation}
Here,~\(\Omega_{\mathrm{los}}\) is the arrival direction of the direct~LOS path.
The term~\(m_{\mathrm{los}}\) provides a shared baseline for its normalized power.
Because each target spectrum is normalized independently, this term does not represent absolute received power.
The conditioned non-LOS components model the remaining variation across scenes and query locations.

\textbf{(II)~Bartlett Rendering.}
The covariance is converted to a spatial spectrum by evaluating the Bartlett response over the angular grid~\citep{krim1996twodecades}:
\begin{equation}
\widetilde S_{\pi_\star}\left(\Omega\right)
=
\sqrt{
g\left(\Omega\right)
\mathbf a\left(\Omega\right)^{\mathsf H}
\mathbf R_{\pi_\star}
\mathbf a\left(\Omega\right)
}.
\label{eq:render}
\end{equation}
Here,~\(g\left(\Omega\right)\) denotes the receiver element power pattern, with~\(g\equiv 1\) for the simulated arrays.
Simulated targets are generated from coherent single snapshots, whereas the decoder in~\Eqref{eq:decoder} sums mutually incoherent covariance components without path cross terms.
Accordingly, the renderer is a structured approximation under the narrowband and far-field assumptions of classical array processing~\citep{krim1996twodecades,schmidt1986music}.
The representation-mismatch term in~Proposition~\ref{prop:rendering_error} captures the residual error under perfect propagation estimates.

\textbf{(III)~Component-Power Uncertainty.}
The analytic renderer also propagates uncertainty in the component powers to the spatial spectrum.
Because Bartlett power is linear in the component powers, their variances propagate through the fixed array response in closed form without stochastic sampling.
Let~\(p_n\ge0\) be the random power of component~\(n\), with~\(\mathbb E\left[p_n\right]=m_n\) and~\(\operatorname{Var}\left[p_n\right]=v_n\).
For~\(\mathbf a_n=\mathbf a\left(\Omega_n\right)\), define the unit-power array response~\(\chi_n\left(\Omega\right)=\left|\mathbf a\left(\Omega\right)^{\mathsf H}\mathbf a_n\right|^2\).
With uncorrelated component powers and fixed arrival directions, the random Bartlett power and variance are
\begin{equation}
P_{\pi_\star}^{\mathrm{rand}}\left(\Omega\right)
=
g\left(\Omega\right)
\sum_{n=0}^{K}
p_n
\chi_n\left(\Omega\right),
\qquad
\operatorname{Var}\left[
P_{\pi_\star}^{\mathrm{rand}}\left(\Omega\right)
\right]
=
g\left(\Omega\right)^2
\sum_{n=0}^{K}
v_n
\chi_n\left(\Omega\right)^2.
\label{eq:power_uncertainty}
\end{equation}
The deterministic~LOS component is included as~\(n=0\), with~\(p_0=m_{\mathrm{los}}\) and~\(v_0=0\), so the mean Bartlett power equals~\(\widetilde S_{\pi_\star}\left(\Omega\right)^2\).
Because the synthesized spectrum is the square root of the Bartlett power, we propagate its variance through the square root using a first-order approximation:
\begin{equation}
\operatorname{Var}\left[
S_{\pi_\star}^{\mathrm{rand}}\left(\Omega\right)
\right]
\approx
\frac{
\operatorname{Var}\left[
P_{\pi_\star}^{\mathrm{rand}}\left(\Omega\right)
\right]
}{
4
\widetilde S_{\pi_\star}\left(\Omega\right)^2
}.
\label{eq:spectrum_uncertainty}
\end{equation}
This quantity conditions on fixed arrival directions and analytically propagates component-power uncertainty into a query-level reliability signal.
The full derivation is provided in~Appendix~\ref{app:Power}.

\vspace{\mylen}
\subsection{Theoretical Analysis}
\label{sec:theory}
\vspace{\mylen}
Two bounds characterize the information supplied by reference spectra and the effect of propagation errors on rendering.
For reference~\(i\),~\(\overline{\mathbf J}_i\) measures scene sensitivity after projecting out pose effects.
\begin{proposition}[Reference Capacity]
\label{prop:reference_budget}
Let~\(r_\star\) be the locally observable scene dimension,~\(M^\star\) the minimum number of references to span it, and~\(N\) the number of array elements.
Then,
\begin{equation}
\operatorname{rank}\overline{\mathbf J}_i
\le
N-1,
\qquad
M^\star
\ge
\left\lceil
\frac{r_\star}{N-1}
\right\rceil.
\label{eq:reference_budget_bound}
\end{equation}
\end{proposition}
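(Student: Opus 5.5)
The plan is to make the sensitivity object precise and then argue dimension counting. Each reference provides the receiver covariance (or the spectrum it determines), viewed through a measurement map \(\mathbf R_{\pi_i}(\boldsymbol\xi)\), where \(\boldsymbol\xi\) collects the scene parameters. Write \(\mathbf J_i=\partial\,\mathrm{vec}(\cdot)/\partial\boldsymbol\xi\) for the Jacobian of that map. The pose-projected version is \(\overline{\mathbf J}_i=\mathbf P_i^\perp\mathbf J_i\), where \(\mathbf P_i^\perp\) removes the directions generated by pose and normalization nuisances.

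The key observation is that each reference spectrum is the Bartlett image of a covariance, \(S_{\pi_i}(\Omega)^2=\mathbf a(\Omega)^{\mathsf H}\mathbf R_{\pi_i}\mathbf a(\Omega)\), as in Eq.~\eqref{eq:render}. The information a spectrum carries about the scene therefore factors through the covariance. I will argue that the relevant part of the covariance, under the incoherent-sum model of Eq.~\eqref{eq:decoder}, is captured by \(N\) real quantities. Concretely, I will restrict to the component structure in which, for a fixed query geometry, the observable scene-dependent part enters through \(N\) real coordinates. Examples are the eigenvalues of \(\mathbf R\), or the diagonal of \(\mathbf R\) in a fixed steering basis when the arrival directions are pose-determined.

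The rank bound then follows from two steps. First, \(\mathbf J_i\) factors as \(\mathbf G_i\mathbf K_i\), where \(\mathbf K_i\) maps scene parameters into these \(N\) coordinates. This gives \(\operatorname{rank}\mathbf J_i\le N\). Second, because each target spectrum is normalized independently, as noted after Eq.~\eqref{eq:decoder}, overall scale is not observable. The direction \(\mathbf R\mapsto c\mathbf R\), which is the total-power or trace direction, is annihilated by the projection \(\mathbf P_i^\perp\). Removing this one direction gives \(\operatorname{rank}\overline{\mathbf J}_i\le N-1\).

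For the count, assume \(M\) references together span the locally observable scene subspace of dimension \(r_\star\). This means \(\operatorname{rank}[\overline{\mathbf J}_1;\dots;\overline{\mathbf J}_M]=r_\star\). Subadditivity of rank under row stacking gives
\[
r_\star\le\sum_{i=1}^{M}\operatorname{rank}\overline{\mathbf J}_i\le M(N-1).
\]
Since \(M\) is an integer, \(M\ge\lceil r_\star/(N-1)\rceil\). Applying this to the minimal \(M=M^\star\) gives the stated bound, which is a necessary condition only.

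The main obstacle is the first factorization step. A full Hermitian \(\mathbf R\) has \(N^2\) real degrees of freedom, so the bound \(N\) is not automatic. It must come from the modeling assumption that, per reference, scene parameters act only through \(N\) real coordinates. I will try to state this explicitly as part of the definition of \(\overline{\mathbf J}_i\). One option is a uniform-linear-array Toeplitz structure after pose removal. The other is treating the arrival directions as pose effects, so that only powers along an \(N\)-dimensional steering frame remain. I will then verify that the normalization direction lies in the span of \(\mathbf J_i\), so that projecting it out genuinely drops the rank by one.
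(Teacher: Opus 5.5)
Your counting step (stack the $\overline{\mathbf J}_i$, use subadditivity of rank, take the integer ceiling) matches the paper's. The per-reference bound $\operatorname{rank}\overline{\mathbf J}_i\le N-1$, however, is not established. You impose ``the scene acts through $N$ real coordinates'' as an assumption, and neither suggested justification delivers it. A Hermitian Toeplitz covariance of a uniform linear array already has $2N-1$ real parameters, and the receiver here is a $4\times4$ planar array with even more lag parameters. Eigenvalues alone do not determine a Bartlett spectrum. Fixing $N$ powers in a fixed steering frame contradicts the scene-dependent directions of the $K$-component model. More fundamentally, you factor through the decoder covariance in \eqref{eq:decoder}, but the proposition concerns the measured reference map $\mathcal F(\mathbf p,\boldsymbol\xi)$. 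The references are coherent single snapshots, which that incoherent decoder only approximates.

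The missing idea is the structure of the measured spectra. By \eqref{eq:app_measurement_spectrum}, each reference is a phase-only single-snapshot Bartlett spectrum. Hence $\mathcal F=\mathcal H\circ\boldsymbol\phi$, with $\boldsymbol\phi\in\mathbb R^N$ the element phases and $\mathcal H$ including the square root and min--max normalization. Because $\widetilde{\mathbf x}(\boldsymbol\phi+\alpha\mathbf 1)=e^{j\alpha}\widetilde{\mathbf x}(\boldsymbol\phi)$ and the common phase cancels in the magnitude, $D\mathcal H(\boldsymbol\phi)\mathbf 1=\mathbf 0$, so $\operatorname{rank}D\mathcal H\le N-1$. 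The chain rule gives the same bound for $\mathbf J_{\mathrm p,i}$, and left multiplication by the pose projector $\boldsymbol\Pi_i^{\perp}$ cannot increase rank.

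So the lost dimension is the global phase, not the overall scale: unit-modulus phasors carry no amplitude direction, and no projection is needed to produce the drop. Your mechanism instead treats an invariance of the measurement, which is a kernel vector of the Jacobian, as an output direction removed by $\mathbf P_i^\perp$. It would also need the scale direction to lie in $\operatorname{col}\mathbf J_i$. Your covariance viewpoint can be repaired, but only by using the true single-snapshot covariance $\widetilde{\mathbf x}(\boldsymbol\phi)\widetilde{\mathbf x}(\boldsymbol\phi)^{\mathsf H}$, which depends on $\boldsymbol\phi$ only modulo a common phase.
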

This necessary local bound shows why complementary references matter, although it does not determine the reference count needed in practice.
The derivation is provided in~Appendix~\ref{app:Information}.

\begin{proposition}[Propagation-to-Spectrum Error]
\label{prop:rendering_error}
Let~\(P\) and~\(\widehat P\) be the target and rendered power spectra,~\(\varepsilon_{\mathrm{phys}}\) the best representable spectrum error, and~\(E_m\) and~\(E_{\Omega}\) the matched power and power-weighted direction errors.
If the array response is bounded and Lipschitz, then
\begin{equation}
\left\|
\widehat P-P
\right\|_\infty
\le
\varepsilon_{\mathrm{phys}}
+
C_m E_m
+
C_{\Omega}E_{\Omega},
\label{eq:rendering_error_bound}
\end{equation}
where~\(C_m\) and~\(C_{\Omega}\) depend on the receiver array.
\end{proposition}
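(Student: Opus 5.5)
The bound follows from a triangle inequality around an intermediate ``best representable'' spectrum, after which the remaining term is linear in the component powers and Lipschitz in the arrival directions. Write the rendered Bartlett power as \(\widehat P(\Omega)=g(\Omega)\sum_{n=0}^{K}\widehat m_n\chi(\Omega;\widehat\Omega_n)\), where \(\chi(\Omega;\Omega')=\left|\mathbf a(\Omega)^{\mathsf H}\mathbf a(\Omega')\right|^2\) as in \Eqref{eq:power_uncertainty}. Let \(\{(\Omega_n^\star,m_n^\star)\}_{n=0}^{K}\) be a parameter set whose incoherent rendering \(P^\star\) attains, or comes within any \(\delta>0\) of, \(\varepsilon_{\mathrm{phys}}=\inf\|P^\star-P\|_\infty\) over representable spectra. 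Then
\begin{equation*}
\|\widehat P-P\|_\infty\le\|P^\star-P\|_\infty+\|\widehat P-P^\star\|_\infty ,
\end{equation*}
and the first term is \(\varepsilon_{\mathrm{phys}}\) (plus \(\delta\), which we let tend to zero). This term absorbs the coherent-versus-incoherent mismatch discussed after \Eqref{eq:render}. The matched errors are defined relative to \(P^\star\), with the component matching fixed by an optimal assignment and the LOS term matched to itself: \(E_m=\sum_n|\widehat m_n-m_n^\star|\) and \(E_\Omega=\sum_n m_n^\star\, d(\widehat\Omega_n,\Omega_n^\star)\) for the angular metric \(d\).

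For the second term, at each grid direction \(\Omega\) we add and subtract \(g(\Omega)\widehat m_n\chi(\Omega;\Omega_n^\star)\) and split
\begin{equation*}
\widehat P-P^\star=g\sum_n(\widehat m_n-m_n^\star)\chi(\cdot;\widehat\Omega_n)+g\sum_n m_n^\star\bigl(\chi(\cdot;\widehat\Omega_n)-\chi(\cdot;\Omega_n^\star)\bigr).
\end{equation*}
The two sums are bounded separately.
\begin{itemize}
\item \emph{Power term.} Boundedness of the response gives \(\|\mathbf a\|\le A\), so by Cauchy--Schwarz \(\chi\le A^4\), and \(g\le G\). Hence this term is at most \(C_mE_m\) with \(C_m=GA^4\); for unit-modulus steering vectors \(A^2=N\).
\item \emph{Direction term.} Assume \(\mathbf a\) is \(L\)-Lipschitz in \(\Omega\). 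Using \(\bigl||x|^2-|y|^2\bigr|\le(|x|+|y|)|x-y|\) with \(x=\mathbf a(\Omega)^{\mathsf H}\mathbf a(\widehat\Omega_n)\) and \(y=\mathbf a(\Omega)^{\mathsf H}\mathbf a(\Omega_n^\star)\), we get \(|\chi(\Omega;\widehat\Omega_n)-\chi(\Omega;\Omega_n^\star)|\le 2A^2\cdot A L\, d(\widehat\Omega_n,\Omega_n^\star)\). This gives \(C_\Omega=2GA^3L\).
\end{itemize}
Both constants depend only on the receiver array. Taking the supremum over \(\Omega\) completes the bound.

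The main obstacle is definitional rather than computational, in two places.
\begin{itemize}
\item \emph{Matching.} \(E_m\) and \(E_\Omega\) must be taken with respect to the same permutation. Any fixed matching suffices because both sums are invariant under relabeling, so the bound holds for the best one. Unmatched components are handled by padding with zero-power components; these contribute to \(E_m\) only and carry zero weight in \(E_\Omega\).
\item \emph{Rendered quantity.} The bound must be stated for the power \(P=S^2\), not the square-root spectrum \(\widetilde S\). This avoids the non-Lipschitz square root near zero; if a bound on \(S\) were wanted, one would need a lower bound on \(P\), which I would not attempt here.
\end{itemize}
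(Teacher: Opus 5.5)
Your proof follows the paper's argument essentially line for line: a triangle inequality around the best representable spectrum $P^\star$, the same per-component split $(\widehat m_n-m_n^\star)\chi(\cdot;\widehat\Omega_n)+m_n^\star\bigl(\chi(\cdot;\widehat\Omega_n)-\chi(\cdot;\Omega_n^\star)\bigr)$, and the same constants $C_m=g_{\max}A^4$ and $C_\Omega=2g_{\max}A^3L_{\mathrm a}$ (your prose names the wrong add-and-subtract term, but the display is correct). Two small remarks: first, the paper avoids your $\delta\to0$ step, which is not quite clean because $E_m$ and $E_\Omega$ change with the chosen near-minimizer, by restricting to a compact admissible set $\mathcal Z_K$ so that a minimizer exists; second, a magnitude-domain bound needs no lower bound on $P$, since $\bigl|\sqrt{x}-\sqrt{y}\bigr|^2\le|x-y|$ gives $\|\sqrt{\widehat P}-\sqrt{P}\|_\infty\le\|\widehat P-P\|_\infty^{1/2}$, and only a Lipschitz-type bound would require one.
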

This bound separates renderer mismatch from errors in predicted component powers and arrival directions.
The derivation is provided in~Appendix~\ref{app:errobound}.

\vspace{\mylen}
\subsection{Training and Inference}
\label{sec:training}
\vspace{\mylen}

\textbf{Training.}
We pretrain the spectrum encoder~\(E_\phi\) and reference contextualizer~\(T_\psi\) with supervised contrastive learning over scene labels, encouraging consistent representations for references from the same scene.
With both modules frozen, we train the reference-to-anchor projection and propagation estimation modules to predict spectra at query locations outside the reference set.
The loss combines spectrum reconstruction, reference-use regularization, and heteroscedastic uncertainty learning, as detailed in~Appendix~\ref{app:implementation:training}.
Model architecture is described in~Appendix~\ref{app:implementation:model}.

\textbf{Inference.}
For an unseen scene,~\ourSystem conditions the canonical anchors on its reference set once and reuses the resulting field across queries.
For each query configuration~\(\pi_\star\), it predicts~\(\left\{\left(\Omega_n,m_n,v_n\right)\right\}_{n=1}^{K}\) and analytically renders the spatial spectrum and uncertainty.
Unless stated otherwise,~\ourSystem denotes this feed-forward setting.

\textbf{Power Refinement.}
We also study~\ourSystemPR, a lightweight, power-only target-scene adaptation that fits a correction to the mean component powers~\(\left\{m_n\right\}_{n=1}^{K}\).
It uses the same~\(M\) references without additional measurements and leaves the pretrained encoder and propagation field frozen.
Arrival directions remain fixed during refinement.
The correction is fitted once per target scene and applied to the predicted powers at each query, as detailed in~Appendix~\ref{app:implementation:power_adaptation}.

\textbf{Receiver Calibration.}
Array geometry may come from hardware specifications, and antenna metadata from AISG Antenna Location and Orientation Sensor~(ALS)~\citep{aisg2024als,3gpp38455}.
If the resulting analytic response is accurate, calibration is unnecessary.
Otherwise,~\ourSystemCAL fits a correction to the nominal response using~\(N_{\mathrm C}\) additional spectra, disjoint from the~\(M\) references and test queries, while keeping the propagation model and scene representation fixed.
The fitted response is reused across queries; details appear in~Appendix~\ref{app:implementation:array_calibration}.

\vspace{\mylen}
\section{Evaluation}
\label{sec:evaluation}
\vspace{\mylen}

\textbf{Dataset.}
We use~\(35\) simulated scenes generated with~Sionna RT~\citep{hoydis2023sionnart} and one measured scene from~NeRF$^2$~\citep{zhao2023nerf2}.
The simulated scenes span seven categories with five variants each: \texttt{conference room}, \texttt{classroom}, \texttt{bedroom}, \texttt{corridor}, \texttt{open office}, \texttt{laboratory}, and \texttt{lounge}.
Each variant contains thousands of transmitter measurements.
All scenes use a fixed~\(4\times4\) receiver array with~\(\lambda/2\) spacing and produce~\(90\times360\) spatial spectra at~\(920\,\mathrm{MHz}\).
Further dataset details are provided in~Appendix~\ref{app:implementation:data}.

\textbf{Protocol.}
Each scene uses a~\(70/10/20\) train, validation, and test split.
Cross-scene evaluations use~\(M=32\) target-scene references from the training split and queries from the test split.
\textit{(I)~For per-scene evaluation}, each model is trained and tested within the same scene variant.
\textit{(II)~For generalization within represented scene categories}, five folds exclude one variant per category.
\textit{(III)~For generalization to excluded scene categories}, seven folds exclude an entire category.

\textbf{Baselines.}
We compare with GRaF~\citep{yang2026graf}, the closest generalizable RF field.
GRaF is trained on the same source scenes as~\ourSystem and receives the same~\(M\) references from the target scene.
For each query, its~\(L\) nearest neighbors are selected from these~\(M\) references.

We also compare with three per-scene RF fields: NeRF$^2$~\citep{zhao2023nerf2}, GSRF~\citep{yang2025gsrf}, and WRF-GS~\citep{wen2025wrfgs} using its deformable WRF-GS+ variant~\citep{wen2024wrfgs}.
For each baseline~\(X\), \textit{X-FS} trains from scratch on the~\(M\) target-scene references.
\textit{X-TR} adapts a model trained on one source scene using the same references, with results averaged over all eligible source scenes.

\textbf{\ourSystem Variants.}
\ourSystem denotes the feed-forward version in the results.
Power refinement, receiver calibration, and their combination are labeled~\ourSystemPR,~\ourSystemCAL, and~\ourSystemPRCAL, respectively.

\textbf{Metrics.}
We report peak signal-to-noise ratio~(PSNR)\(\uparrow\), structural similarity index~(SSIM)\(\uparrow\), normalized mean squared error~(NMSE)\(\downarrow\), and angle-of-arrival~(AoA) error\(\downarrow\).
AoA error measures the angular distance between the peak directions of the predicted and target spectra.

\vspace{\mylen}
\subsection{Per-Scene Performance}
\label{sec:eval:perscene}
\vspace{\mylen}

\begin{wraptable}{r}{0.41\linewidth}
\centering
\small
\setlength{\tabcolsep}{3.5pt}
\caption{\textit{Per-scene performance.}}
\label{tab:perscene}
\resizebox{\linewidth}{!}{%
\begin{tabular}{lllll}
\toprule
Method & PSNR$\uparrow$ & SSIM$\uparrow$ & NMSE$\downarrow$ & AoA$\downarrow$ \\
\midrule
NeRF$^2$ & 21.57\,{\scriptsize$\pm$3.57} & 0.698\,{\scriptsize$\pm$0.104} & 0.130\,{\scriptsize$\pm$0.086} & 11.02\,{\scriptsize$\pm$7.81} \\
GSRF & 20.45\,{\scriptsize$\pm$2.43} & 0.677\,{\scriptsize$\pm$0.066} & 0.151\,{\scriptsize$\pm$0.070} & 11.58\,{\scriptsize$\pm$6.90} \\
WRF-GS & 21.93\,{\scriptsize$\pm$3.06} & 0.705\,{\scriptsize$\pm$0.081} & 0.117\,{\scriptsize$\pm$0.064} & 9.89\,{\scriptsize$\pm$6.67} \\
GRaF & 21.96\,{\scriptsize$\pm$3.11} & 0.683\,{\scriptsize$\pm$0.083} & 0.113\,{\scriptsize$\pm$0.060} & 8.91\,{\scriptsize$\pm$6.48} \\
\midrule
\textbf{\ourSystemPR} & \textbf{24.27}\,{\scriptsize$\pm$3.38} & \textbf{0.791}\,{\scriptsize$\pm$0.071} & \textbf{0.083}\,{\scriptsize$\pm$0.052} & \textbf{7.68}\,{\scriptsize$\pm$5.82} \\
\bottomrule
\end{tabular}
}
\end{wraptable}
We first evaluate the conventional per-scene setting, where each method is trained and tested within the same scene variant.
For this comparison,~\ourSystemPR uses the power-refinement protocol.
Across~\(36\) scenes,~\ourSystemPR achieves the best average performance on all four metrics in~\autoref{tab:perscene}.
Relative to the strongest baseline for each metric, PSNR improves from~\(21.96\) to~\(24.27\,\mathrm{dB}\) and SSIM from~\(0.705\) to~\(0.791\).
NMSE decreases from~\(0.113\) to~\(0.083\), and AoA error from~\(8.91^\circ\) to~\(7.68^\circ\).
Together, these gains show improved spectral fidelity and angular accuracy with component-based propagation estimation and analytic array rendering.
Qualitative examples and per-scene breakdowns appear in~Appendices~\ref{app:qualitative:perscene} and~\ref{app:breakdown_perscene}.

\vspace{\mylen}
\subsection{Generalization to Unseen Variants of Known Scene Categories}
\label{sec:eval:seen}
\vspace{\mylen}
We evaluate five folds, each excluding one scene variant per category; all methods use the same~\(M=32\) references from each test scene.
As shown in~\autoref{tab:seen},~\ourSystemPR achieves~\(23.02\,\mathrm{dB}\) PSNR,~\(0.749\) SSIM,~\(0.119\) NMSE, and~\(8.81^\circ\) AoA error.
Compared with~WRF-GS-FS, the strongest baseline on all four metrics, it improves PSNR by~\(7.16\,\mathrm{dB}\) and SSIM by~\(0.254\), while reducing NMSE by~\(0.279\) and AoA error by~\(13.42^\circ\).
Transfer adaptation is inconsistent: NeRF$^2$-TR and WRF-GS-TR perform worse than training from scratch, while GSRF-TR improves only some metrics.
Sparse target-scene references may be insufficient to adapt a representation fitted to another scene.
GRaF selects nearby references for each query, so sparse random sampling may leave some queries without local coverage.
This comparison therefore measures performance under a matched sparse-reference budget, not under denser local sampling.
In contrast,~\ourSystemPR conditions a reusable anchor field on the full reference set and refines component powers using those same measurements.
Its gains in spectral and AoA metrics show improved spectrum synthesis and dominant-direction accuracy on unseen variants.
Qualitative examples and scene-level breakdowns appear in~Appendices~\ref{app:qualitative:variants} and~\ref{app:breakdown_unseen}, respectively.

\vspace{\mylen}
\subsection{Generalization to Unseen Scene Categories}
\label{sec:eval:heldout}
\vspace{\mylen}
We further evaluate category-level generalization in seven folds, each excluding all five variants of one category from training.
All methods use the same~\(M=32\) references from each test scene.
As shown in~\autoref{tab:heldout},~\ourSystemPR achieves~\(22.89\,\mathrm{dB}\) PSNR,~\(0.743\) SSIM,~\(0.122\) NMSE, and~\(8.89^\circ\) AoA error.
Compared with~WRF-GS-FS, the strongest baseline on all four metrics, it improves PSNR by~\(7.03\,\mathrm{dB}\) and SSIM by~\(0.248\), while reducing NMSE by~\(0.276\) and AoA error by~\(13.36^\circ\).
Under this sparse reference budget, GRaF selects nearby measurements for each query, whereas~\ourSystemPR conditions a reusable anchor field on the full set and applies a target-scene power correction at each query.
Performance remains close to the unseen-variant setting: PSNR changes from~\(23.02\) to~\(22.89\,\mathrm{dB}\), SSIM from~\(0.749\) to~\(0.743\), NMSE from~\(0.119\) to~\(0.122\), and AoA error from~\(8.81^\circ\) to~\(8.89^\circ\).
These results support transfer to excluded categories within the same simulation pipeline, with target-scene references supplying scene-specific information.
Qualitative examples and scene-level breakdowns appear in~Appendices~\ref{app:qualitative:categories} and~\ref{app:breakdown_unseentype}, respectively.

\begin{table*}[t]
\centering
\begin{minipage}[t]{0.49\textwidth}
\centering
\small
\caption{\textit{Generalization to excluded scene variants.} Mean~\(\pm\) standard deviation over~\(5\) folds.}
\label{tab:seen}
\resizebox{\linewidth}{!}{%
\begin{tabular}{lllll}
\toprule
Method & PSNR$\uparrow$ & SSIM$\uparrow$ & NMSE$\downarrow$ & AoA$\downarrow$ \\
\midrule
NeRF$^2$-FS & 13.74\,{\scriptsize$\pm$0.72} & 0.414\,{\scriptsize$\pm$0.032} & 0.636\,{\scriptsize$\pm$0.091} & 24.65\,{\scriptsize$\pm$7.10} \\
NeRF$^2$-TR & 13.00\,{\scriptsize$\pm$0.61} & 0.382\,{\scriptsize$\pm$0.025} & 0.734\,{\scriptsize$\pm$0.101} & 31.20\,{\scriptsize$\pm$7.27} \\
\midrule

GSRF-FS & 12.37\,{\scriptsize$\pm$0.24} & 0.335\,{\scriptsize$\pm$0.004} & 0.717\,{\scriptsize$\pm$0.023} & 51.40\,{\scriptsize$\pm$3.96} \\
GSRF-TR & 12.71\,{\scriptsize$\pm$0.71} & 0.413\,{\scriptsize$\pm$0.026} & 0.766\,{\scriptsize$\pm$0.127} & 43.68\,{\scriptsize$\pm$8.73} \\
\midrule

WRF-GS-FS & 15.86\,{\scriptsize$\pm$0.83} & 0.495\,{\scriptsize$\pm$0.034} & 0.398\,{\scriptsize$\pm$0.053} & 22.23\,{\scriptsize$\pm$5.52} \\
WRF-GS-TR & 13.49\,{\scriptsize$\pm$1.03} & 0.448\,{\scriptsize$\pm$0.027} & 0.624\,{\scriptsize$\pm$0.138} & 34.08\,{\scriptsize$\pm$9.23} \\
\midrule

GRaF & 13.86\,{\scriptsize$\pm$0.38} & 0.441\,{\scriptsize$\pm$0.020} & 0.578\,{\scriptsize$\pm$0.044} & 28.32\,{\scriptsize$\pm$5.63} \\
\midrule
\textbf{\ourSystemPR} & \textbf{23.02}\,{\scriptsize$\pm$2.18} & \textbf{0.749}\,{\scriptsize$\pm$0.055} & \textbf{0.119}\,{\scriptsize$\pm$0.054} & \textbf{8.81}\,{\scriptsize$\pm$4.15} \\
\bottomrule
\end{tabular}
}
\end{minipage}
\hfill
\begin{minipage}[t]{0.49\textwidth}
\centering
\small
\caption{\textit{Generalization to excluded scene categories.} Mean~\(\pm\) standard deviation over~\(7\) folds.}
\label{tab:heldout}
\resizebox{\linewidth}{!}{%
\begin{tabular}{lllll}
\toprule
Method & PSNR$\uparrow$ & SSIM$\uparrow$ & NMSE$\downarrow$ & AoA$\downarrow$ \\
\midrule
NeRF$^2$-FS & 13.75\,{\scriptsize$\pm$0.63} & 0.414\,{\scriptsize$\pm$0.027} & 0.635\,{\scriptsize$\pm$0.078} & 24.65\,{\scriptsize$\pm$7.04} \\
NeRF$^2$-TR & 12.98\,{\scriptsize$\pm$1.15} & 0.379\,{\scriptsize$\pm$0.056} & 0.745\,{\scriptsize$\pm$0.139} & 32.87\,{\scriptsize$\pm$10.89} \\
\midrule
GSRF-FS & 12.39\,{\scriptsize$\pm$0.63} & 0.335\,{\scriptsize$\pm$0.020} & 0.715\,{\scriptsize$\pm$0.094} & 51.29\,{\scriptsize$\pm$6.91} \\
GSRF-TR & 12.77\,{\scriptsize$\pm$0.47} & 0.413\,{\scriptsize$\pm$0.015} & 0.772\,{\scriptsize$\pm$0.100} & 39.73\,{\scriptsize$\pm$3.82} \\
\midrule
WRF-GS-FS & 15.86\,{\scriptsize$\pm$0.81} & 0.495\,{\scriptsize$\pm$0.030} & 0.398\,{\scriptsize$\pm$0.043} & 22.25\,{\scriptsize$\pm$5.19} \\
WRF-GS-TR & 12.75\,{\scriptsize$\pm$1.12} & 0.426\,{\scriptsize$\pm$0.031} & 0.740\,{\scriptsize$\pm$0.202} & 35.04\,{\scriptsize$\pm$5.13} \\
\midrule
GRaF & 13.67\,{\scriptsize$\pm$0.47} & 0.435\,{\scriptsize$\pm$0.025} & 0.607\,{\scriptsize$\pm$0.093} & 29.14\,{\scriptsize$\pm$4.86} \\
\midrule
\textbf{\ourSystemPR} & \textbf{22.89}\,{\scriptsize$\pm$2.72} & \textbf{0.743}\,{\scriptsize$\pm$0.071} & \textbf{0.122}\,{\scriptsize$\pm$0.062} & \textbf{8.89}\,{\scriptsize$\pm$5.43} \\
\bottomrule
\end{tabular}
}
\end{minipage}
\end{table*}

\begin{wrapfigure}{r}{0.4\linewidth}
    \centering
    \includegraphics[width=\linewidth]{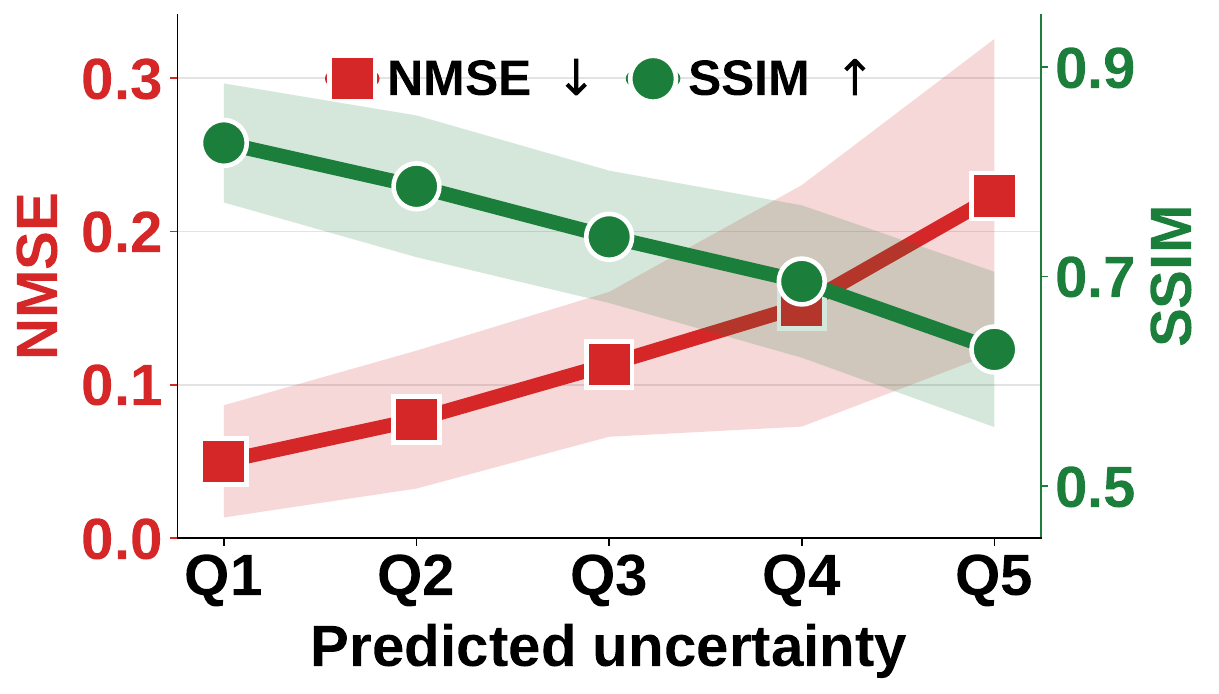}
    \caption{\textit{Reliability analysis.}}
    \label{fig:uncertainty}
\end{wrapfigure}

\vspace{\mylen}
\subsection{Component-Power Reliability}
\label{sec:eval:uncertainty}
\vspace{\mylen}

\ourSystem analytically propagates component-power variance to the spatial spectrum while holding arrival directions fixed.
For each query, the reliability score~\(U_\pi\) is the mean propagated variance over the~\(H\times W\) angular grid.
Because variance scales differ across folds, we group test queries into within-fold quintiles of~\(U_\pi\) before aggregating results over the~\(12\) generalization folds.
As shown in~\autoref{fig:uncertainty}, NMSE increases and SSIM decreases monotonically across these quintiles.
Across~\(68{,}232\) test queries, mean NMSE rises from~\(0.050\) in the lowest-uncertainty quintile to~\(0.223\) in the highest, a~\(4.5\times\) increase, while mean SSIM falls from~\(0.828\) to~\(0.630\).
Mean within-fold Spearman correlations are~\(\rho=0.53\) for NMSE and~\(\rho=-0.55\) for SSIM; the NMSE correlation is positive in every fold, ranging from~\(0.42\) to~\(0.65\).
Therefore, propagated power variance provides a useful query-level ranking of synthesis reliability, though it is not calibrated pixel-level predictive uncertainty.
These results use feed-forward~\ourSystem; the corresponding power-refinement analysis, qualitative examples, and further details appear in~Appendix~\ref{app:reliability_analysis}.

\begin{wraptable}{r}{0.4\linewidth}
\centering
\small
\caption{\textit{Adaptation to unseen arrays.}}
\label{tab:array_adaptation}
\setlength{\tabcolsep}{4pt}
\resizebox{\linewidth}{!}{%
\begin{tabular}{llccc}
\toprule
Array & Array model & PSNR$\uparrow$ & SSIM$\uparrow$ & NMSE$\downarrow$ \\
\midrule
\multirow{2}{*}{\(\mathcal A_1\) (\(3\times3\))}
& Training array & 15.64 & 0.528 & 0.190 \\
& Calibrated     & 20.20 & 0.751 & 0.095 \\
\midrule
\multirow{2}{*}{\(\mathcal A_2\) (\(8\times2\))}
& Training array & 14.96 & 0.458 & 0.327 \\
& Calibrated     & 22.09 & 0.697 & 0.097 \\
\midrule
\multirow{2}{*}{\(\mathcal A_3\) (\(4\times4\) narrow)}
& Training array & 16.47 & 0.553 & 0.179 \\
& Calibrated     & 22.60 & 0.795 & 0.073 \\
\bottomrule
\end{tabular}}
\end{wraptable}

\vspace{\mylen}
\subsection{Receiver Array Adaptation}
\label{sec:eval:array}
\vspace{\mylen}
Separating propagation from array physics allows~\ourSystem to update the analytic response for a new receiver without retraining the propagation model.
When the receiver configuration, including its geometry and pose, is accurately known, this update requires no calibration fit.
We test three arrays unseen during training on the seven held-out scene variants of one fold:~\(\mathcal A_1\), a~\(3\times3\) array at the training spacing;~\(\mathcal A_2\), an~\(8\times2\) array at the same spacing; and~\(\mathcal A_3\), a~\(4\times4\) array with spacing reduced from~\(0.16\,\mathrm m\) to~\(0.12\,\mathrm m\).
Transmitter locations, data splits, and reference sets remain unchanged.
The analytic response uses array geometry, wavelength, position, and orientation from receiver specifications or deployment metadata~\citep{aisg2024als,3gpp38455,3gpp37355,3gpp38901}.

To evaluate imperfect specifications, we start from a perturbed array response and fit an in-plane rotation and two axis scalings using measured spectra.
The pretrained neural modules remain frozen.
This optional calibration uses~\(N_{\mathrm C}=64\) measurements per scene, separate from the~\(M=32\) references and test queries.
In~\autoref{tab:array_adaptation}, retaining the training-array response yields~\(0.458\)--\(0.553\) SSIM, whereas calibration reaches~\(0.697\)--\(0.795\) SSIM and improves PSNR by~\(4.6\)--\(7.1\,\mathrm{dB}\).
Because the imposed errors match the three-parameter calibration model, these results demonstrate correction of the tested specification errors, not robustness to arbitrary receiver mismatch.
Calibration procedures and parameter-error sensitivity are detailed in~Appendix~\ref{app:array_error}.

\vspace{\mylen}
\subsection{Ablation Study}
\label{sec:eval:ablation}
\vspace{\mylen}
\autoref{tab:ablation} examines three design choices of feed-forward~\ourSystem under both generalization settings.
Each variant is retrained on the same data with the same objective.

\begin{table*}[h]
\centering
\small
\vspace{0.05in}
\caption{\textit{Ablation study.} Each variant changes one design choice of~\ourSystem.}
\label{tab:ablation}
\setlength{\tabcolsep}{3.5pt}
\begin{tabular}{lcccccccc}
\toprule
& \multicolumn{4}{c}{Unseen Scene Variants}
& \multicolumn{4}{c}{Unseen Scene Categories} \\
\cmidrule(lr){2-5}
\cmidrule(lr){6-9}
Variant
& PSNR$\uparrow$
& SSIM$\uparrow$
& NMSE$\downarrow$
& AoA$\downarrow$
& PSNR$\uparrow$
& SSIM$\uparrow$
& NMSE$\downarrow$
& AoA$\downarrow$ \\
\midrule
Learned renderer
& 16.09 & 0.544 & 0.369 & 16.27
& 16.00 & 0.542 & 0.396 & 15.42 \\
$-$ bearing-aligned sampling
& 21.07 & 0.697 & 0.159 & 9.95
& 20.79 & 0.709 & 0.149 & 9.67 \\
$-$ query conditioning
& 20.75 & 0.683 & 0.224 & 13.57
& 18.32 & 0.620 & 0.291 & 17.21 \\
\midrule
{\ourSystem}
& {22.95} & {0.740} & {0.121} & {8.40}
& {22.83} & {0.733} & {0.124} & {8.41} \\
\bottomrule
\end{tabular}
\vspace{0.05in}
\end{table*}
\textbf{Analytic Array Physics.}
Replacing the analytic covariance and Bartlett renderer with a learned CNN produces the largest degradation.
For unseen variants, PSNR falls from~\(22.95\) to~\(16.09\,\mathrm{dB}\), SSIM from~\(0.740\) to~\(0.544\), and AoA error rises from~\(8.40^\circ\) to~\(16.27^\circ\); the same pattern holds for unseen categories.
The analytic renderer applies the specified array response to each predicted component, whereas the CNN must learn this mapping from training examples.
The comparison therefore supports retaining the known array mapping while learning scene-dependent propagation.

\textbf{Bearing-Aligned Sampling.}
Removing bearing-aligned sampling reduces PSNR by~\(1.88\) and~\(2.04\,\mathrm{dB}\) in the two settings and increases AoA error by~\(1.55^\circ\) and~\(1.26^\circ\), respectively.
SSIM and NMSE also worsen in both settings.
These results suggest that set-level reference aggregation captures broad scene context, while bearing-aligned sampling supplies complementary directional evidence to the anchors and improves query-specific propagation estimates.

\textbf{Query Conditioning.}
Removing query conditioning reduces PSNR by~\(2.20\,\mathrm{dB}\) for unseen variants and~\(4.51\,\mathrm{dB}\) for unseen categories.
For unseen categories, SSIM also falls from~\(0.733\) to~\(0.620\), NMSE rises from~\(0.124\) to~\(0.291\), and AoA error increases from~\(8.41^\circ\) to~\(17.21^\circ\).
These losses exceed those from removing bearing-aligned sampling, particularly when the target category is excluded from training.
The result highlights the importance of specializing the shared scene representation to each query transmitter location when estimating propagation components.

\vspace{\mylen}
\subsection{Case Study: Beam Management with Synthesized Spatial Spectra}
\label{sec:eval:beam}
\vspace{\mylen}

Beam management aims to find a high-power beam from a codebook, but probing every candidate incurs measurement overhead~\citep{xue2024beammanagement,alkhateeb2017beamassociation}.
For each query location,~\ourSystem uses its synthesized spatial spectrum to rank the candidate beams.
We first evaluate the loss from probing only the highest-ranked beams.
We then test whether probing more beams for less reliable predictions reduces loss at the same average search budget.

\textbf{Spectrum-Based Beam Search.}
For a~\(B=64\)-beam codebook~\(\mathcal B=\left\{\Omega_b\right\}_{b=1}^{B}\), spectrum-based search selects the~\(K_{\mathrm b}\) beams with the highest synthesized values~\(\widehat S_\pi\left(\Omega_b\right)\), rather than searching the full codebook.
Let~\(\widehat{\mathcal B}_{\pi,K_{\mathrm b}}\) denote the selected set.
Using target spatial power~\(P_\pi\left(\Omega\right)=S_\pi\left(\Omega\right)^2\) as the beam-power proxy, we measure loss relative to exhaustive search:
\begin{equation}
L_{\pi,K_{\mathrm b}}
=
10\log_{10}\left(
\frac{
\max_{\Omega_b\in\mathcal B}P_\pi\left(\Omega_b\right)
}{
\max_{\Omega_b\in\widehat{\mathcal B}_{\pi,K_{\mathrm b}}}P_\pi\left(\Omega_b\right)
}
\right).
\label{eq:beam_loss}
\end{equation}
A loss of~\(0\,\mathrm{dB}\) means the selected set contains a beam maximizing the target spatial power.
We evaluate~\(K_{\mathrm b}\in\left\{1,2,4,8,16\right\}\) with the same codebook, queries, and reference budget for all methods.

\textbf{Reliability-Aware Beam Search.}
A fixed~\(K_{\mathrm b}\) assigns every query the same search budget despite differences in synthesis reliability.
We rank queries within each fold by~\(U_\pi\) from~\S\ref{sec:eval:uncertainty} and divide them into five groups.
Groups with higher propagated variance receive more beams while preserving a target average budget~\(\overline K_{\mathrm b}\).
The allocation is chosen on validation data and fixed for testing, so fixed and adaptive policies use the same average number of beam probes.

\begin{wrapfigure}{r}{0.41\linewidth}
\centering
\includegraphics[width=\linewidth]{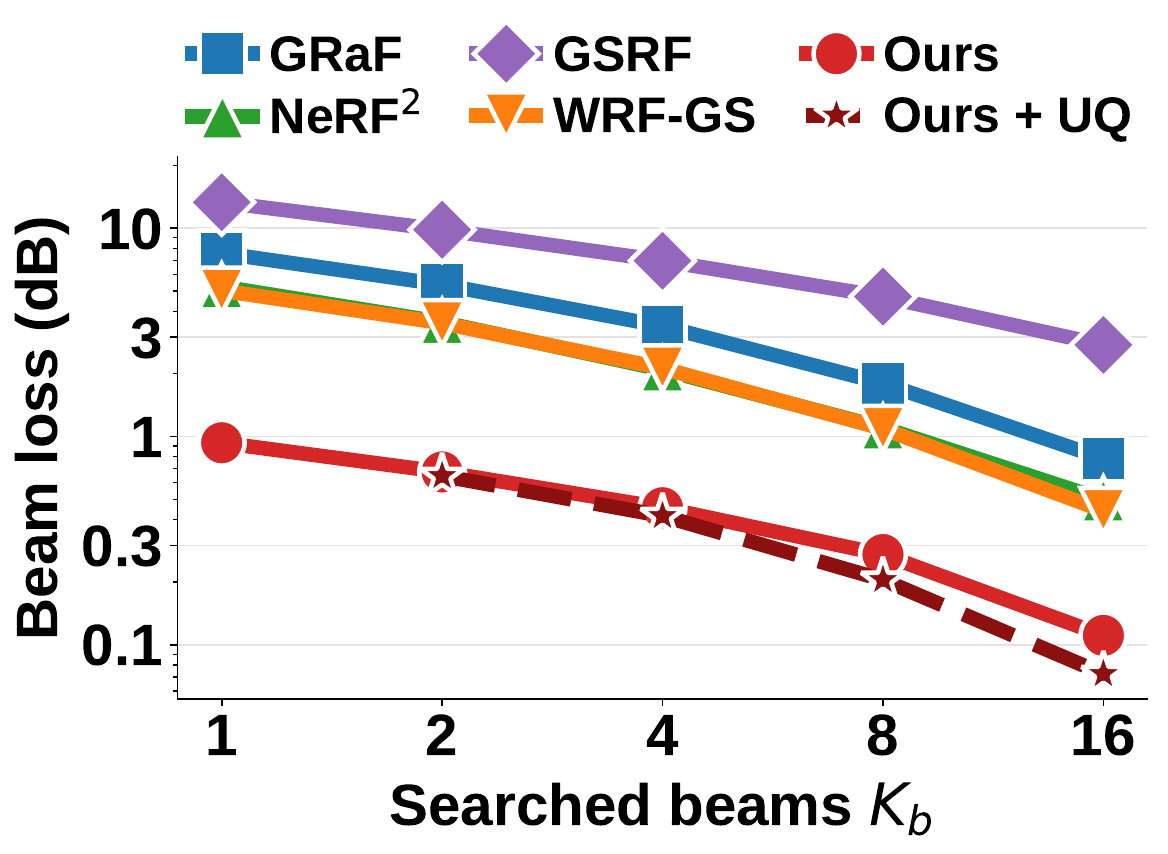}
\caption{\textit{Beam management study.}
}
\label{fig:beam}
\end{wrapfigure}

In~\autoref{fig:beam},~\ourSystem achieves~\(0.93\,\mathrm{dB}\) beam loss with one selected beam, compared with~\(5.02\,\mathrm{dB}\) for the strongest baseline.
It selects a beam maximizing target spatial power in~\(63.7\%\) of queries, versus~\(21.8\%\) for that baseline.
With~\(K_{\mathrm b}=4\),~\ourSystem reaches~\(0.46\,\mathrm{dB}\), comparable to the baseline's~\(0.44\,\mathrm{dB}\) at~\(K_{\mathrm b}=16\), using one quarter as many beams.
For the unseen-variant generalization setting, reliability-aware allocation further reduces beam loss from~\(0.271\) to~\(0.205\,\mathrm{dB}\) at~\(\overline K_{\mathrm b}=8\) and from~\(0.111\) to~\(0.072\,\mathrm{dB}\) at~\(\overline K_{\mathrm b}=16\).
For unseen scene categories, validation selects a nearly uniform allocation and the adaptive gain disappears.
Therefore, spectrum synthesis reduces the beam budget needed for low loss under this proxy, while reliability-aware allocation helps when~\(U_\pi\) meaningfully ranks query difficulty.
A localization case study using the same synthesized spectra appears in~Appendix~\ref{app:localization}.

\vspace{\mylen}
\subsection{Generalization to Real Measurements}
\label{sec:eval:real}
\vspace{\mylen}

\begin{wraptable}{r}{0.41\linewidth}
\centering
\small
\caption{\textit{Simulation-to-real transfer on the measured-scene dataset~(S36).}}
\label{tab:real_matched}
\setlength{\tabcolsep}{4pt}
\resizebox{\linewidth}{!}{%
\begin{tabular}{lcccc}
\toprule
Method & PSNR$\uparrow$ & SSIM$\uparrow$ & NMSE$\downarrow$ & AoA$\downarrow$ \\
\midrule
NeRF$^2$       & 12.94 & 0.457 & 0.287 & 16.24 \\
GSRF           & 16.85 & 0.587 & 0.145 & 12.09 \\
WRF-GS+        & 16.96 & 0.592 & 0.140 & 9.26 \\
GRaF           & 14.24 & 0.510 & 0.219 & \textbf{7.87} \\
\midrule
\ourSystemCAL   & 16.74 & 0.620 & 0.125 & 8.12 \\
\ourSystemPRCAL & \textbf{17.45} & \textbf{0.653} & \textbf{0.118} & 7.88 \\
\bottomrule
\end{tabular}}
\end{wraptable}

We compare~\ourSystem trained on~\(35\) simulated scenes with baselines on measured scene~S36 under a matched budget of~\(96\) target-domain spectra.
\ourSystemCAL and~\ourSystemPRCAL use~\(M=32\) references and~\(N_{\mathrm C}=64\) disjoint calibration spectra;~\ourSystemPRCAL reuses both for power refinement.
Calibration improves PSNR by~\(1.95\,\mathrm{dB}\) over uncalibrated~\ourSystem, and refinement adds~\(0.71\,\mathrm{dB}\) without further measurements.
In~\autoref{tab:real_matched},~\ourSystemPRCAL leads in PSNR, SSIM, and NMSE, while GRaF has a marginally lower AoA error.
The smaller PSNR margin partly reflects WRF-GS+ fitting~S36 directly, whereas~\ourSystem transfers a simulation-trained propagation prior.
Held-out simulations share the material library and propagation pipeline; S36 may have unfamiliar reflected and diffuse paths and receiver mismatch.
Sparse references may recover dominant directions but miss weaker paths, reducing spectral fidelity while leaving AoA nearly unchanged.
Calibration partly corrects receiver mismatch and refinement adjusts powers, but neither recovers missing directions.
Further analysis appears in~Appendix~\ref{app:real}.

Additional experiments examine reference evidence~(\S\ref{app:reference_analysis}), representation and renderer choices~(\S\ref{app:representation_choices}), deployment efficiency~(\S\ref{app:deployment_efficiency}), and material-property shifts~(\S\ref{app:physical_stress}).

\vspace{\mylen}
\section{Discussion and Conclusion}
\label{sec:conclusion}
\vspace{\mylen}

We introduce~\ourSystem for cross-scene RF spatial spectrum synthesis from sparse target-scene measurements.
Unordered references condition a canonical anchor field whose predicted propagation components are rendered through analytic receiver-array physics.
Feed-forward~\ourSystem requires no scene-specific optimization, while~\ourSystemPR optionally fits a correction limited to component powers using the same reference set.
Across~\(35\) simulated scenes,~\ourSystemPR exceeds the strongest baseline by about~\(7\,\mathrm{dB}\) PSNR on both unseen scene variants and excluded categories.

Three limitations remain.
First, the performance gain narrows under simulation-to-real transfer, where physical receiver responses can differ from the nominal array model.
Extending analytic calibration beyond geometry to account for element-wise gain and phase errors may reduce this gap.
Second, the method relies on spatial-spectrum references; extending it to other measurement modalities, such as received signal strength indicator~(RSSI) or channel state information~(CSI), may require modality-specific representations and reference-acquisition strategies.
Third,~\ourSystem assumes static scenes; temporal representations and online reference updates could support dynamic scenes.

\section*{Acknowledgments}
The research reported in this paper was sponsored in part by the DEVCOM Army Research Laboratory~(award \# W911NF1720196) and the National Science Foundation~(awards \# 2325956 and 2525614).
Wan Du and Duaa Nakshbandi were funded in part by the National Science Foundation through grant~CCSS-2525613.
The views and conclusions contained in this document are those of the authors and should not be interpreted as representing the official policies, either expressed or implied, of the funding agencies.


\newpage

\setcounter{tocdepth}{2}
\renewcommand{\contentsname}{Appendix Contents}

{\small\tableofcontents}

\clearpage

\addtocontents{toc}{\protect\setcounter{tocdepth}{2}}

\appendix

\section{Analytic Component Power Uncertainty}
\label{app:Power}

We derive how uncertainty in the predicted component powers propagates through the analytic decoder to the reported spatial spectrum.
We first exploit the linearity of~Bartlett power in the component powers to obtain its mean and variance exactly.
We then analyze the square root transformation, provide a rigorous bound on the resulting magnitude uncertainty, and derive the first order delta approximation used by~\ourSystem.
Finally, we propagate this approximation through the fixed normalization used by the model and obtain the reported variance proxy after global calibration.

\textbf{Exact Power Domain Moments.}
Let
\begin{equation*}
\mathbf a_n
=
\mathbf a\left(
\Omega_n
\right),
\qquad
n=1,\ldots,K.
\end{equation*}
The predicted arrival directions are treated as deterministic.
We include the~LOS component as~\(n=0\), with
\begin{equation*}
\mathbf a_0
=
\mathbf a\left(
\Omega_{\mathrm{los}}
\right),
\qquad
m_0
=
m_{\mathrm{los}},
\qquad
v_0
=
0.
\end{equation*}

Let~\(p_n\ge0\) almost surely denote the random power of component~\(n\), with
\begin{equation*}
\mathbb E\left[
p_n
\right]
=
m_n,
\qquad
\operatorname{Var}\left[
p_n
\right]
=
v_n
<
\infty.
\end{equation*}
We assume that the component powers are pairwise uncorrelated.
The deterministic~LOS contribution is represented by~\(p_0=m_{\mathrm{los}}\) almost surely.

Define
\begin{equation*}
\chi_n\left(
\Omega
\right)
=
\left|
\mathbf a\left(
\Omega
\right)^{\mathsf H}
\mathbf a_n
\right|^2.
\end{equation*}
The corresponding random~Bartlett power is
\begin{equation}
P_\pi^{\mathrm{rand}}
\left(
\Omega
\right)
=
g\left(
\Omega
\right)
\sum_{n=0}^{K}
p_n
\chi_n\left(
\Omega
\right).
\label{eq:app-random-power}
\end{equation}
Since~\(g\left(\Omega\right)\ge0\),~\(p_n\ge0\), and~\(\chi_n\left(\Omega\right)\ge0\), we have~\(P_\pi^{\mathrm{rand}}\left(\Omega\right)\ge0\) almost surely.

\begin{proposition}[Exact Power Domain Moments]
\label{prop:moments}
The mean and variance of~\eqref{eq:app-random-power} are
\begin{equation}
\begin{aligned}
\mu_P\left(
\Omega
\right)
&\coloneqq
\mathbb E\left[
P_\pi^{\mathrm{rand}}
\left(
\Omega
\right)
\right]
=
g\left(
\Omega
\right)
\sum_{n=0}^{K}
m_n
\chi_n\left(
\Omega
\right),
\\
\sigma_P^2\left(
\Omega
\right)
&\coloneqq
\operatorname{Var}\left[
P_\pi^{\mathrm{rand}}
\left(
\Omega
\right)
\right]
=
g\left(
\Omega
\right)^2
\sum_{n=0}^{K}
v_n
\chi_n\left(
\Omega
\right)^2.
\end{aligned}
\label{eq:app-power-moments}
\end{equation}
\end{proposition}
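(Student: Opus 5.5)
The plan is to treat the random Bartlett power in~\eqref{eq:app-random-power} as a linear combination of the random component powers~\(p_0,\ldots,p_K\), with deterministic coefficients \(c_n(\Omega)\coloneqq g(\Omega)\chi_n(\Omega)\) at a fixed grid direction~\(\Omega\). Both moments then follow from linearity of expectation and the variance formula for sums of pairwise uncorrelated variables. The arrival directions~\(\Omega_n\), and hence \(\chi_n(\Omega)=|\mathbf a(\Omega)^{\mathsf H}\mathbf a_n|^2\), are deterministic by assumption. Likewise \(g(\Omega)\) is a fixed nonnegative pattern, so each \(c_n(\Omega)\) is a finite nonnegative constant.

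\emph{Mean.} Since every \(p_n\ge0\) has finite mean \(m_n\), the finite sum \(\sum_n c_n(\Omega)p_n\) is integrable. Linearity gives
\[
\mathbb E\left[P_\pi^{\mathrm{rand}}(\Omega)\right]=g(\Omega)\sum_{n=0}^K m_n\chi_n(\Omega).
\]
Here \(n=0\) contributes \(m_{\mathrm{los}}\chi_0(\Omega)\), because \(p_0=m_{\mathrm{los}}\) almost surely. I will also note that this equals \(g(\Omega)\mathbf a(\Omega)^{\mathsf H}\mathbf R_{\pi}\mathbf a(\Omega)\), which is \(\widetilde S_{\pi}(\Omega)^2\) by~\eqref{eq:decoder}--\eqref{eq:render}. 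This follows from expanding \(\mathbf a^{\mathsf H}\mathbf a_n\mathbf a_n^{\mathsf H}\mathbf a=|\mathbf a^{\mathsf H}\mathbf a_n|^2\).

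\emph{Variance.} Because each \(v_n<\infty\), each \(p_n\) is square-integrable. I will expand
\[
\operatorname{Var}\Bigl[\sum_n c_np_n\Bigr]=\sum_n c_n^2\operatorname{Var}[p_n]+\sum_{n\ne n'}c_nc_{n'}\operatorname{Cov}(p_n,p_{n'}).
\]
Pairwise uncorrelatedness kills every cross term. The deterministic LOS term has \(\operatorname{Cov}(p_0,\cdot)=0\) and \(v_0=0\), so it is consistent with the hypothesis either way. This leaves \(g(\Omega)^2\sum_n v_n\chi_n(\Omega)^2\).

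The argument is routine, so there is no real obstacle. The only points needing care are verifying integrability, so that the moments exist, and making explicit that the coefficients \(g\chi_n\) are nonrandom, since we condition on fixed directions. Without that conditioning the variance would pick up extra terms from uncertainty in~\(\Omega_n\).
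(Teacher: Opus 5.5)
Your proposal is correct and follows the paper's argument. The paper likewise obtains the mean by linearity of expectation and the variance by dropping all cross-covariances under pairwise uncorrelatedness, treating \(g(\Omega)\chi_n(\Omega)\) as deterministic coefficients; your added checks on integrability and the identity \(\mu_P(\Omega)=\widetilde S_\pi(\Omega)^2\) match what the paper states immediately after the proposition.
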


\begin{proof}
The first identity in~\eqref{eq:app-power-moments} follows directly from linearity of expectation and~\(\mathbb E\left[p_n\right]=m_n\).
Since the component powers are pairwise uncorrelated,
\begin{equation*}
\operatorname{Var}\left[
\sum_{n=0}^{K}
p_n
\chi_n\left(
\Omega
\right)
\right]
=
\sum_{n=0}^{K}
v_n
\chi_n\left(
\Omega
\right)^2,
\end{equation*}
which gives the second identity in~\eqref{eq:app-power-moments}.
\end{proof}

The deterministic decoder substitutes the mean powers~\(m_n\) into the same analytic renderer.
Therefore,
\begin{equation*}
\widetilde S_\pi
\left(
\Omega
\right)
=
\sqrt{
\mu_P\left(
\Omega
\right)
}.
\end{equation*}
Thus, introducing the component power uncertainty leaves the deterministic point prediction unchanged.

\textbf{Magnitude Domain Uncertainty.}
Define
\begin{equation*}
S_\pi^{\mathrm{rand}}
\left(
\Omega
\right)
=
\sqrt{
P_\pi^{\mathrm{rand}}
\left(
\Omega
\right)
}.
\end{equation*}
The deterministic prediction~\(\sqrt{\mu_P}\) is generally not equal to~\(\mathbb E\left[S_\pi^{\mathrm{rand}}\right]\).

\begin{lemma}[Magnitude Uncertainty]
\label{lem:magnitude_bounds}
At any angular location with~\(\mu_P>0\),
\begin{equation*}
\mathbb E\left[
S_\pi^{\mathrm{rand}}
\right]
\le
\sqrt{
\mu_P
},
\qquad
\operatorname{Var}\left[
S_\pi^{\mathrm{rand}}
\right]
=
\mu_P
-
\mathbb E\left[
S_\pi^{\mathrm{rand}}
\right]^2,
\end{equation*}
and
\begin{equation}
\operatorname{Var}\left[
S_\pi^{\mathrm{rand}}
\right]
\le
\frac{
\sigma_P^2
}{
\mu_P
}.
\label{eq:app-magnitude-var-bound}
\end{equation}
\end{lemma}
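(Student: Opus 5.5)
The plan has three steps: prove the variance identity, obtain the first inequality from Jensen, and derive the variance bound from an elementary inequality for square roots.

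\textbf{Variance identity.} Write $P=P_\pi^{\mathrm{rand}}(\Omega)\ge0$ almost surely and $S=\sqrt P$, so $S^2=P$. By Proposition~\ref{prop:moments}, $\mathbb E[P]=\mu_P$ is finite, hence $S$ is square integrable. Then $\operatorname{Var}[S]=\mathbb E[S^2]-\mathbb E[S]^2=\mu_P-\mathbb E[S]^2$.

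\textbf{First inequality.} Since $\operatorname{Var}[S]\ge0$, the identity gives $\mathbb E[S]^2\le\mu_P$, and therefore $\mathbb E[S]\le\sqrt{\mu_P}$ because $S\ge0$. Equivalently, this is Jensen's inequality for the concave square root.

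\textbf{Variance bound.} Here I would avoid Taylor expansion and use the elementary inequality
\[
\left|\sqrt{x}-\sqrt{\mu}\right|=\frac{|x-\mu|}{\sqrt{x}+\sqrt{\mu}}\le\frac{|x-\mu|}{\sqrt{\mu}},
\]
valid for $x\ge0$ and $\mu>0$. Squaring it with $x=P$ and $\mu=\mu_P$ gives $(S-\sqrt{\mu_P})^2\le(P-\mu_P)^2/\mu_P$ almost surely. Taking expectations yields $\mathbb E[(S-\sqrt{\mu_P})^2]\le\sigma_P^2/\mu_P$.

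It then remains to note that variance is the minimum of the mean squared deviation over all constants: $\operatorname{Var}[S]=\min_c\mathbb E[(S-c)^2]\le\mathbb E[(S-\sqrt{\mu_P})^2]$. Combining the two inequalities gives~\eqref{eq:app-magnitude-var-bound}.

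\textbf{Main obstacle.} The only delicate point is choosing this comparison. Centering at $\sqrt{\mu_P}$ rather than at $\mathbb E[S]$ is what makes the almost-sure square-root inequality usable; the variance-minimization step then transfers the bound back to $\operatorname{Var}[S]$. The condition $\mu_P>0$ is needed only so that the division by $\sqrt{\mu_P}$ is legitimate.
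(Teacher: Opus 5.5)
Your proof is correct and follows the same argument as the paper: the variance identity from \(S^2=P\), Jensen (which you recover equivalently from \(\operatorname{Var}[S]\ge0\)), and the variance bound obtained by centering at \(\sqrt{\mu_P}\) and using \((\sqrt{x}+\sqrt{\mu_P})^2\ge\mu_P\). Your appeal to the variance as the minimum mean squared deviation over constants simply makes explicit the first inequality the paper uses without justification.
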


\begin{proof}
The first inequality follows from the concavity of the square root and~Jensen's inequality.
Since~\(\left(S_\pi^{\mathrm{rand}}\right)^2=P_\pi^{\mathrm{rand}}\),
\begin{equation*}
\operatorname{Var}\left[
S_\pi^{\mathrm{rand}}
\right]
=
\mathbb E\left[
P_\pi^{\mathrm{rand}}
\right]
-
\mathbb E\left[
S_\pi^{\mathrm{rand}}
\right]^2.
\end{equation*}
For the variance bound,
\begin{align*}
\operatorname{Var}\left[
S_\pi^{\mathrm{rand}}
\right]
&\le
\mathbb E\left[
\left(
\sqrt{
P_\pi^{\mathrm{rand}}
}
-
\sqrt{
\mu_P
}
\right)^2
\right]
\\
&=
\mathbb E\left[
\frac{
\left(
P_\pi^{\mathrm{rand}}
-
\mu_P
\right)^2
}{
\left(
\sqrt{
P_\pi^{\mathrm{rand}}
}
+
\sqrt{
\mu_P
}
\right)^2
}
\right]
\\
&\le
\frac{
\sigma_P^2
}{
\mu_P
},
\end{align*}
which gives~\eqref{eq:app-magnitude-var-bound}.
\end{proof}

If~\(\mu_P\left(\Omega\right)=0\), nonnegativity and zero expectation imply~\(P_\pi^{\mathrm{rand}}\left(\Omega\right)=0\) almost surely.
Hence the corresponding component induced magnitude variance is also zero.

\textbf{Delta Approximation.}
The exact magnitude variance cannot in general be determined from only~\(\mu_P\) and~\(\sigma_P^2\).
For a lightweight uncertainty proxy,~\ourSystem applies the first order delta method to~\(\varphi\left(P\right)=\sqrt{P}\).
For~\(\mu_P\left(\Omega\right)>0\),
\begin{equation}
\sigma_{S,\Delta}^2
\left(
\Omega
\right)
\coloneqq
\frac{
\sigma_P^2\left(
\Omega
\right)
}{
4
\mu_P\left(
\Omega
\right)
}.
\label{eq:app-delta}
\end{equation}
This is the approximation used in~\eqref{eq:spectrum_uncertainty}.
It is expected to be accurate when
\begin{equation*}
\frac{
\sigma_P^2\left(
\Omega
\right)
}{
\mu_P\left(
\Omega
\right)^2
}
\ll
1,
\end{equation*}
but it is not a distribution free upper or lower bound on the true magnitude variance.
In particular,~\eqref{eq:app-delta} equals one quarter of the upper bound expression in~\eqref{eq:app-magnitude-var-bound}, which does not imply that the true variance differs from the approximation by a constant factor.
For~\(\mu_P\left(\Omega\right)=0\), we set~\(\sigma_{S,\Delta}^2\left(\Omega\right)=0\).

\textbf{Normalization and Reported Uncertainty.}
The deterministic prediction is compared with measurements after per spectrum min max normalization.
Define
\begin{equation*}
\ell_\pi
=
\min_{\Omega}
\widetilde S_\pi
\left(
\Omega
\right),
\qquad
r_\pi
=
\max_{\Omega}
\widetilde S_\pi
\left(
\Omega
\right)
-
\ell_\pi,
\qquad
r_\pi
>
0.
\end{equation*}
The reported deterministic spectrum is
\begin{equation*}
\widehat S_\pi
\left(
\Omega
\right)
=
\frac{
\widetilde S_\pi
\left(
\Omega
\right)
-
\ell_\pi
}{
r_\pi
}.
\end{equation*}

Exactly normalizing every realization of~\(S_\pi^{\mathrm{rand}}\) would make its minimum and maximum random and would couple all angular locations.
Instead, the uncertainty propagation holds~\(\ell_\pi\) and~\(r_\pi\) fixed at the deterministic prediction.
Under this fixed affine approximation,
\begin{equation}
\sigma_{\mathrm{norm},\Delta}^2
\left(
\Omega
\right)
=
\frac{
\sigma_P^2\left(
\Omega
\right)
}{
4
\mu_P\left(
\Omega
\right)
r_\pi^2
},
\qquad
\mu_P\left(
\Omega
\right)
>
0.
\label{eq:app-normalized-var}
\end{equation}
For~\(\mu_P\left(\Omega\right)=0\), we set~\(\sigma_{\mathrm{norm},\Delta}^2\left(\Omega\right)=0\).

The final reported variance proxy is
\begin{equation}
\widehat\sigma_\pi^2
\left(
\Omega
\right)
=
\tau_{\mathrm u}^2
\sigma_{\mathrm{norm},\Delta}^2
\left(
\Omega
\right)
+
\varepsilon_0.
\label{eq:app-reported-var}
\end{equation}
The temperature~\(\tau_{\mathrm u}\) rescales the uncertainty proxy, while~\(\varepsilon_0\) provides a residual variance floor.
Both are fitted by the heteroscedastic objective in~\S\ref{sec:training}, with their gradients isolated from the deterministic mean branch.
The reported quantity~\(\widehat\sigma_\pi^2\) is a component power uncertainty proxy rather than the complete predictive variance.
It is conditional on the predicted component means and arrival directions.
It does not account for errors in those means or directions, scene state estimation error, renderer mismatch, or distribution shift.

\section{Observability from Reference Spectra}
\label{app:Information}

We study how many reference spectra are needed to constrain the scene-dependent propagation state.
Because each reference depends on both the scene and its unknown transmitter pose, only scene variations that cannot be explained by pose variation provide effective scene information.
We use this observation to bound the information provided by each reference and characterize how overlap among references affects the required reference budget.

\textbf{Observation Model and Effective Scene Information.}
Let~\(\mathbf p\in\mathbb R^{d_{\mathrm p}}\) denote a local parameterization of the scene-dependent, query-independent propagation state, and~\(\boldsymbol\xi_i\in\Xi\subset\mathbb R^{d_{\pi}}\) the unknown physical pose that generated reference~\(i\).
The vectorized reference spectrum is
\begin{equation*}
\mathbf y_i
=
\mathcal F\left(
\mathbf p,
\boldsymbol\xi_i
\right)
\in
\mathbb R^{HW}.
\end{equation*}
We assume that~\(\mathcal F\) is differentiable at the operating points considered and that the reference poses can vary independently.
For the per spectrum min max normalization, this excludes zero range spectra and points where the active extrema change under an infinitesimal perturbation.

Around~\(\left(\mathbf p_0,\boldsymbol\xi_i\right)\),
\begin{align}
&
\mathcal F\left(
\mathbf p_0+\Delta\mathbf p,
\boldsymbol\xi_i+\Delta\boldsymbol\xi_i
\right)
-
\mathcal F\left(
\mathbf p_0,
\boldsymbol\xi_i
\right)
\nonumber
\\
&\qquad
=
\mathbf J_{\mathrm p,i}
\Delta\mathbf p
+
\mathbf J_{\xi,i}
\Delta\boldsymbol\xi_i
+
o\left(
\left\|
\begin{bmatrix}
\Delta\mathbf p
\\
\Delta\boldsymbol\xi_i
\end{bmatrix}
\right\|_2
\right),
\label{eq:app_ref_linearization}
\end{align}
where
\begin{equation*}
\mathbf J_{\mathrm p,i}
=
D_{\mathbf p}
\mathcal F\left(
\mathbf p_0,
\boldsymbol\xi_i
\right),
\qquad
\mathbf J_{\xi,i}
=
D_{\boldsymbol\xi}
\mathcal F\left(
\mathbf p_0,
\boldsymbol\xi_i
\right).
\end{equation*}

A scene perturbation is informative only through the part of its spectral effect that cannot be reproduced by changing the unknown reference pose.
Define
\begin{equation}
\boldsymbol\Pi_i^{\perp}
=
\mathbf I
-
\mathbf J_{\xi,i}
\mathbf J_{\xi,i}^{\dagger},
\qquad
\overline{\mathbf J}_i
=
\boldsymbol\Pi_i^{\perp}
\mathbf J_{\mathrm p,i},
\label{eq:app_effective_jacobian}
\end{equation}
where~\(\left(\cdot\right)^\dagger\) denotes the~Moore--Penrose pseudoinverse.
The matrix~\(\boldsymbol\Pi_i^{\perp}\) projects onto the orthogonal complement of~\(\operatorname{col}\mathbf J_{\xi,i}\), so~\(\overline{\mathbf J}_i\) retains only scene variations that cannot be absorbed by an infinitesimal pose change.

\begin{lemma}[Effective Scene Information]
\label{lem:effective_scene_information}
A scene perturbation~\(\Delta\mathbf p\) is indistinguishable to first order from a perturbation of the unknown reference pose if and only if
\begin{equation*}
\overline{\mathbf J}_i
\Delta\mathbf p
=
\mathbf 0.
\end{equation*}
\end{lemma}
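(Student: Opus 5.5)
The plan is to make ``indistinguishable to first order'' precise and then reduce it to a statement about orthogonal projections. By the linearization~\eqref{eq:app_ref_linearization}, a scene perturbation \(\Delta\mathbf p\) is indistinguishable to first order from a pose perturbation exactly when some \(\Delta\boldsymbol\xi_i\) reproduces its first-order spectral effect. Concretely, this means there exists \(\Delta\boldsymbol\xi_i\in\mathbb R^{d_\pi}\) with
\[
\mathbf J_{\mathrm p,i}\Delta\mathbf p=\mathbf J_{\xi,i}\Delta\boldsymbol\xi_i .
\]
Equivalently, the change \(\mathbf J_{\mathrm p,i}\Delta\mathbf p\) must lie in \(\operatorname{col}\mathbf J_{\xi,i}\). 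The sign of \(\Delta\boldsymbol\xi_i\) does not matter because the column space is a subspace. The standing assumption that poses vary independently, so that \(\boldsymbol\xi_i\) is interior to \(\Xi\), is what lets arbitrary directions \(\Delta\boldsymbol\xi_i\) be used.

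Next I would record the standard pseudoinverse facts. The matrix \(\mathbf J_{\xi,i}\mathbf J_{\xi,i}^{\dagger}\) is symmetric and idempotent, and its range is \(\operatorname{col}\mathbf J_{\xi,i}\). These follow from the Moore--Penrose conditions \(\mathbf A\mathbf A^\dagger\mathbf A=\mathbf A\) and \((\mathbf A\mathbf A^\dagger)^{\mathsf T}=\mathbf A\mathbf A^\dagger\), taken in the real setting since \(\mathbf y_i\in\mathbb R^{HW}\). Hence \(\boldsymbol\Pi_i^\perp\) in~\eqref{eq:app_effective_jacobian} is the orthogonal projector onto \((\operatorname{col}\mathbf J_{\xi,i})^\perp\), and its kernel is exactly \(\operatorname{col}\mathbf J_{\xi,i}\).

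The two directions then follow. For the forward direction, if \(\mathbf J_{\mathrm p,i}\Delta\mathbf p=\mathbf J_{\xi,i}\Delta\boldsymbol\xi_i\), then
\[
\overline{\mathbf J}_i\Delta\mathbf p=(\mathbf I-\mathbf J_{\xi,i}\mathbf J_{\xi,i}^\dagger)\mathbf J_{\xi,i}\Delta\boldsymbol\xi_i=\mathbf 0,
\]
using \(\mathbf J_{\xi,i}\mathbf J_{\xi,i}^\dagger\mathbf J_{\xi,i}=\mathbf J_{\xi,i}\). For the converse, if \(\overline{\mathbf J}_i\Delta\mathbf p=\mathbf 0\), then \(\mathbf J_{\mathrm p,i}\Delta\mathbf p=\mathbf J_{\xi,i}(\mathbf J_{\xi,i}^\dagger\mathbf J_{\mathrm p,i}\Delta\mathbf p)\). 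So the explicit choice \(\Delta\boldsymbol\xi_i=\mathbf J_{\xi,i}^\dagger\mathbf J_{\mathrm p,i}\Delta\mathbf p\) works, and both perturbations produce identical first-order spectra.

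The main obstacle is not the algebra but fixing the definition of ``first-order indistinguishable.'' I will state it explicitly as equality of the linear terms in~\eqref{eq:app_ref_linearization}, with the \(o(\cdot)\) remainders discarded. I will also note that differentiability at the operating point, including the exclusion of degenerate min--max normalization points, is what makes these Jacobians well defined. Once that definition is in place, the lemma is the identity \(\ker\boldsymbol\Pi_i^\perp=\operatorname{col}\mathbf J_{\xi,i}\).
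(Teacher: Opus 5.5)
Your proposal is correct and follows essentially the same route as the paper: both reduce first-order indistinguishability to \(\mathbf J_{\mathrm p,i}\Delta\mathbf p\in\operatorname{col}\mathbf J_{\xi,i}\), then use that \(\boldsymbol\Pi_i^{\perp}\) is the orthogonal projector whose kernel is exactly this column space. The only difference is that you spell out the pseudoinverse identities and the explicit witness \(\Delta\boldsymbol\xi_i=\mathbf J_{\xi,i}^{\dagger}\mathbf J_{\mathrm p,i}\Delta\mathbf p\), whereas the paper states the equivalence in one line.
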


\begin{proof}
By~\eqref{eq:app_ref_linearization}, the effect of~\(\Delta\mathbf p\) can be canceled by some~\(\Delta\boldsymbol\xi_i\) if and only if
\begin{equation*}
\mathbf J_{\mathrm p,i}
\Delta\mathbf p
\in
\operatorname{col}
\mathbf J_{\xi,i}.
\end{equation*}
This is equivalent to
\begin{equation*}
\boldsymbol\Pi_i^{\perp}
\mathbf J_{\mathrm p,i}
\Delta\mathbf p
=
\mathbf 0,
\end{equation*}
which gives the result by~\eqref{eq:app_effective_jacobian}.
\end{proof}

For a reference configuration~\(\mathcal C=\left\{\boldsymbol\xi_i\right\}_{i=1}^{M}\), define the stacked effective Jacobian
\begin{equation*}
\overline{\mathbf J}_{\mathcal C}
=
\begin{bmatrix}
\overline{\mathbf J}_1
\\
\vdots
\\
\overline{\mathbf J}_M
\end{bmatrix}.
\end{equation*}
Permuting the references only permutes its block rows, so the analysis is invariant to their ordering.

\textbf{Observable Dimension and Reference Sufficiency.}
At~\(\mathbf p_0\), define the total locally observable scene subspace and its dimension as
\begin{equation}
\mathcal O_{\star}\left(\mathbf p_0\right)
=
\operatorname{span}
\left\{
\operatorname{row}
\overline{\mathbf J}
\left(
\mathbf p_0,
\boldsymbol\xi
\right)
:
\boldsymbol\xi\in\Xi
\right\},
\qquad
r_{\star}\left(\mathbf p_0\right)
=
\dim
\mathcal O_{\star}\left(\mathbf p_0\right).
\label{eq:app_observable_dimension}
\end{equation}
Thus,~\(r_{\star}\) is the number of independent scene directions that can be constrained locally by the admissible reference measurements after accounting for unknown reference pose.

A finite reference configuration~\(\mathcal C\) is locally sufficient when
\begin{equation*}
\operatorname{rank}
\overline{\mathbf J}_{\mathcal C}
=
r_{\star}\left(\mathbf p_0\right).
\end{equation*}
The minimum sufficient reference budget is therefore
\begin{equation}
M^{\star}\left(\mathbf p_0\right)
=
\min_{
\substack{
\mathcal C\subset\Xi\\
\left|\mathcal C\right|<\infty
}
}
\left\{
\left|\mathcal C\right|
:
\operatorname{rank}
\overline{\mathbf J}_{\mathcal C}
=
r_{\star}\left(\mathbf p_0\right)
\right\}.
\label{eq:app_minimum_budget}
\end{equation}
Since~\(\mathcal O_{\star}\left(\mathbf p_0\right)\) is finite dimensional, a finite collection of admissible reference row spaces can span it whenever~\(r_{\star}>0\).

\textbf{Reference Capacity and Minimum Budget.}
For the spatial spectra used by~\ourSystem, each reference is formed from the per element phases of an~\(N\) element array.
Let
\begin{equation*}
\widetilde{\mathbf x}
\left(
\boldsymbol\phi
\right)
=
\begin{bmatrix}
e^{j\phi_1}
&
\cdots
&
e^{j\phi_N}
\end{bmatrix}^{\mathsf T},
\qquad
\boldsymbol\phi\in\mathbb R^N.
\end{equation*}
Its~Bartlett power is, up to a fixed scale,
\begin{equation}
P_{\boldsymbol\phi}
\left(
\Omega
\right)
=
g\left(
\Omega
\right)
\left|
\mathbf a\left(
\Omega
\right)^{\mathsf H}
\widetilde{\mathbf x}
\left(
\boldsymbol\phi
\right)
\right|^2.
\label{eq:app_phase_bartlett}
\end{equation}
Let~\(\mathcal H\left(\boldsymbol\phi\right)\) denote the differentiable map from the phase vector to the vectorized spatial spectrum, including the square root and per spectrum normalization.
The reference map therefore factors locally as
\begin{equation*}
\mathcal F
\left(
\mathbf p,
\boldsymbol\xi
\right)
=
\mathcal H
\left(
\boldsymbol\phi
\left(
\mathbf p,
\boldsymbol\xi
\right)
\right).
\end{equation*}

\begin{proof}[Proof of Proposition~\ref{prop:reference_budget}]
For any scalar phase~\(\alpha\),
\begin{equation*}
\widetilde{\mathbf x}
\left(
\boldsymbol\phi
+
\alpha\mathbf 1
\right)
=
e^{j\alpha}
\widetilde{\mathbf x}
\left(
\boldsymbol\phi
\right).
\end{equation*}
The common phase factor disappears from the magnitude in~\eqref{eq:app_phase_bartlett}, so
\begin{equation*}
\mathcal H
\left(
\boldsymbol\phi
+
\alpha\mathbf 1
\right)
=
\mathcal H
\left(
\boldsymbol\phi
\right).
\end{equation*}
Differentiating with respect to~\(\alpha\) at~\(\alpha=0\) gives
\begin{equation*}
D\mathcal H
\left(
\boldsymbol\phi
\right)
\mathbf 1
=
\mathbf 0.
\end{equation*}
Hence~\(D\mathcal H\left(\boldsymbol\phi\right)\) has a nontrivial null direction and
\begin{equation*}
\operatorname{rank}
D\mathcal H
\left(
\boldsymbol\phi
\right)
\le
N-1.
\end{equation*}

By the chain rule,
\begin{equation*}
\mathbf J_{\mathrm p,i}
=
D\mathcal H
\left(
\boldsymbol\phi_i
\right)
D_{\mathbf p}
\boldsymbol\phi
\left(
\mathbf p_0,
\boldsymbol\xi_i
\right),
\end{equation*}
and therefore
\begin{equation*}
\operatorname{rank}
\mathbf J_{\mathrm p,i}
\le
N-1.
\end{equation*}
Since~\(\overline{\mathbf J}_i=\boldsymbol\Pi_i^{\perp}\mathbf J_{\mathrm p,i}\) and left multiplication cannot increase rank,
\begin{equation}
\operatorname{rank}
\overline{\mathbf J}_i
\le
N-1.
\label{eq:app_reference_capacity}
\end{equation}

For any configuration~\(\mathcal C\) containing~\(M\) references,
\begin{equation}
\operatorname{rank}
\overline{\mathbf J}_{\mathcal C}
\le
\sum_{i=1}^{M}
\operatorname{rank}
\overline{\mathbf J}_i
\le
M\left(N-1\right).
\label{eq:app_reference_set_capacity}
\end{equation}
If~\(\mathcal C\) is locally sufficient, then
\begin{equation*}
r_{\star}\left(\mathbf p_0\right)
=
\operatorname{rank}
\overline{\mathbf J}_{\mathcal C}
\le
M\left(N-1\right).
\end{equation*}
Taking the minimum over all locally sufficient configurations gives
\begin{equation}
M^{\star}\left(\mathbf p_0\right)
\ge
\left\lceil
\frac{
r_{\star}\left(\mathbf p_0\right)
}{
N-1
}
\right\rceil,
\label{eq:app_reference_budget_bound}
\end{equation}
which proves Proposition~\ref{prop:reference_budget}.
\end{proof}

\textbf{Reference Complementarity.}
The bound in~\eqref{eq:app_reference_budget_bound} is necessary but not sufficient because different references may constrain overlapping scene directions.
For an existing configuration~\(\mathcal C\) and a candidate reference at~\(\boldsymbol\xi\), define its marginal observable dimension as
\begin{equation}
\Delta r
\left(
\boldsymbol\xi
\mid
\mathcal C
\right)
=
\operatorname{rank}
\begin{bmatrix}
\overline{\mathbf J}_{\mathcal C}
\\
\overline{\mathbf J}
\left(
\mathbf p_0,
\boldsymbol\xi
\right)
\end{bmatrix}
-
\operatorname{rank}
\overline{\mathbf J}_{\mathcal C}.
\label{eq:app_marginal_rank}
\end{equation}
By the dimension formula for sums of subspaces,
\begin{align*}
\Delta r
\left(
\boldsymbol\xi
\mid
\mathcal C
\right)
&=
\operatorname{rank}
\overline{\mathbf J}
\left(
\mathbf p_0,
\boldsymbol\xi
\right)
\\
&\quad
-
\dim
\left(
\operatorname{row}
\overline{\mathbf J}
\left(
\mathbf p_0,
\boldsymbol\xi
\right)
\cap
\operatorname{row}
\overline{\mathbf J}_{\mathcal C}
\right).
\end{align*}
Hence
\begin{equation*}
0
\le
\Delta r
\left(
\boldsymbol\xi
\mid
\mathcal C
\right)
\le
N-1.
\end{equation*}
A reference with~\(\Delta r=0\) adds no new first order observable direction, while a larger~\(\Delta r\) indicates greater complementarity with the existing references.
Thus, reference utility depends on overlap between effective observable subspaces rather than reference count alone.

\textbf{Scope.}
The analysis is local around~\(\mathbf p_0\) and characterizes first order observability rather than global uniqueness of the nonlinear propagation state.
It is also noiseless, so a reference with zero marginal rank may still improve statistical precision under measurement noise.
The nuisance projection accounts for ambiguity caused by unknown reference poses and is independent of the estimator used to infer~\(\mathbf p\).
If the reference poses are observed, the known pose case follows by setting~\(\mathbf J_{\xi,i}=\mathbf 0\), giving~\(\overline{\mathbf J}_i=\mathbf J_{\mathrm p,i}\).
Finally,~\(M^{\star}\) is a local identifiability budget for the reference measurements, not a prediction of the number of references required by an amortized learned estimator such as~\ourSystem to achieve a particular reconstruction accuracy.

\section{Propagation-to-Spectrum Error Bound}
\label{app:errobound}

We prove Proposition~\ref{prop:rendering_error}, which characterizes how propagation estimation errors affect the rendered spatial spectrum.
The proof separates representation mismatch from component power and direction errors, whose sensitivities are determined by the fixed receive array.

\textbf{Setup and Representation Error.}
Let~\(P\) denote the target power spectrum and~\(\widehat P=\mathcal B\left(\widehat{\mathbf z}\right)\) the spectrum rendered from the estimated propagation parameters.
For
\begin{equation*}
\mathbf z
=
\left\{
\left(
\Omega_n,m_n
\right)
\right\}_{n=0}^{K},
\end{equation*}
where~\(n=0\) denotes the~LOS component, the analytic renderer is
\begin{equation}
\mathcal B
\left(
\mathbf z
\right)
\left(
\Omega
\right)
=
g\left(
\Omega
\right)
\sum_{n=0}^{K}
m_n
\left|
\mathbf a\left(
\Omega
\right)^{\mathsf H}
\mathbf a\left(
\Omega_n
\right)
\right|^2.
\label{eq:app_renderer}
\end{equation}
The~LOS direction is fixed by the query geometry, while its power is allowed to vary in the admissible renderer class.

Let~\(\mathcal Z_K\) be a compact set of admissible renderer states with bounded nonnegative powers and admissible component directions.
Choose
\begin{equation}
\mathbf z^\star
\in
\arg\min_{\mathbf z\in\mathcal Z_K}
\left\|
P-\mathcal B\left(\mathbf z\right)
\right\|_\infty,
\qquad
P^\star
=
\mathcal B\left(\mathbf z^\star\right),
\qquad
\varepsilon_{\mathrm{phys}}
=
\left\|
P-P^\star
\right\|_\infty.
\label{eq:app_physics_error}
\end{equation}
Such a minimizer exists because~\(\mathcal Z_K\) is compact and~\(\mathcal B\) is continuous on the finite angular grid.
The term~\(\varepsilon_{\mathrm{phys}}\) captures spectrum structure that cannot be represented by the analytic renderer, including finite component approximation and measurement effects outside the renderer model.

\textbf{Array Response Sensitivity.}
Define the unit power~Bartlett response
\begin{equation*}
\chi
\left(
\Omega,\Omega'
\right)
=
\left|
\mathbf a\left(
\Omega
\right)^{\mathsf H}
\mathbf a\left(
\Omega'
\right)
\right|^2.
\end{equation*}
Assume
\begin{equation}
0
\le
g\left(
\Omega
\right)
\le
g_{\max},
\qquad
\left\|
\mathbf a\left(
\Omega
\right)
\right\|_2
\le
A,
\qquad
\left\|
\mathbf a\left(
\Omega
\right)
-
\mathbf a\left(
\Omega'
\right)
\right\|_2
\le
L_{\mathrm a}
d_{\mathrm a}
\left(
\Omega,\Omega'
\right),
\label{eq:app_array_assumptions}
\end{equation}
where~\(d_{\mathrm a}\) is a response relevant direction discrepancy.

\begin{lemma}[Array Response Sensitivity]
\label{lem:array_sensitivity}
Under~\eqref{eq:app_array_assumptions},
\begin{equation}
\chi
\left(
\Omega,\Omega'
\right)
\le
A^4,
\qquad
\left|
\chi
\left(
\Omega,\Omega_1
\right)
-
\chi
\left(
\Omega,\Omega_2
\right)
\right|
\le
2A^3L_{\mathrm a}
d_{\mathrm a}
\left(
\Omega_1,\Omega_2
\right).
\label{eq:app_chi_bound}
\end{equation}
\end{lemma}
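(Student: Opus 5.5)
Both parts of Lemma~\ref{lem:array_sensitivity} follow from Cauchy--Schwarz and a difference-of-squares factorization, using only the assumptions in~\eqref{eq:app_array_assumptions}.

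\emph{Uniform bound.} By Cauchy--Schwarz, \(\left|\mathbf a\left(\Omega\right)^{\mathsf H}\mathbf a\left(\Omega'\right)\right|\le\left\|\mathbf a\left(\Omega\right)\right\|_2\left\|\mathbf a\left(\Omega'\right)\right\|_2\le A^2\). Squaring gives \(\chi\left(\Omega,\Omega'\right)\le A^4\).

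\emph{Lipschitz bound.} Write \(u_k=\mathbf a\left(\Omega\right)^{\mathsf H}\mathbf a\left(\Omega_k\right)\) for \(k=1,2\). Then
\begin{equation*}
\left|\chi\left(\Omega,\Omega_1\right)-\chi\left(\Omega,\Omega_2\right)\right|
=\left|\,|u_1|^2-|u_2|^2\right|
=\left(|u_1|+|u_2|\right)\left|\,|u_1|-|u_2|\right|.
\end{equation*}
The first factor is at most \(2A^2\) by the uniform estimate above. For the second factor, the reverse triangle inequality followed by Cauchy--Schwarz gives
\begin{equation*}
\left|\,|u_1|-|u_2|\right|\le|u_1-u_2|
=\left|\mathbf a\left(\Omega\right)^{\mathsf H}\left(\mathbf a\left(\Omega_1\right)-\mathbf a\left(\Omega_2\right)\right)\right|
\le A\,L_{\mathrm a}\,d_{\mathrm a}\left(\Omega_1,\Omega_2\right),
\end{equation*}
where the last step uses the Lipschitz assumption on the steering vector. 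Multiplying the two factors yields \(2A^3L_{\mathrm a}d_{\mathrm a}\left(\Omega_1,\Omega_2\right)\), which is the second inequality in~\eqref{eq:app_chi_bound}.

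The only step needing care is the case \(|u_1|+|u_2|=0\). There the factorization is trivially an identity and both sides of the difference are zero, so no division is involved and nothing is lost. No other step presents an obstacle. The result will then serve as the per-component sensitivity in the proof of Proposition~\ref{prop:rendering_error}, which uses the constants \(C_m=g_{\max}A^4\) and \(C_{\Omega}=2g_{\max}A^3L_{\mathrm a}\).
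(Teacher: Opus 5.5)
Your proposal is correct and follows essentially the same route as the paper: Cauchy--Schwarz gives \(\chi\le A^4\), and the Lipschitz bound comes from \(\bigl||u_1|^2-|u_2|^2\bigr|\le(|u_1|+|u_2|)\,|u_1-u_2|\) followed by Cauchy--Schwarz and the steering-vector Lipschitz assumption. The paper simply writes the first inequality in one step, where you split it into the exact factorization plus the reverse triangle inequality; your separate remark about \(|u_1|+|u_2|=0\) is harmless but not needed.
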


\begin{proof}
By~Cauchy--Schwarz,
\begin{equation*}
\left|
\mathbf a\left(
\Omega
\right)^{\mathsf H}
\mathbf a\left(
\Omega'
\right)
\right|
\le
\left\|
\mathbf a\left(
\Omega
\right)
\right\|_2
\left\|
\mathbf a\left(
\Omega'
\right)
\right\|_2
\le
A^2,
\end{equation*}
which gives the first inequality.

For
\begin{equation*}
x
=
\mathbf a\left(
\Omega
\right)^{\mathsf H}
\mathbf a\left(
\Omega_1
\right),
\qquad
y
=
\mathbf a\left(
\Omega
\right)^{\mathsf H}
\mathbf a\left(
\Omega_2
\right),
\end{equation*}
we have~\(\left|x\right|,\left|y\right|\le A^2\), and therefore
\begin{align*}
\left|
\left|x\right|^2
-
\left|y\right|^2
\right|
&\le
\left(
\left|x\right|
+
\left|y\right|
\right)
\left|
x-y
\right|
\\
&\le
2A^3
\left\|
\mathbf a\left(
\Omega_1
\right)
-
\mathbf a\left(
\Omega_2
\right)
\right\|_2
\\
&\le
2A^3L_{\mathrm a}
d_{\mathrm a}
\left(
\Omega_1,\Omega_2
\right).
\end{align*}
\end{proof}

For a planar unit modulus array with element locations~\(\mathbf d_u\in\mathbb R^2\),
\begin{equation*}
a_u\left(
\Omega
\right)
=
e^{-j\kappa
\mathbf d_u^{\mathsf T}
\mathbf u_{\parallel}\left(
\Omega
\right)},
\qquad
\kappa
=
\frac{2\pi}{\lambda},
\end{equation*}
where~\(\mathbf u_{\parallel}\left(\Omega\right)\) is the projection of the arrival direction onto the array plane.
Define
\begin{equation*}
d_{\parallel}
\left(
\Omega,\Omega'
\right)
=
\left\|
\mathbf u_{\parallel}\left(
\Omega
\right)
-
\mathbf u_{\parallel}\left(
\Omega'
\right)
\right\|_2.
\end{equation*}
If~\(\mathbf D\) contains the centered element locations as its rows, then~\(\left|e^{jx}-e^{jy}\right|\le\left|x-y\right|\) gives
\begin{equation*}
\left\|
\mathbf a\left(
\Omega
\right)
-
\mathbf a\left(
\Omega'
\right)
\right\|_2
\le
\kappa
\left\|
\mathbf D
\right\|_2
d_{\parallel}
\left(
\Omega,\Omega'
\right).
\end{equation*}
Hence one may choose
\begin{equation*}
A
=
\sqrt{N},
\qquad
L_{\mathrm a}
=
\kappa
\left\|
\mathbf D
\right\|_2.
\end{equation*}
Moreover, projection is nonexpansive, so if~\(d_{\mathrm{geo}}\) is the geodesic angle between two arrival directions,
\begin{equation*}
d_{\parallel}
\left(
\Omega,\Omega'
\right)
\le
2
\sin
\left(
\frac{
d_{\mathrm{geo}}
\left(
\Omega,\Omega'
\right)
}{
2
}
\right)
\le
d_{\mathrm{geo}}
\left(
\Omega,\Omega'
\right).
\end{equation*}
Thus, the same sensitivity bound also holds with geodesic angular error.

\textbf{Propagation to Spectrum Error.}
Write
\begin{equation*}
\widehat{\mathbf z}
=
\left\{
\left(
\widehat\Omega_n,\widehat m_n
\right)
\right\}_{n=0}^{K},
\qquad
\mathbf z^\star
=
\left\{
\left(
\Omega_n^\star,m_n^\star
\right)
\right\}_{n=0}^{K}.
\end{equation*}
Both~\(\widehat\Omega_0\) and~\(\Omega_0^\star\) equal the known~LOS direction.
Because the learned components are unordered, let~\(\upsilon\) denote a matching permutation over~\(\left\{1,\ldots,K\right\}\).

Define the matched component power error and power weighted direction error as
\begin{align*}
E_m
&=
\left|
\widehat m_0-m_0^\star
\right|
+
\sum_{n=1}^{K}
\left|
\widehat m_{\upsilon\left(n\right)}
-
m_n^\star
\right|,
\\
E_{\Omega}
&=
\sum_{n=1}^{K}
m_n^\star
d_{\mathrm a}
\left(
\widehat\Omega_{\upsilon\left(n\right)},
\Omega_n^\star
\right).
\end{align*}

\begin{proof}[Proof of Proposition~\ref{prop:rendering_error}]
By the triangle inequality,
\begin{equation*}
\left\|
\widehat P-P
\right\|_\infty
\le
\varepsilon_{\mathrm{phys}}
+
\left\|
\widehat P-P^\star
\right\|_\infty.
\end{equation*}

For each learned component,
\begin{align*}
&
\left|
\widehat m_{\upsilon\left(n\right)}
\chi
\left(
\Omega,\widehat\Omega_{\upsilon\left(n\right)}
\right)
-
m_n^\star
\chi
\left(
\Omega,\Omega_n^\star
\right)
\right|
\\
&\quad\le
\left|
\widehat m_{\upsilon\left(n\right)}
-
m_n^\star
\right|
\chi
\left(
\Omega,\widehat\Omega_{\upsilon\left(n\right)}
\right)
\\
&\qquad+
m_n^\star
\left|
\chi
\left(
\Omega,\widehat\Omega_{\upsilon\left(n\right)}
\right)
-
\chi
\left(
\Omega,\Omega_n^\star
\right)
\right|
\\
&\quad\le
A^4
\left|
\widehat m_{\upsilon\left(n\right)}
-
m_n^\star
\right|
+
2A^3L_{\mathrm a}
m_n^\star
d_{\mathrm a}
\left(
\widehat\Omega_{\upsilon\left(n\right)},
\Omega_n^\star
\right),
\end{align*}
where the last inequality follows from Lemma~\ref{lem:array_sensitivity}.
For the~LOS component, the direction is fixed, so its contribution is bounded by~\(A^4\left|\widehat m_0-m_0^\star\right|\).

Summing over the components, using~\(g\left(\Omega\right)\le g_{\max}\), and taking the maximum over the angular grid gives
\begin{equation*}
\left\|
\widehat P-P^\star
\right\|_\infty
\le
g_{\max}A^4E_m
+
2g_{\max}A^3L_{\mathrm a}E_{\Omega}.
\end{equation*}
Therefore,
\begin{equation}
\boxed{
\left\|
\widehat P-P
\right\|_\infty
\le
\varepsilon_{\mathrm{phys}}
+
C_mE_m
+
C_{\Omega}E_{\Omega},
}
\qquad
C_m
=
g_{\max}A^4,
\qquad
C_{\Omega}
=
2g_{\max}A^3L_{\mathrm a}.
\label{eq:app_final_renderer_bound}
\end{equation}
This proves Proposition~\ref{prop:rendering_error}.
\end{proof}

For the planar unit modulus array above, the sensitivity constants become
\begin{equation}
C_m
=
g_{\max}N^2,
\qquad
C_{\Omega}
=
2g_{\max}
N^{3/2}
\kappa
\left\|
\mathbf D
\right\|_2.
\label{eq:app_explicit_C}
\end{equation}
Thus, for a fixed receive array, the conversion from propagation errors to spectrum error is determined by fixed array physics rather than the scene.

\textbf{Magnitude and Normalized Spectrum.}
The preceding result is stated in the power domain because the analytic renderer is additive in the component powers.
Let
\begin{equation*}
S
=
\sqrt{P},
\qquad
\widehat S_{\mathrm u}
=
\sqrt{\widehat P}
\end{equation*}
denote the target and rendered unnormalized magnitude spectra.
Since~\(\left|\sqrt{x}-\sqrt{y}\right|^2\le\left|x-y\right|\) for~\(x,y\ge0\),
\begin{equation}
\left\|
\widehat S_{\mathrm u}-S
\right\|_\infty
\le
\sqrt{
\left\|
\widehat P-P
\right\|_\infty
}.
\label{eq:app_spectrum_error}
\end{equation}

For a nonconstant spectrum~\(u\), define
\begin{equation*}
\mathcal N\left(u\right)
=
\frac{
u-\min u
}{
\max u-\min u
},
\qquad
r_u
=
\max u-\min u.
\end{equation*}

\begin{lemma}[Normalization Stability]
\label{lem:normalization_stability}
For any two nonconstant spectra~\(u\) and~\(v\) on the finite angular grid,
\begin{equation}
\left\|
\mathcal N\left(u\right)
-
\mathcal N\left(v\right)
\right\|_\infty
\le
\frac{
2
}{
\max\left\{
r_u,r_v
\right\}
}
\left\|
u-v
\right\|_\infty.
\label{eq:app_norm_lipschitz}
\end{equation}
\end{lemma}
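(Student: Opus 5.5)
The plan is to reduce everything to pointwise comparisons of the offsets from the minimum and from the maximum. The two normalizations give two complementary representations of \(\mathcal N\):
\[
\mathcal N\left(u\right)=\frac{u-\min u}{r_u}=1-\frac{\max u-u}{r_u}.
\]
Each of the two one-sided inequalities will use one of these forms. Set \(\delta=\left\|u-v\right\|_\infty\).

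By symmetry of the claim in \(u\) and \(v\), I will assume without loss of generality that \(r_u\ge r_v\), so that \(\max\left\{r_u,r_v\right\}=r_u\). I first record the elementary facts \(\left|\min u-\min v\right|\le\delta\) and \(\left|\max u-\max v\right|\le\delta\). These hold because min and max are \(1\)-Lipschitz in the sup norm on the finite grid. Hence, pointwise at every angular location \(\Omega\),
\[
a=u-\min u,\quad b=v-\min v \;\Rightarrow\; \left|a-b\right|\le 2\delta,
\]
\[
a'=\max u-u,\quad b'=\max v-v \;\Rightarrow\; \left|a'-b'\right|\le 2\delta,
\]
with \(0\le a,a'\le r_u\) and \(0\le b,b'\le r_v\).

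\emph{Upper direction.} I bound \(\mathcal N\left(u\right)-\mathcal N\left(v\right)=a/r_u-b/r_v\). Since \(b\ge0\) and \(r_v\le r_u\), we have \(b/r_v\ge b/r_u\). Therefore
\[
\frac{a}{r_u}-\frac{b}{r_v}\le\frac{a-b}{r_u}\le\frac{2\delta}{r_u}.
\]

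\emph{Lower direction.} I switch to the max-anchored form. Write
\[
\mathcal N\left(v\right)-\mathcal N\left(u\right)=\frac{a'}{r_u}-\frac{b'}{r_v}.
\]
Since \(b'\ge0\) and \(r_v\le r_u\), the same argument gives
\[
\frac{a'}{r_u}-\frac{b'}{r_v}\le\frac{a'-b'}{r_u}\le\frac{2\delta}{r_u}.
\]

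Combining the two one-sided bounds yields \(\left|\mathcal N\left(u\right)\left(\Omega\right)-\mathcal N\left(v\right)\left(\Omega\right)\right|\le 2\delta/r_u\) at every grid point. Taking the maximum over the finite grid gives \eqref{eq:app_norm_lipschitz}.

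The main obstacle is getting the constant \(2\) with the larger range in the denominator. The naive decomposition
\[
\mathcal N\left(u\right)-\mathcal N\left(v\right)=\frac{a-b}{r_u}+\mathcal N\left(v\right)\frac{r_v-r_u}{r_u}
\]
bounds the two terms separately and gives only \(4\delta/r_u\). The fix is to avoid bounding the range difference at all. The monotonicity \(b/r_v\ge b/r_u\) (and likewise for \(b'\)) discards that term with a favourable sign. This works in each direction only when the matching anchor, the minimum or the maximum, is used. Nonconstancy of \(u\) and \(v\) guarantees \(r_u,r_v>0\), so all quotients are well defined.
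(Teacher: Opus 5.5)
Your proof is correct, but it reaches the constant $2$ by a different mechanism than the paper.

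The paper never orders the ranges. At a grid point it sets $t=\mathcal N(v)\in[0,1]$, so that $v=\min v+t\,r_v$, and uses the exact identity
\[
r_u\bigl(\mathcal N(u)-\mathcal N(v)\bigr)=(1-t)\left(\min v-\min u\right)+t\left(\max v-\max u\right)+(u-v).
\]
The right-hand side is a convex combination of the two endpoint shifts, each at most $\|u-v\|_\infty$, plus the pointwise shift. It is therefore bounded by $2\|u-v\|_\infty$ in absolute value, which gives $2\|u-v\|_\infty/r_u$ for every pair, whichever range is larger. Swapping $u$ and $v$ gives the bound with $r_v$, and the smaller of the two is the lemma.

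You instead assume $r_u\ge r_v$ and prove two one-sided inequalities. In each, you drop the range-mismatch term by a sign argument: the offsets are nonnegative and $r_v\le r_u$. The min-anchored form handles one direction and the max-anchored form handles the other.

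What each route buys:
\begin{itemize}
\item Your argument uses the ordering essentially, so it bounds by the larger range directly, which is exactly what is claimed. The price is a case split into two directions with different anchors.
\item The paper's identity treats both signs at once and needs no WLOG.
\item It also shows that the range difference never needs a separate bound. It is absorbed into the endpoint shifts through the weights $(1-t)$ and $t$.
\end{itemize}

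Your diagnosis that the naive decomposition yields only $4\|u-v\|_\infty/r_u$ is accurate. Those convex weights are precisely how the paper avoids that loss, while you avoid it by monotonicity.
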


\begin{proof}
Set~\(\varepsilon=\left\|u-v\right\|_\infty\),~\(a=\min u\),~\(b=\max u\),~\(a'=\min v\), and~\(b'=\max v\).
Then~\(\left|a-a'\right|\le\varepsilon\) and~\(\left|b-b'\right|\le\varepsilon\).
At any grid point, write~\(v=a'+tr_v\) with~\(t\in\left[0,1\right]\) and~\(u=v+\delta\) with~\(\left|\delta\right|\le\varepsilon\).
It follows that
\begin{equation*}
\left|
\mathcal N\left(u\right)
-
\mathcal N\left(v\right)
\right|
=
\frac{
\left|
\left(1-t\right)\left(a'-a\right)
+
t\left(b'-b\right)
+
\delta
\right|
}{
r_u
}
\le
\frac{
2\varepsilon
}{
r_u
}.
\end{equation*}
Interchanging~\(u\) and~\(v\) gives the same bound with~\(r_v\), and taking the smaller of the two bounds proves~\eqref{eq:app_norm_lipschitz}.
\end{proof}

Applying Lemma~\ref{lem:normalization_stability} to~\(\widehat S_{\mathrm u}\) and~\(S\), and then using~\eqref{eq:app_spectrum_error}, gives
\begin{equation}
\left\|
\mathcal N\left(
\widehat S_{\mathrm u}
\right)
-
\mathcal N\left(
S
\right)
\right\|_\infty
\le
\frac{
2
}{
\max\left\{
r_{\widehat S_{\mathrm u}},r_S
\right\}
}
\sqrt{
\left\|
\widehat P-P
\right\|_\infty
}.
\label{eq:app_normalized_spectrum_error}
\end{equation}
Thus, the power domain bound transfers directly to the normalized magnitude spectrum whenever both spectra have nonzero range.

\textbf{Scope.}
The result is a deterministic worst case stability bound and does not assume that~\(\varepsilon_{\mathrm{phys}}\) is small.
It characterizes how specified propagation parameter errors are converted into spectrum error, rather than how accurately the learned model estimates those parameters.
For a planar array,~\(d_{\parallel}\) reflects the intrinsic response ambiguity between directions with identical in plane direction cosines.
The result therefore separates scene-dependent propagation inference from the fixed propagation-to-spectrum sensitivity of the analytic renderer.


\section{Implementation Details}
\label{app:implementation}

This appendix describes the training objectives and hyperparameters~(\S\ref{app:implementation:training}), 
model architecture~(\S\ref{app:implementation:model}), 
optional power refinement~(\S\ref{app:implementation:power_adaptation}), 
receiver array calibration~(\S\ref{app:implementation:array_calibration}),
and dataset~(\S\ref{app:implementation:data}).

\subsection{Training Objectives and Hyperparameters}
\label{app:implementation:training}

\textbf{Training Protocol.}
Training proceeds in two stages.
Stage~I pretrains the spectrum encoder~\(E_\phi\) and reference contextualizer~\(T_\psi\), after which these modules are frozen.
Stage~II trains the reference-to-anchor projection, propagation estimation, query conditioning, and uncertainty modules.
Unless stated otherwise, each training instance uses~\(M=32\) references, with the queries excluded from the reference set.
Unless stated otherwise, the canonical main-paper checkpoints use random seed~\(0\).
Training-randomness estimates repeat these checkpoints with seeds~\(\{0,1,2\}\) on all~\(12\) folds, and the coordinate and pretraining ablations also use three seeds per condition; these exceptions are identified with their results.

\textbf{Stage I: Reference Representation Pretraining.}
We pretrain~\(E_\phi\) and~\(T_\psi\) using supervised contrastive learning over scene identity.
Reference sets from the same scene are treated as positives, while those from different scenes are treated as negatives.
Each training step draws~\(B_{\mathrm s}=4\) reference sets from every training scene.
Each set~\(b\) is encoded by~\(E_\phi\) and~\(T_\psi\), mean-pooled over its~\(M\) contextualized tokens, passed through a projection head, and~\(\ell_2\)-normalized to obtain~\(\mathbf z_b\).
Let~\(y_b\) denote the scene label of set~\(b\), let~\(\mathcal P\left(b\right)=\left\{b'\neq b:y_{b'}=y_b\right\}\) denote its positive sets, and let~\(B\) be the total number of reference sets in the training step.
The supervised contrastive objective is
\begin{equation}
\mathcal L_{\mathrm{sup}}
=
-\frac{1}{B}
\sum_{b=1}^{B}
\frac{1}{\left|\mathcal P\left(b\right)\right|}
\sum_{b'\in\mathcal P\left(b\right)}
\log
\frac{
\exp\left(
\mathbf z_b^{\top}\mathbf z_{b'} / \tau_{\mathrm c}
\right)
}{
\sum_{b''\neq b}
\exp\left(
\mathbf z_b^{\top}\mathbf z_{b''} / \tau_{\mathrm c}
\right)
}.
\label{eq:app_supcon}
\end{equation}
We use the contrastive temperature~\(\tau_{\mathrm c}=0.1\).
Stage~I is trained for~\(1500\) steps using Adam with learning rate~\(3\times10^{-4}\).

\textbf{Stage II: Propagation Field Training.}
We freeze the Stage~I modules and train the remaining model using query spectra outside the reference set.
The full objective is
\begin{equation}
\mathcal L
=
\mathcal L_{\mathrm{rec}}
+
\lambda_{\mathrm{cu}}\mathcal L_{\mathrm{cu}}
+
\lambda_{\mathrm{mar}}\mathcal L_{\mathrm{mar}}
+
\lambda_{\mathrm{unc}}\mathcal L_{\mathrm{unc}}.
\label{eq:app_total_loss}
\end{equation}
Here,~\(\mathcal L_{\mathrm{rec}}\) is the spectrum reconstruction loss, \(\mathcal L_{\mathrm{cu}}\) is the content-use loss, \(\mathcal L_{\mathrm{mar}}\) is the reference-separation margin loss, and~\(\mathcal L_{\mathrm{unc}}\) is the uncertainty loss.
The coefficients~\(\lambda_{\mathrm{cu}}\), \(\lambda_{\mathrm{mar}}\), and~\(\lambda_{\mathrm{unc}}\) weight the corresponding auxiliary objectives.

\textit{1) Reconstruction Loss~\(\mathcal L_{\mathrm{rec}}\).}
The reconstruction objective combines amplitude-weighted pointwise error, summed-amplitude consistency, structural similarity, and normalized energy error:
\begin{equation}
\mathcal L_{\mathrm{rec}}
=
\mathcal L_{\mathrm{amp}}
+
\lambda_{\mathrm e}\mathcal L_{\mathrm{energy}}
+
\lambda_{\mathrm s}\mathcal L_{\mathrm{SSIM}}
+
\lambda_{\mathrm n}\mathcal L_{\mathrm{norm}}.
\label{eq:app_reconstruction_loss}
\end{equation}
All four terms compare the predicted spectrum~\(\widehat S\) and target spectrum~\(S\) on the~\(H\times W\) angular grid~\(\mathcal G\), using the per-measurement normalization described in~\S\ref{app:implementation:data}.
For the amplitude weighting, we use~\(\gamma_{\mathrm a}=0.7\) and~\(\varepsilon_{\mathrm a}=0.02\).
The reconstruction-loss weights are~\(\lambda_{\mathrm e}=0.2\), \(\lambda_{\mathrm s}=0.6\), and~\(\lambda_{\mathrm n}=0.5\).

\textit{2) Amplitude-Weighted Loss~\(\mathcal L_{\mathrm{amp}}\).}
The amplitude-weighted pointwise loss emphasizes high-response regions of the target spectrum:
\begin{equation}
\mathcal L_{\mathrm{amp}}
=
\frac{
\sum_{\Omega\in\mathcal G}
w\left(\Omega\right)
\left|
\widehat S\left(\Omega\right)
-
S\left(\Omega\right)
\right|
}{
\sum_{\Omega\in\mathcal G}
w\left(\Omega\right)
},
\qquad
w\left(\Omega\right)
=
S\left(\Omega\right)^{\gamma_{\mathrm a}}
+
\varepsilon_{\mathrm a}.
\label{eq:app_loss_amp}
\end{equation}

\textit{3) Summed-Amplitude Loss~\(\mathcal L_{\mathrm{energy}}\).}
This loss matches the summed spectrum amplitude and normalizes the error by the number of angular cells:
\begin{equation}
\mathcal L_{\mathrm{energy}}
=
\frac{1}{HW}
\left|
\sum_{\Omega\in\mathcal G}
\widehat S\left(\Omega\right)
-
\sum_{\Omega\in\mathcal G}
S\left(\Omega\right)
\right|.
\label{eq:app_loss_energy}
\end{equation}

\textit{4) Structural Loss~\(\mathcal L_{\mathrm{SSIM}}\).}
The structural loss is one minus the mean~SSIM map, computed using a uniform~\(7\times7\) window and the standard constants~\(C_1=0.01^2\) and~\(C_2=0.03^2\):
\begin{equation}
\mathcal L_{\mathrm{SSIM}}
=
1
-
\frac{1}{HW}
\sum_{\Omega\in\mathcal G}
\frac{
\left(
2\mu_{\widehat S}\mu_S+C_1
\right)
\left(
2\sigma_{\widehat S S}+C_2
\right)
}{
\left(
\mu_{\widehat S}^2+\mu_S^2+C_1
\right)
\left(
\sigma_{\widehat S}^2+\sigma_S^2+C_2
\right)
}.
\label{eq:app_loss_ssim}
\end{equation}
Here,~\(\mu_{\widehat S}\) and~\(\mu_S\) denote the local means, \(\sigma_{\widehat S}^2\) and~\(\sigma_S^2\) denote the local variances, and~\(\sigma_{\widehat S S}\) denotes the local covariance at~\(\Omega\).

\textit{5) Normalized Energy Loss~\(\mathcal L_{\mathrm{norm}}\).}
The normalized energy loss measures squared reconstruction error relative to the target energy and corresponds to the~NMSE metric:
\begin{equation}
\mathcal L_{\mathrm{norm}}
=
\frac{
\sum_{\Omega\in\mathcal G}
\left(
\widehat S\left(\Omega\right)
-
S\left(\Omega\right)
\right)^2
}{
\sum_{\Omega\in\mathcal G}
S\left(\Omega\right)^2
+
\varepsilon
},
\qquad
\varepsilon=10^{-8}.
\label{eq:app_loss_norm}
\end{equation}

\textit{6) Reference-Use Regularization~\(\mathcal L_{\mathrm{cu}},\mathcal L_{\mathrm{mar}}\).}
To encourage the conditioned field to use the target-scene references, we introduce a content-use loss and a reference-separation margin loss.
The content-use loss reconstructs the spectra contained in the reference set:
\begin{equation}
\mathcal L_{\mathrm{cu}}
=
\frac{1}{M}
\sum_{i=1}^{M}
\mathcal L_{\mathrm{rec}}
\left(
\widehat S_{\pi_i},
S_{\pi_i}
\right).
\label{eq:app_content_use}
\end{equation}
We additionally sample a mismatched reference set~\(\mathcal C'\) from another scene and encourage the matched reference set~\(\mathcal C\) to produce a lower query reconstruction loss:
\begin{equation}
\mathcal L_{\mathrm{mar}}
=
\left[
\mathcal L_{\mathrm{rec}}
\left(
\widehat S_{\pi_\star}^{\mathcal C},
S_{\pi_\star}
\right)
-
\mathcal L_{\mathrm{rec}}
\left(
\widehat S_{\pi_\star}^{\mathcal C'},
S_{\pi_\star}
\right)
+
\delta_{\mathrm m}
\right]_+.
\label{eq:app_reference_margin}
\end{equation}
We use~\(\lambda_{\mathrm{cu}}=0.5\), \(\lambda_{\mathrm{mar}}=1.0\), and~\(\delta_{\mathrm m}=0.02\).
The content-use loss is introduced with a~\(400\)-step warm-in.
Both reference-use losses are enabled only for cross-scene training and are disabled for the per-scene setting.

\textit{7) Uncertainty Loss~\(\mathcal L_{\mathrm{unc}}\).}
The uncertainty branch is trained using a heteroscedastic Gaussian negative log-likelihood based on the analytically propagated spectrum variance.
Gradients from this objective are detached from the mean-prediction branch.
Let~\(\sigma^2\left(\Omega\right)\) denote the per-cell predictive variance of the normalized spectrum.
The uncertainty objective is
\begin{equation}
\mathcal L_{\mathrm{unc}}
=
\frac{1}{2HW}
\sum_{\Omega\in\mathcal G}
\left[
\frac{
\left(
S\left(\Omega\right)
-
\operatorname{sg}\!\left[
\widehat S\left(\Omega\right)
\right]
\right)^2
}{
\sigma^2\left(\Omega\right)
+
\varepsilon_{\mathrm u}
}
+
\log
\left(
\sigma^2\left(\Omega\right)
+
\varepsilon_{\mathrm u}
\right)
\right].
\label{eq:app_uncertainty_loss}
\end{equation}
Here,~\(\operatorname{sg}\left[\cdot\right]\) denotes the stop-gradient operator.
The calibrated predictive variance is
\begin{equation}
\sigma^2\left(\Omega\right)
=
\tau_{\mathrm u}^2
\sigma_{\mathrm{phys}}^2\left(\Omega\right)
+
\varepsilon_0,
\label{eq:app_uncertainty_variance}
\end{equation}
where~\(\sigma_{\mathrm{phys}}^2\left(\Omega\right)\) is the analytically propagated variance from~\Eqref{eq:spectrum_uncertainty}, expressed in the units of the normalized spectrum.
For unconstrained optimization, we parameterize the global uncertainty temperature and variance floor as~\(\tau_{\mathrm u}=e^{\vartheta_\tau}\) and~\(\varepsilon_0=\operatorname{softplus}\left(\vartheta_\varepsilon\right)\), with~\(\vartheta_\tau\) and~\(\vartheta_\varepsilon\) initialized to~\(0\) and~\(-4.6\), respectively.
We additionally use a fixed numerical floor~\(\varepsilon_{\mathrm u}=10^{-4}\).
The uncertainty objective updates only the concentration head that produces the component variances and the global calibration parameters~\(\tau_{\mathrm u}\) and~\(\varepsilon_0\) through their unconstrained parameters~\(\vartheta_\tau\) and~\(\vartheta_\varepsilon\).
The mean prediction, canonical anchor field, and analytic array response enter this objective with stopped gradients.
We use~\(\lambda_{\mathrm{unc}}=1.0\).


\subsection{Model Architecture}
\label{app:implementation:model}

We use the same architecture across all reported experiments unless stated otherwise.
\autoref{tab:architecture} summarizes the main architectural configuration.
The implementation follows the cross-scene propagation field in~\S\ref{sec:cond} and the analytic array physics in~\S\ref{sec:decoder}.

\begin{table}[t]
\centering
\small
\caption{\textit{Model architecture of \ourSystem used in reported experiments.}}
\label{tab:architecture}
\setlength{\tabcolsep}{5pt}
\begin{tabular}{lll}
\toprule
Module & Setting & Value \\
\midrule
Canonical anchor field
& Anchor lattice
& \(20\times20\times10=4{,}000\) \\
& Canonical extent
& \(\left[-1,1\right]^3\) \\
& Physical extent
& \(\left[-5,5\right]\times\left[-4,4\right]\times\left[0,3\right]\,\mathrm m\) \\
& Backbone width
& \(512\) \\
& Anchor outputs
& \(\Delta\mathbf x_n,\alpha_n,\mathbf e_n^0\) \\
\midrule
Reference encoder
& Spectrum CNN
& Conv~\(32\)-\(64\)-\(128\)-\(128\) \\
& Pooling
& AdaptiveAvgPool8 \\
& Contextualizer
& Transformer, dim~\(256\) \\
& Attention
& \(4\) heads, \(2\) layers \\
& FFN width
& \(512\) \\
& Positional encoding
& none \\
\midrule
Anchor conditioning
& Cross-attention
& dim~\(512\), \(4\) heads \\
& Bearing-aligned CNN
& \(32\) channels, angular grid preserved \\
\midrule
Query conditioning
& Model width
& \(64\) \\
& Fourier frequencies
& \(6\) \\
& Modulation
& affine FiLM \\
\midrule
Uncertainty
& Parameterization
& \(v_n=m_n^2/\kappa_n\) \\
& Calibration
& global~\(\tau_{\mathrm u}\), variance floor~\(\varepsilon_0\) \\
\midrule
Renderer
& Receiver array
& \(4\times4\) at~\(920\,\mathrm{MHz}\) \\
& Rendering
& analytic Bartlett with explicit LOS \\
& LOS power
& global, \(\operatorname{softplus}\left(\cdot\right)^2\) \\
& Uncertainty
& analytic moment propagation \\
\midrule
Model size
& Total
& \(12.7\)M \\
& Reference encoder
& \(3.8\)M \\
& Field and conditioning
& \(8.9\)M \\
\bottomrule
\end{tabular}
\end{table}

\textbf{Reference Encoder and Contextualizer.}
Each reference spectrum is processed independently by a four-stage~2D CNN with channel widths~\(32\), \(64\), \(128\), and~\(128\), followed by adaptive average pooling.
The resulting reference token is contextualized by a two-layer Transformer with embedding dimension~\(256\), four attention heads, and feed-forward width~\(512\).
No positional encoding is applied to the reference sequence, so the contextualizer remains permutation equivariant over the unordered reference set.
In parallel, a shallow bearing-aligned CNN with~\(32\) channels preserves the angular grid and provides direction-specific reference features.

\textbf{Canonical Anchor Field.}
The shared field contains~\(K=20\times20\times10=4{,}000\) anchors over the canonical extent~\(\left[-1,1\right]^3\), corresponding to the normalized physical extent~\(\left[-5,5\right]\times\left[-4,4\right]\times\left[0,3\right]\,\mathrm m\).
Each anchor uses a~\(512\)-dimensional backbone representation and cross-attends to the contextualized reference tokens using four attention heads.
The resulting global reference-conditioned feature is fused with the bearing-aligned feature before anchor decoding.
The anchor decoder predicts the position refinement~\(\Delta\mathbf x_n\), occupancy parameter~\(\alpha_n\), and base emission~\(\mathbf e_n^0\).
The position refinement is constrained within half of one anchor cell as described in~\S\ref{sec:cond}.

\textbf{Query and Uncertainty Heads.}
Query and anchor geometry in the receiver-array frame are encoded using six Fourier frequencies.
A width-64 network produces affine FiLM modulation of the base anchor emission, yielding the query-conditioned emission used to compute the mean component power~\(m_n\).
A separate uncertainty head predicts a positive concentration~\(\kappa_n\), giving component-power variance~\(v_n=m_n^2/\kappa_n\).
A learned global temperature~\(\tau_u\) and variance floor~\(\epsilon_0\) calibrate the propagated component-power variance.
Uncertainty gradients are detached from the mean branch, leaving the synthesized mean spectrum unchanged.

\textbf{Analytic Renderer.}
The learned model outputs the propagation parameters~\(\left\{\left(\Omega_n,m_n,v_n\right)\right\}_{n=1}^{K}\), which are rendered through the analytic receiver-array model in~\S\ref{sec:decoder}.
For the simulated arrays, the receiver elements are isotropic and the element power pattern is~\(g\equiv1\).
The explicit LOS power~\(m_{\mathrm{los}}\) is parameterized as a squared softplus and shared across scenes and queries.
Because the targets are normalized independently, this scalar represents a normalized direct-path prior rather than absolute received power.
The renderer constructs the receiver covariance, evaluates the Bartlett response, and propagates component-power uncertainty analytically through the same array response.

\textbf{Implementation Compatibility.}
The released checkpoint format inherits a degree-\(9\) angular emission head and scale and rotation outputs from the Gaussian-field implementation on which the code was built.
CoRF reads only the zeroth-order complex emission coefficient; the other emission coefficients and the scale and rotation outputs are disconnected from the covariance, spectrum, uncertainty, and loss.
They are retained solely for checkpoint compatibility and can be pruned without changing any prediction.
The reported checkpoint parameter totals conservatively include this dormant head, while the mathematical model and analyses use only the active variables defined in~\S\ref{sec:cond}.

\subsection{Target-Scene Power Refinement}
\label{app:implementation:power_adaptation}

Power refinement is an optional target-scene optimization that provides a lightweight correction while keeping the pretrained propagation model fixed.
For a target scene, we freeze the spectrum encoder, reference contextualizer, canonical anchor field, propagation estimation modules, and analytic array model.
Only the component powers are refined through lightweight scene-specific and query-dependent residual corrections.
Refinement uses only the same~\(M\) target-scene references already used to condition the canonical anchor field and requires no additional measurements, but it is distinct from feed-forward conditioning.

Although the cross-scene model transfers the propagation structure to an unseen scene, the predicted component powers can still exhibit scene-specific mismatch because attenuation and reflection strengths vary across environments.
Rather than fitting a new scene-specific propagation field, we use the available target-scene references to apply a lightweight correction to~\(\left\{m_n\right\}_{n=1}^{K}\) while preserving the learned anchor geometry and arrival directions.
After this one-time refinement, the corrected model is fixed and reused for all queries in the target scene.

\textbf{Residual Power Correction.}
For each target scene, we introduce a lightweight residual model that corrects the pretrained component powers while leaving the propagation geometry fixed.
For anchor~\(n\) and query configuration~\(\pi_\star\), the residual correction is
\begin{equation}
\delta_n\left(\pi_\star\right)
=
b_n^{\mathrm{PR}}
+
r_\zeta\left(
\Gamma_{\mathrm F}\left(\pi_\star\right)
\right)_n,
\label{eq:app_power_residual}
\end{equation}
where~\(b_n^{\mathrm{PR}}\) is a learned per-anchor power bias and~\(r_\zeta\) is a residual MLP applied to a Fourier encoding~\(\Gamma_{\mathrm F}\left(\pi_\star\right)\) of the query position, array-local direction, and range.
The bias~\(b_n^{\mathrm{PR}}\) captures a scene-specific correction shared across all queries, whereas~\(r_\zeta\left(\Gamma_{\mathrm F}\left(\pi_\star\right)\right)_n\) captures an additional query-dependent correction.
Both~\(\left\{b_n^{\mathrm{PR}}\right\}_{n=1}^{K}\) and~\(r_\zeta\) are fitted once for the target scene using the same~\(M\) reference measurements used for scene conditioning.
We apply this residual in the pre-softplus domain so that the corrected component power remains nonnegative:
\begin{equation}
\widetilde m_n\left(\pi_\star\right)
=
\operatorname{softplus}\left(
\operatorname{softplus}^{-1}\left(
m_n\left(\pi_\star\right)
\right)
+
\delta_n\left(\pi_\star\right)
\right).
\label{eq:app_power_adaptation}
\end{equation}
The anchor biases and the output layer of~\(r_\zeta\) are initialized to zero, so~\(\widetilde m_n=m_n\) before refinement.
After fitting, the residual model is fixed and reused for all queries in the target scene.
Because~\Eqref{eq:app_power_adaptation} modifies only the non-LOS component powers, the canonical anchor geometry and arrival directions~\(\left\{\Omega_n\right\}_{n=1}^{K}\) remain unchanged.
In the per-scene setting, we additionally allow a scalar correction to the explicit LOS power, as described below.

\textbf{Residual-Map Architecture.}
The residual-map capacity depends on the amount of target-scene data available for refinement.
In the few-shot setting,~\(r_\zeta\) is a two-layer MLP with width~\(64\), applied to a two-octave Fourier encoding of the query geometry with~\(35\) input dimensions, for~\(0.266\)M parameters.
Together with the~\(4{,}000\) anchor biases, the full few-shot correction adapts~\(0.270\)M parameters per target scene.
When the full scene training split is available, we use a three-layer MLP with width~\(512\) and an eight-octave Fourier encoding with~\(119\) input dimensions, for approximately~\(2.6\)M parameters.
Before refinement, the pretrained model is evaluated once to cache the base component powers~\(\left\{m_n\left(\pi\right)\right\}_{n=1}^{K}\) for the refinement measurements.
The residual model then learns only the correction in~\Eqref{eq:app_power_residual}, while these pretrained powers remain fixed.

\textbf{Protocol.}
For each target scene, we first use the~\(M=32\) references to construct the scene-conditioned canonical anchor field with the pretrained model.
We then fit one residual model~\(\left\{\left\{b_n^{\mathrm{PR}}\right\}_{n=1}^{K},r_\zeta\right\}\) for that scene using the same references while keeping the pretrained model frozen.
After refinement, the fitted residual model is fixed and reused for all test queries in the scene without further optimization.
We optimize the residual map using the reconstruction objective~\(\mathcal L_{\mathrm{rec}}\) defined in~\S\ref{app:implementation:training}.
Adam is used with learning rate~\(10^{-3}\) for~\(r_\zeta\) and~\(5\times10^{-3}\) for the anchor biases~\(\left\{b_n^{\mathrm{PR}}\right\}_{n=1}^{K}\), together with a cosine learning-rate schedule, batch size~\(8\), and gradient clipping at~\(1.0\).

In the few-shot setting, the same~\(M=32\) target-scene references constitute the complete refinement budget.
All~\(32\) references are used to condition the canonical anchor field, while~\(24\) are used to fit the residual model and the remaining~\(8\) are reserved for step selection.
We optimize for up to~\(3000\) steps with weight decay~\(10^{-4}\), evaluate the selection references every~\(100\) steps, and retain the step with the highest mean~PSNR.
Thus, optional power refinement requires no target-scene measurements beyond the original~\(M=32\) references.

For the per-scene setting, the scene training split is used for fitting and the validation split for step selection.
We optimize for up to~\(16{,}000\) steps without weight decay and evaluate validation performance every~\(500\) steps.
In this setting, the residual map additionally includes a linear head on its hidden features that rescales the explicit LOS power by a bounded exponential factor initialized to one.
The step with the highest validation~PSNR is retained.

\subsection{Receiver Array Calibration}
\label{app:implementation:array_calibration}

Receiver array calibration adapts the analytic receiver response when the deployed array differs from its nominal specification.
Unlike power refinement, calibration leaves the predicted propagation parameters~\(\left\{\left(\Omega_n,m_n,v_n\right)\right\}_{n=1}^{K}\) unchanged and modifies only the receiver response used by the analytic renderer.
Such calibration is needed because errors in array orientation, element geometry, or element response can distort the mapping from the predicted propagation components to the observed spatial spectrum.
The spectrum encoder, reference contextualizer, canonical anchor field, and propagation estimation modules therefore remain frozen throughout calibration.
Calibration uses a separate set of~\(N_{\mathrm C}\) measurements and is performed once for each receiver, after which the calibrated response is reused for all queries collected with that receiver.

\textbf{Calibration Formulation.}
For a receiver array with nominal steering response~\(\mathbf a\left(\Omega\right)\), we introduce receiver-specific calibration parameters~\(\boldsymbol\nu_{\mathrm{arr}}\) and replace the nominal response with
\begin{equation}
\mathbf a\left(\Omega\right)
\longrightarrow
\mathbf a_{\boldsymbol\nu_{\mathrm{arr}}}\left(\Omega\right).
\label{eq:app_calibrated_array_response}
\end{equation}
We estimate~\(\boldsymbol\nu_{\mathrm{arr}}\) from the~\(N_{\mathrm C}\) calibration measurements while keeping the learned propagation field fixed.
The calibration measurements are disjoint from the~\(M\) references used for scene conditioning and from the test queries.

\textbf{Simulated Receiver Arrays.}
For a simulated receiver array, we construct the analytic receiver response from its nominal geometry and fit the three-parameter calibration vector
\begin{equation}
\boldsymbol\nu_{\mathrm{arr}}
=
\left(
\Delta\varphi_{\mathrm{arr}},
s_{\mathrm{arr},x},
s_{\mathrm{arr},y}
\right),
\label{eq:app_array_calibration}
\end{equation}
where~\(\Delta\varphi_{\mathrm{arr}}\) corrects the in-plane array orientation, and~\(s_{\mathrm{arr},x}\) and~\(s_{\mathrm{arr},y}\) correct the element spacing along the two array axes.
We estimate~\(\boldsymbol\nu_{\mathrm{arr}}\) from~\(N_{\mathrm C}\) calibration measurements while keeping the learned propagation field fixed.
After calibration, the corrected response~\(\mathbf a_{\boldsymbol\nu_{\mathrm{arr}}}\left(\Omega\right)\) is fixed and reused for all subsequent queries collected with that receiver.

\textbf{Real Measured Receiver.}
For the real measured receiver, we keep the learned propagation model frozen and estimate only receiver-specific calibration parameters from~\(N_{\mathrm C}\) calibration measurements.
Because a physical receiver can exhibit element-dependent gain, geometry, and directional-response mismatch, we model the calibrated response using a complex gain per element, an in-plane position offset per element, and an angle-dependent element response.

For receiver element~\(u\), the complex gain is
\begin{equation}
g_u
=
\exp\left(
\ell_u+j\phi_u
\right),
\label{eq:app_element_gain}
\end{equation}
where~\(\ell_u\) and~\(\phi_u\) parameterize the log-amplitude and phase, respectively.
The log-amplitude~\(\ell_u\) is constrained to~\(\left[-3,3\right]\), and the element gains are normalized to unit mean magnitude.
The position correction~\(\Delta\mathbf p_u\in\mathbb R^2\) is added to the nominal element position when computing the steering phase.

The angle-dependent element response is
\begin{equation}
c_u\left(\varphi,\theta\right)
=
\exp\left(
\boldsymbol\alpha_u^{\top}
\mathbf f\left(\varphi,\theta\right)
+
j\,
\boldsymbol\beta_u^{\top}
\mathbf f\left(\varphi,\theta\right)
\right),
\label{eq:app_element_pattern}
\end{equation}
where~\(\varphi\) and~\(\theta\) denote the array-local azimuth and elevation.
We use the angular basis
\begin{equation}
\mathbf f\left(\varphi,\theta\right)
=
\left[
1,
\left\{
\cos\left(k\varphi\right),
\sin\left(k\varphi\right)
\right\}_{k=1}^{K_{\mathrm{az}}},
\left\{
\cos\left(j\theta\right),
\sin\left(j\theta\right)
\right\}_{j=1}^{K_{\mathrm{el}}}
\right].
\label{eq:app_element_basis}
\end{equation}
We set~\(K_{\mathrm{az}}=12\) and~\(K_{\mathrm{el}}=4\), giving~\(33\) coefficients for the log-amplitude~\(\boldsymbol\alpha_u\) and~\(33\) coefficients for the phase~\(\boldsymbol\beta_u\) of each of the~\(16\) receiver elements.
Including complex gain and the two-dimensional position correction, the real-array model fits~\(70\) scalar parameters per element, or~\(1{,}120\) in total.
The resulting log-amplitude of the angle-dependent response is constrained to~\(\left[-4,4\right]\).

The calibrated steering response evaluates the steering phase at the corrected element positions and multiplies the response of element~\(u\) by~\(g_u c_u\left(\Omega\right)\).
The same calibrated response is applied to both the anchor directions and the explicit line-of-sight direction.
All calibration parameters are initialized to reproduce the nominal receiver response, with~\(g_u=1\), \(\Delta\mathbf p_u=\mathbf 0\), and~\(c_u\left(\Omega\right)=1\).
No additional regularization is applied beyond the parameter bounds described above.

\textbf{Protocol.}
We use~\(N_{\mathrm C}=64\) calibration measurements by default.
These measurements are separate from the~\(M=32\) references used to condition the target-scene field and from the test queries.

For the simulated arrays, the three calibration parameters in~\Eqref{eq:app_array_calibration} are fitted by coordinate search because rebuilding the analytic response for a new array geometry is not differentiable in our implementation.
We perform three coordinate-search sweeps over the in-plane rotation from~\(-12^\circ\) to~\(12^\circ\) and the two axis scalings from~\(0.86\) to~\(1.16\), using~\(13\) candidate values for each parameter.
Within each sweep, the parameters are optimized sequentially by retaining the candidate that maximizes the mean~SSIM over the~\(N_{\mathrm C}\) calibration measurements.

For the real receiver, we optimize the calibration parameters using the reconstruction objective~\(\mathcal L_{\mathrm{rec}}\) defined in~\S\ref{app:implementation:training}, while keeping the conditioned propagation parameters fixed.
We use Adam with learning rate~\(5\times10^{-3}\), a cosine learning-rate schedule, and gradient clipping at~\(1.0\), and optimize for up to~\(8000\) steps.
We reserve one quarter of the~\(N_{\mathrm C}\) measurements for step selection and use the remaining three quarters for fitting.
With the default~\(N_{\mathrm C}=64\), this corresponds to~\(48\) fitting measurements and~\(16\) selection measurements.
The calibration checkpoint with the highest mean~PSNR on the selection measurements is retained.

After fitting, the receiver calibration is fixed and applied unchanged to all test queries.
The test queries are not used for calibration or model selection.
The default calibration-only protocol above uses disjoint conditioning and calibration sets, for a total of~\(32+64=96\) target-domain spectra.
For the joint measurement-matched real-data result in~\S\ref{app:real}, the scene conditioner uses~\(M=32\) references, and a disjoint set of~\(N_{\mathrm C}=64\) spectra calibrates the element-wise receiver response.
The power-refinement map defined in~\S\ref{app:implementation:power_adaptation} reuses both sets without additional measurements; a held-out subset of the same~\(96\)-spectrum budget is used for checkpoint selection.
The pretrained propagation model remains frozen, and the test split remains untouched.
This combined real-data deployment protocol is denoted by~\ourSystemPRCAL.
The real experiment evaluates one initialization and one measured receiver, and therefore does not characterize sensitivity to calibration initialization, measurement noise, or unmodeled mutual coupling.

\subsection{Dataset and Preprocessing}
\label{app:implementation:data}

\textbf{Simulated Dataset.}
We generate~\(35\) simulated scenes using the~Sionna RT differentiable ray tracer~\citep{hoydis2023sionnart}.
The corpus contains seven scene categories: \texttt{conference room}, \texttt{classroom}, \texttt{bedroom}, \texttt{corridor}, \texttt{open office}, \texttt{laboratory}, and \texttt{lounge}, with five variants per category.
We refer to these scenes as~S1--S35 in category order: S1--S5 are conference rooms, S6--S10 classrooms, S11--S15 bedrooms, S16--S20 corridors, S21--S25 open offices, S26--S30 laboratories, and S31--S35 lounges.
The room dimensions are~\(5\times4\times2.8\,\mathrm m\) for bedrooms,~\(6.5\times5.5\times3.2\,\mathrm m\) for lounges,~\(7\times7\times3\,\mathrm m\) for laboratories,~\(8\times6\times3\,\mathrm m\) for conference rooms,~\(10\times7\times3\,\mathrm m\) for classrooms,~\(12\times2.2\times3\,\mathrm m\) for corridors, and~\(13\times9\times3\,\mathrm m\) for open offices.
Surfaces use fixed permittivity and conductivity values at~\(920\,\mathrm{MHz}\) adapted from the~ITU-R P.2040 building-material models, with concrete and brick from revision~3~\citep{itur2023p2040} and plasterboard and glass from revision~1~\citep{itur2015p2040}.
The five variants of each category differ in furniture layout and clutter.
Ray tracing uses up to five interactions per path with specular reflection and refraction.
Each scene contains~\(4.1\)k--\(5.9\)k transmitter locations, with approximately~\(4.9\)k on average, distributed throughout the room volume.
The per-scene median~SNR is sampled uniformly from~\(8\) to~\(18\,\mathrm{dB}\), with a realized range of~\(8.2\) to~\(17.9\,\mathrm{dB}\) and a mean of~\(12.6\,\mathrm{dB}\).
Complex Gaussian noise is added to the per-element channel response using the scene-level median signal to determine the noise level.
All simulated scenes share the same generation pipeline, carrier frequency, material-value table, and interaction model.
The category-exclusion protocol therefore measures semantic scene transfer inside this distribution rather than transfer across simulators or physical propagation regimes.

\textbf{Real Measured Dataset.}
We use the publicly released~NeRF$^2$ RFID measurements~\citep{zhao2023nerf2}, denoted~S36.
The dataset contains~\(6{,}123\) measurements collected with a physical~\(4\times4\) receiver array operating at~\(915\,\mathrm{MHz}\).
Following the released processing pipeline, we construct Bartlett spectra using the same angular-grid convention as for the simulated data.
The real measurements are not used to train the cross-scene model and serve only for simulation-to-real evaluation.

\textbf{Spectrum Construction and Scene Normalization.}
The simulated receiver is a fixed~\(4\times4\) planar array with~\(16\) elements and~\(0.16\,\mathrm m\) spacing.
At~\(920\,\mathrm{MHz}\), the wavelength is~\(\lambda=0.326\,\mathrm m\), corresponding to an element spacing of approximately~\(0.49\lambda\).
For each transmitter, we construct the single-snapshot, phase-only Bartlett magnitude
\begin{equation}
S\left(\Omega\right)
=
\left|
\frac{1}{N}
\sum_{u=1}^{N}
\exp\left(
j\left(
\varphi_u^{\mathrm{st}}\left(\Omega\right)
+
\angle H_u
\right)
\right)
\right|,
\label{eq:app_measurement_spectrum}
\end{equation}
where~\(H_u\) is the channel response at receiver element~\(u\), and~\(\varphi_u^{\mathrm{st}}\left(\Omega\right)\) is its steering phase.
Following~NeRF$^2$~\citep{zhao2023nerf2}, element magnitudes are discarded.
Each spectrum is sampled on a~\(90\times360\) angular grid covering elevations from~\(1^\circ\) to~\(90^\circ\) and azimuths from~\(0^\circ\) to~\(359^\circ\), and is min-max normalized per measurement.
The target spectrum is constructed from a coherent single snapshot, whereas the analytic renderer in~\S\ref{sec:decoder} models propagation components through an incoherent covariance sum.
We treat this difference as a modeling approximation.

Each scene contains the receiver position and orientation~\(\mathbf Q\), together with all transmitter positions.
These coordinates define the query configurations and the canonical normalization, but the reference encoder receives only measured spectra and not transmitter positions associated with individual references.
Before constructing the canonical anchor field, we isotropically rescale scene geometry as
\begin{equation}
\widetilde{\mathbf x}
=
c_s\mathbf x,
\qquad
c_s
=
0.9
\min\left(
\frac{5}{\max_j\left|x_j\right|},
\frac{4}{\max_j\left|y_j\right|},
\frac{3}{\max_j z_j}
\right),
\label{eq:app_scene_scale}
\end{equation}
where~\(j\) ranges over all transmitter and receiver positions in the scene.
This maps each scene into the canonical extent~\(\left[-5,5\right]\times\left[-4,4\right]\times\left[0,3\right]\,\mathrm m\) with a~\(10\%\) margin.
Because the same isotropic factor is applied to all coordinates, angular bearings and receiver orientation are preserved.

\textbf{Data Splits.}
We use the same per-scene~\(70/10/20\) train, validation, and test split for all methods.
For evaluation, reference sets are sampled from the training split and queries from the test split; the validation split remains disjoint from both.
A simulated scene therefore contains approximately~\(3.4\)k training measurements and~\(1.0\)k test measurements, while the real dataset contains~\(4{,}286/612/1{,}225\) train, validation, and test measurements.
The default reference budget is~\(M=32\).

For generalization to unseen scene variants, we use five folds, each excluding one variant from every scene category.
For generalization to unseen scene categories, we use seven folds, each excluding all five variants of one category.
For simulation-to-real evaluation, the cross-scene model is trained on all~\(35\) simulated scenes and evaluated on the real measurements.
All methods use identical splits and reference sets where applicable.

\section{Additional Experimental Results and Analyses}
\label{app:results}

\textbf{Roadmap.}
The first six subsections provide qualitative results~(\S\ref{app:qualitative}), scene-level comparisons~(\S\ref{app:breakdown}), reliability analysis~(\S\ref{app:reliability_analysis}), receiver-array adaptation~(\S\ref{app:array_error}), a localization case study~(\S\ref{app:localization}), and real-data transfer~(\S\ref{app:real}).
The remaining subsections examine reference evidence~(\S\ref{app:reference_analysis}), representation and renderer choices~(\S\ref{app:representation_choices}), deployment efficiency~(\S\ref{app:deployment_efficiency}), and physical distribution shift~(\S\ref{app:physical_stress}).
The main-paper synthesis tables and their scene-level breakdowns report~\ourSystemPR; other analyses specify the evaluated variant.

\textbf{Shared Protocol.}
Unless stated otherwise, simulated experiments use the same~\(35\) scenes, \(4\times4\) array, \(920\,\mathrm{MHz}\) carrier, and \(90\times360\) normalized-amplitude spectra as the main experiments.
Reported cross-fold deviations reflect variation among held-out scenes.
Across three seeds on all~\(12\) folds, the within-fold PSNR standard deviation averages~\(0.21\,\mathrm{dB}\) and reaches at most~\(0.73\,\mathrm{dB}\).

\begin{figure}[p]
\centering
\subfigure[Per-scene]{\includegraphics[width=\linewidth]{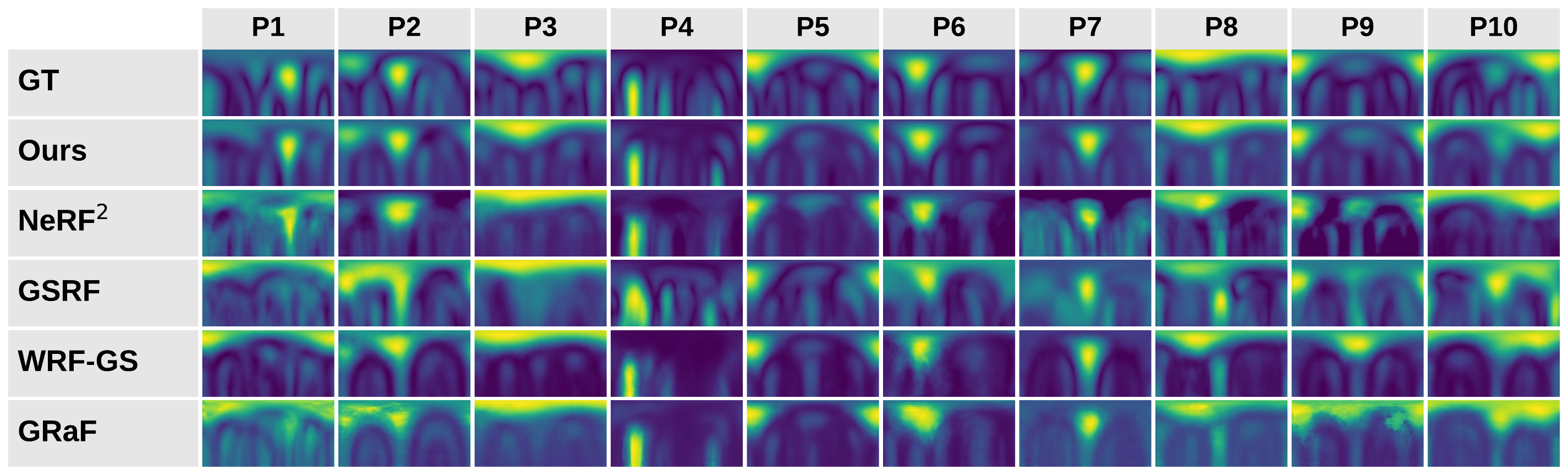}\label{fig:qual_panels_perscene}}\\[4pt]
\subfigure[Unseen scene variants]{\includegraphics[width=\linewidth]{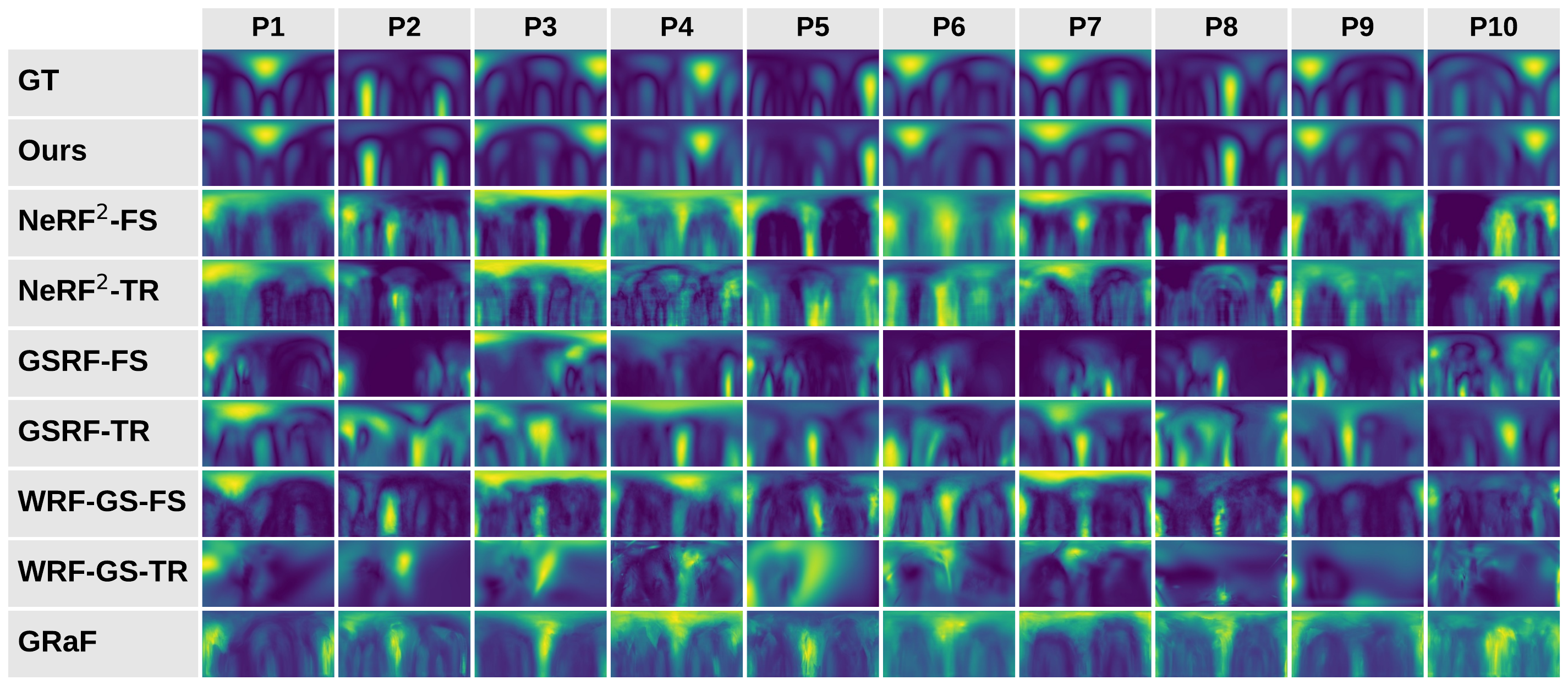}\label{fig:qual_panels_variant}}\\[4pt]
\subfigure[Unseen scene categories]{\includegraphics[width=\linewidth]{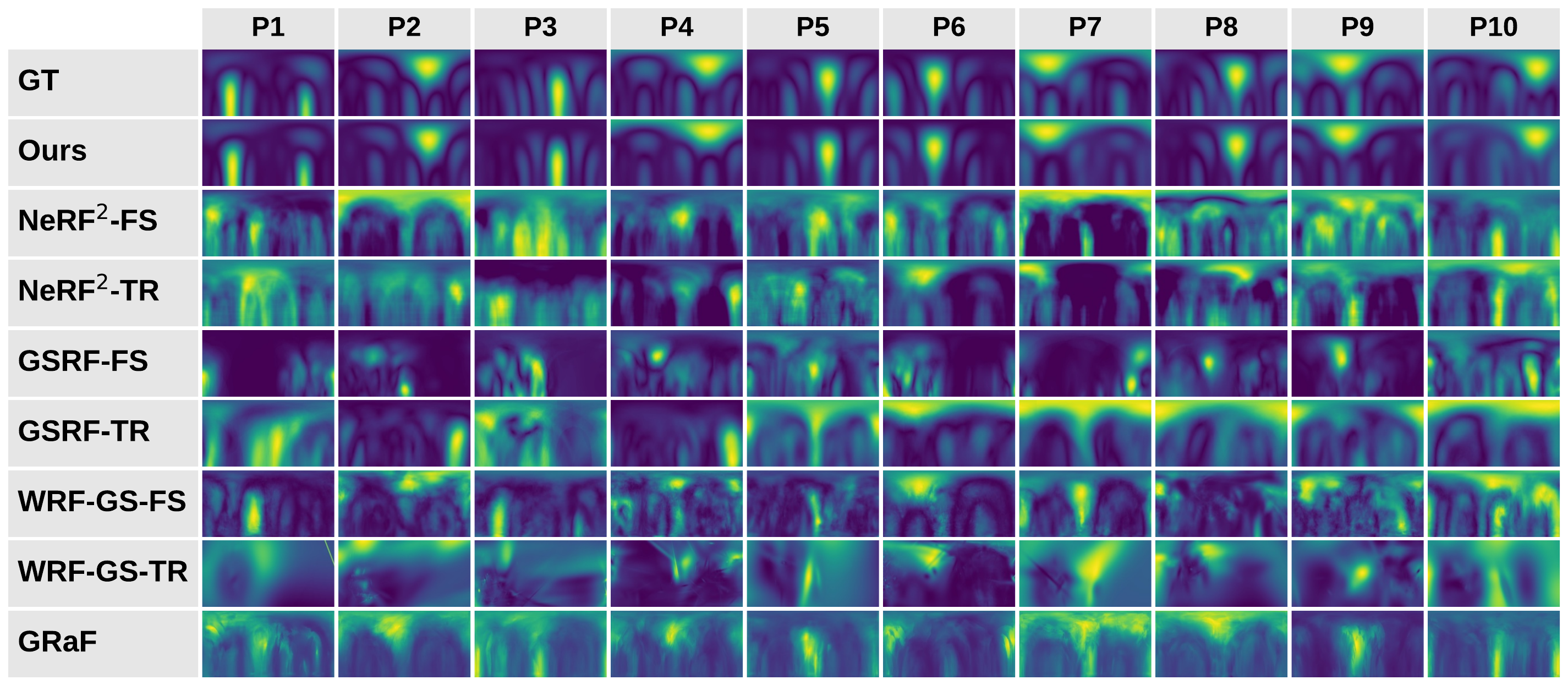}\label{fig:qual_panels_category}}
\caption{\textit{Predicted spatial spectra across the three evaluation settings.}
Rows compare ground truth~(GT), \ourSystemPR~(Ours), and the corresponding baselines.}
\label{fig:qual_panels}
\end{figure}

\subsection{Qualitative Results}
\label{app:qualitative}

This subsection presents qualitative spectra and reflection-dominated failures.

\autoref{fig:qual_panels} compares predictions for per-scene fitting, unseen scene variants, and unseen scene categories.
Each panel shows ten selected test locations~(P1--P10), with at most one per scene.
Rows show the ground-truth~(GT) spectrum, \ourSystemPR~(``Ours''), and the corresponding baselines.
The figures use the saved predictions underlying the quantitative results.
For display, each spectrum is normalized by its maximum; azimuth is on the horizontal axis and elevation on the vertical axis.

\subsubsection{Per-Scene Results}
\label{app:qualitative:perscene}

Figure~\ref{fig:qual_panels_perscene} shows per-scene synthesis, with each method using the same scene-specific training and test splits; location~P1 comes from the measured scene~S36.
The baselines generally recover the dominant lobe but often blur or misplace weaker multipath structure, and some produce spurious high-response bands near the upper elevation boundary.
Across the displayed locations, their~SSIM ranges from~\(0.51\) to~\(0.58\), compared with~\(0.74\) to~\(0.77\) for~\ourSystemPR.

\subsubsection{Unseen Scene Variants}
\label{app:qualitative:variants}

As illustrated in~Figure~\ref{fig:qual_panels_variant}, the few-shot baselines often misplace lobes or collapse multipath structure into a diffuse region with only~\(M=32\) target-scene references.
Transfer from another scene in the same category does not reliably correct these errors.
GRaF produces low-contrast, vertically streaked spectra that sometimes approximately localize the strongest lobe.
In contrast,~\ourSystemPR better preserves dominant and secondary lobe directions; its remaining errors mainly concern their relative strengths and the weakest components.
Across the displayed locations, the strongest baseline achieves~\(0.29\)--\(0.38\)~SSIM.

\subsubsection{Unseen Scene Categories}
\label{app:qualitative:categories}

The qualitative trend persists when an entire scene category is excluded from training, as illustrated in~Figure~\ref{fig:qual_panels_category}.
The baselines, including models transferred from other categories, often misplace or blur multipath lobes and achieve only~\(0.29\)--\(0.37\)~SSIM across the displayed locations.
In contrast,~\ourSystemPR continues to preserve the dominant lobe directions; its remaining errors mainly concern their relative strengths.

\subsubsection{Reflection-Dominated Failure Cases}
\label{app:qualitative:fails}

This diagnostic uses feed-forward~\ourSystem on the first unseen-variant fold; the preceding panels show~\ourSystemPR.
Across~\(6{,}436\) test queries, mean~SSIM is~\(0.804\), with~\(95.0\%\) above~\(0.5\) and only~\(0.3\%\) below~\(0.3\).
We examine the worst~\(5\%\): \(322\) queries with~SSIM~\(\leq0.50\).
\autoref{fig:failure_gallery} shows the ten lowest-SSIM predictions.
All ten targets have multiple narrow lobes and a strongest arrival away from the direct path, while~\ourSystem generally predicts one or two broad lobes.
The strongest baseline also falls from~\(0.60\)~SSIM overall to~\(0.46\) on the worst~\(5\%\).
On the ten displayed queries, it achieves~\(0.23\)--\(0.41\)~SSIM~\versus~\(0.16\)--\(0.26\) for~\ourSystem; neither method captures the full reflected structure.

The worst-\(5\%\) queries are predominantly reflection dominated.
A spectrum formed from the true line-of-sight direction alone has median~SSIM~\(0.32\) on this tail~\versus~\(0.82\) on the remaining queries.
The strongest ground-truth lobe lies a median~\(23.6^\circ\) from line of sight~\versus~\(2.1^\circ\), while the spectral energy within~\(20^\circ\) of line of sight falls from~\(62\%\) to~\(32\%\).
The diffuse floor rises from~\(0.13\) to~\(0.28\).
Across all queries, prediction~SSIM correlates strongly with line-of-sight explainability, with Spearman~\(\rho=0.96\).
The errors also have a consistent structure.
Ground-truth spectra in the worst~\(5\%\) contain a median of three lobes with azimuth width~\(26^\circ\), whereas~\ourSystem typically predicts one broader lobe of width~\(56^\circ\).
Nearest-reference distance is comparable to that of a random control set: \(0.76\)~\versus~\(0.85\,\mathrm m\).
The best-matching reference is less similar for failure cases, with~SSIM~\(0.52\)~\versus~\(0.69\).
Thus, the difficulty is associated more with poorly matched reference spectra than with reference distance alone; narrow reflected arrivals tend to be smoothed into broader components.

\begin{figure}[t]
\centering
\includegraphics[width=\linewidth]{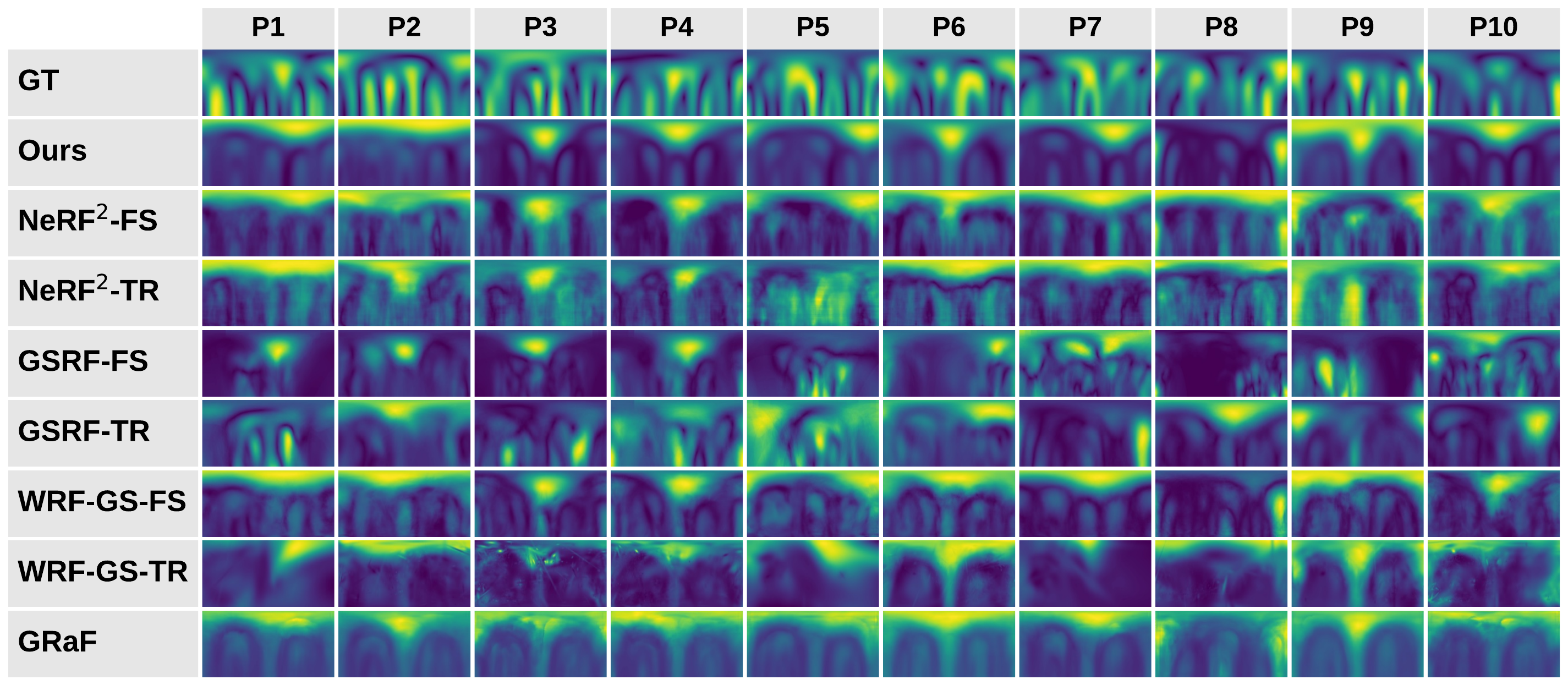}
\caption{\textit{Worst-case predictions.}
The ten lowest-SSIM predictions of feed-forward~\ourSystem in the first unseen-variant fold are shown with all baselines on the same queries.
}
\label{fig:failure_gallery}
\end{figure}

\subsection{Scene-Level Spatial Spectrum Synthesis Results}
\label{app:breakdown}

\textbf{Crosswalk to the Main Results.}
The per-scene averages in~\autoref{tab:e1perscene} reproduce~\autoref{tab:perscene} after rounding.
The unseen-variant and unseen-category breakdowns correspond to the~\ourSystemPR results in~\autoref{tab:seen} and~\autoref{tab:heldout}.
Those main tables aggregate over folds, whereas the breakdowns pool queries within scenes; unweighted averages of the scene rows may therefore differ.

\begin{table}[h!]
\centering\small
\caption{\textit{Per-scene results in the per-scene setting.}
Results are reported for all 36 scenes and all four metrics. S36 denotes the real measured dataset, and the averages correspond to \autoref{tab:perscene}.}
\label{tab:e1perscene}
\setlength{\tabcolsep}{2.5pt}
\resizebox{\linewidth}{!}{%
\begin{tabular}{lcccccccccccccccccccc}
\toprule
& \multicolumn{5}{c}{PSNR\(\uparrow\)} & \multicolumn{5}{c}{SSIM\(\uparrow\)} & \multicolumn{5}{c}{NMSE\(\downarrow\)} & \multicolumn{5}{c}{AoA err (\(^\circ\))\(\downarrow\)} \\
\cmidrule(lr){2-6}\cmidrule(lr){7-11}\cmidrule(lr){12-16}\cmidrule(lr){17-21}
Scene & NeRF\(^2\) & GSRF & WRF-GS & GRaF & \ourSystemPR & NeRF\(^2\) & GSRF & WRF-GS & GRaF & \ourSystemPR & NeRF\(^2\) & GSRF & WRF-GS & GRaF & \ourSystemPR & NeRF\(^2\) & GSRF & WRF-GS & GRaF & \ourSystemPR \\
\midrule
\multicolumn{21}{l}{\emph{Conference}} \\
\quad S1 & 27.13 & 23.97 & 26.84 & 27.06 & \textbf{29.17} & 0.847 & 0.763 & 0.832 & 0.816 & \textbf{0.891} & 0.030 & 0.071 & 0.034 & 0.030 & \textbf{0.021} & 3.10 & 5.89 & 3.29 & 2.47 & \textbf{2.00} \\
\quad S2 & 25.11 & 22.69 & 24.65 & 25.35 & \textbf{27.61} & 0.801 & 0.735 & 0.794 & 0.781 & \textbf{0.865} & 0.057 & 0.097 & 0.059 & 0.053 & \textbf{0.040} & 4.69 & 7.91 & 4.91 & 3.80 & \textbf{3.21} \\
\quad S3 & 19.46 & 19.12 & 20.06 & 19.55 & \textbf{22.03} & 0.644 & 0.642 & 0.638 & 0.612 & \textbf{0.742} & 0.152 & 0.170 & 0.150 & 0.151 & \textbf{0.104} & 11.46 & 11.30 & 11.18 & 9.12 & \textbf{7.25} \\
\quad S4 & 19.27 & 18.73 & 19.15 & 19.56 & \textbf{22.00} & 0.638 & 0.635 & 0.626 & 0.615 & \textbf{0.749} & 0.146 & 0.171 & 0.155 & 0.139 & \textbf{0.096} & 10.53 & 9.64 & 10.74 & 8.14 & \textbf{7.51} \\
\quad S5 & 18.06 & 18.16 & 19.95 & 19.59 & \textbf{21.52} & 0.590 & 0.612 & 0.666 & 0.626 & \textbf{0.731} & 0.219 & 0.224 & 0.162 & 0.158 & \textbf{0.130} & 15.53 & 14.01 & 12.85 & 11.85 & \textbf{11.07} \\
\multicolumn{21}{l}{\emph{Classroom}} \\
\quad S6 & 23.51 & 21.83 & 23.32 & 23.21 & \textbf{26.25} & 0.770 & 0.718 & 0.747 & 0.702 & \textbf{0.841} & 0.063 & 0.092 & 0.067 & 0.067 & \textbf{0.039} & 4.08 & 5.05 & 4.40 & 3.08 & \textbf{2.74} \\
\quad S7 & 29.27 & 23.86 & 29.70 & 29.50 & \textbf{31.83} & 0.875 & 0.749 & 0.877 & 0.855 & \textbf{0.917} & 0.023 & 0.088 & 0.022 & 0.021 & \textbf{0.015} & 3.15 & 7.01 & 2.70 & 2.13 & \textbf{1.77} \\
\quad S8 & 22.63 & 20.89 & 22.45 & 22.64 & \textbf{25.29} & 0.735 & 0.688 & 0.721 & 0.704 & \textbf{0.811} & 0.091 & 0.126 & 0.096 & 0.091 & \textbf{0.068} & 7.47 & 7.59 & 6.85 & 6.81 & \textbf{5.36} \\
\quad S9 & 24.83 & 21.96 & 25.71 & 24.90 & \textbf{27.97} & 0.780 & 0.702 & 0.801 & 0.766 & \textbf{0.854} & 0.090 & 0.133 & 0.069 & 0.073 & \textbf{0.050} & 7.87 & 11.04 & 6.31 & 5.95 & \textbf{4.51} \\
\quad S10 & 21.36 & 19.73 & 21.20 & 20.55 & \textbf{23.78} & 0.691 & 0.651 & 0.666 & 0.616 & \textbf{0.778} & 0.128 & 0.177 & 0.125 & 0.140 & \textbf{0.080} & 11.06 & 11.05 & 9.05 & 8.89 & \textbf{7.42} \\
\multicolumn{21}{l}{\emph{Bedroom}} \\
\quad S11 & 28.40 & 25.81 & 27.01 & 28.41 & \textbf{30.02} & 0.863 & 0.802 & 0.822 & 0.836 & \textbf{0.902} & 0.024 & 0.045 & 0.033 & 0.024 & \textbf{0.017} & 2.64 & 4.43 & 2.92 & 2.00 & \textbf{1.62} \\
\quad S12 & 25.47 & 23.56 & 24.73 & 24.86 & \textbf{27.87} & 0.813 & 0.753 & 0.780 & 0.749 & \textbf{0.868} & 0.043 & 0.076 & 0.050 & 0.050 & \textbf{0.029} & 3.54 & 4.88 & 3.53 & 2.58 & \textbf{2.18} \\
\quad S13 & 23.23 & 22.30 & 23.26 & 22.70 & \textbf{25.63} & 0.760 & 0.724 & 0.746 & 0.695 & \textbf{0.829} & 0.073 & 0.098 & 0.076 & 0.085 & \textbf{0.049} & 5.02 & 5.77 & 4.66 & 4.27 & \textbf{3.61} \\
\quad S14 & 22.86 & 21.84 & 22.50 & 22.71 & \textbf{24.64} & 0.749 & 0.715 & 0.733 & 0.712 & \textbf{0.811} & 0.098 & 0.117 & 0.094 & 0.090 & \textbf{0.066} & 9.24 & 8.80 & 8.87 & 8.78 & \textbf{7.42} \\
\quad S15 & 23.79 & 22.62 & 23.49 & 23.56 & \textbf{26.17} & 0.766 & 0.733 & 0.747 & 0.730 & \textbf{0.832} & 0.073 & 0.093 & 0.077 & 0.076 & \textbf{0.059} & 6.10 & 6.36 & 6.50 & 5.50 & \textbf{4.61} \\
\multicolumn{21}{l}{\emph{Corridor}} \\
\quad S16 & 15.14 & 15.02 & 16.04 & 16.56 & \textbf{17.44} & 0.534 & 0.557 & 0.581 & 0.579 & \textbf{0.655} & 0.284 & 0.305 & 0.236 & 0.206 & \textbf{0.186} & 22.53 & 21.23 & 20.02 & 19.19 & \textbf{19.11} \\
\quad S17 & 21.56 & 20.51 & 21.91 & 21.23 & \textbf{23.45} & 0.690 & 0.662 & 0.670 & 0.629 & \textbf{0.767} & 0.123 & 0.147 & 0.123 & 0.127 & \textbf{0.085} & 19.50 & 21.33 & 24.39 & 15.71 & \textbf{11.87} \\
\quad S18 & 15.55 & 15.54 & 16.89 & 16.96 & \textbf{17.69} & 0.531 & 0.562 & 0.597 & 0.588 & \textbf{0.653} & 0.292 & 0.300 & 0.230 & 0.226 & \textbf{0.204} & 17.35 & 16.44 & 15.93 & 15.48 & \textbf{14.70} \\
\quad S19 & 17.89 & 18.41 & 19.14 & 19.00 & \textbf{20.39} & 0.557 & 0.596 & 0.597 & 0.571 & \textbf{0.676} & 0.244 & 0.219 & 0.210 & 0.193 & \textbf{0.157} & 29.30 & 29.87 & 22.57 & 23.04 & \textbf{20.48} \\
\quad S20 & 17.25 & 17.65 & 17.91 & 18.45 & \textbf{20.35} & 0.537 & 0.575 & 0.556 & 0.554 & \textbf{0.687} & 0.253 & 0.235 & 0.238 & 0.201 & \textbf{0.146} & 29.68 & 26.42 & 20.72 & 21.78 & \textbf{18.57} \\
\multicolumn{21}{l}{\emph{Open office}} \\
\quad S21 & 23.61 & 20.62 & 23.47 & 23.20 & \textbf{25.79} & 0.767 & 0.689 & 0.743 & 0.717 & \textbf{0.829} & 0.069 & 0.143 & 0.077 & 0.078 & \textbf{0.048} & 4.44 & 8.66 & 4.22 & 3.68 & \textbf{3.00} \\
\quad S22 & 25.60 & 22.70 & 24.85 & 24.89 & \textbf{28.12} & 0.806 & 0.735 & 0.781 & 0.745 & \textbf{0.867} & 0.049 & 0.093 & 0.057 & 0.056 & \textbf{0.030} & 4.93 & 8.12 & 4.56 & 3.25 & \textbf{2.43} \\
\quad S23 & 24.20 & 21.68 & 24.28 & 24.04 & \textbf{26.51} & 0.784 & 0.714 & 0.766 & 0.742 & \textbf{0.842} & 0.060 & 0.118 & 0.061 & 0.061 & \textbf{0.042} & 4.76 & 6.13 & 4.75 & 3.70 & \textbf{3.30} \\
\quad S24 & 19.63 & 18.55 & 20.41 & 20.23 & \textbf{23.12} & 0.649 & 0.635 & 0.681 & 0.663 & \textbf{0.773} & 0.147 & 0.191 & 0.137 & 0.136 & \textbf{0.091} & 10.21 & 10.69 & 9.27 & 9.40 & \textbf{7.37} \\
\quad S25 & 19.85 & 18.90 & 21.49 & 20.74 & \textbf{23.73} & 0.628 & 0.625 & 0.682 & 0.626 & \textbf{0.763} & 0.174 & 0.209 & 0.141 & 0.145 & \textbf{0.096} & 18.11 & 22.11 & 12.46 & 16.20 & \textbf{10.19} \\
\multicolumn{21}{l}{\emph{Lab}} \\
\quad S26 & 21.70 & 21.08 & 21.85 & 22.19 & \textbf{24.45} & 0.719 & 0.698 & 0.712 & 0.683 & \textbf{0.804} & 0.102 & 0.115 & 0.101 & 0.094 & \textbf{0.067} & 6.37 & 6.31 & 6.68 & 4.91 & \textbf{4.38} \\
\quad S27 & 19.53 & 19.59 & 20.71 & 20.10 & \textbf{22.69} & 0.637 & 0.659 & 0.672 & 0.619 & \textbf{0.766} & 0.147 & 0.153 & 0.123 & 0.136 & \textbf{0.083} & 8.37 & 8.30 & 7.72 & 7.10 & \textbf{5.66} \\
\quad S28 & 18.70 & 18.64 & 19.10 & 19.39 & \textbf{21.68} & 0.619 & 0.629 & 0.612 & 0.604 & \textbf{0.740} & 0.181 & 0.186 & 0.172 & 0.156 & \textbf{0.111} & 15.10 & 13.65 & 13.21 & 11.29 & \textbf{10.37} \\
\quad S29 & 18.58 & 18.42 & 19.46 & 19.07 & \textbf{21.26} & 0.615 & 0.623 & 0.626 & 0.603 & \textbf{0.728} & 0.176 & 0.183 & 0.160 & 0.166 & \textbf{0.115} & 14.69 & 12.72 & 11.46 & 10.35 & \textbf{9.54} \\
\quad S30 & 15.57 & 16.76 & 17.85 & 18.11 & \textbf{19.42} & 0.477 & 0.566 & 0.591 & 0.587 & \textbf{0.666} & 0.385 & 0.329 & 0.266 & 0.256 & \textbf{0.222} & 28.90 & 27.80 & 29.10 & 27.38 & \textbf{24.04} \\
\multicolumn{21}{l}{\emph{Lounge}} \\
\quad S31 & 25.03 & 22.96 & 25.45 & 25.93 & \textbf{27.83} & 0.808 & 0.742 & 0.806 & 0.805 & \textbf{0.871} & 0.052 & 0.083 & 0.047 & 0.043 & \textbf{0.033} & 3.95 & 5.29 & 3.85 & 2.81 & \textbf{2.35} \\
\quad S32 & 23.75 & 21.68 & 23.37 & 23.90 & \textbf{26.61} & 0.769 & 0.706 & 0.761 & 0.728 & \textbf{0.842} & 0.070 & 0.107 & 0.075 & 0.070 & \textbf{0.047} & 5.15 & 7.84 & 5.92 & 3.73 & \textbf{3.38} \\
\quad S33 & 19.79 & 19.99 & 21.95 & 21.76 & \textbf{24.18} & 0.639 & 0.663 & 0.697 & 0.677 & \textbf{0.794} & 0.148 & 0.158 & 0.109 & 0.108 & \textbf{0.079} & 8.21 & 7.73 & 6.40 & 5.71 & \textbf{5.01} \\
\quad S34 & 20.21 & 20.25 & 21.11 & 21.55 & \textbf{23.60} & 0.680 & 0.685 & 0.716 & 0.710 & \textbf{0.789} & 0.153 & 0.147 & 0.130 & 0.116 & \textbf{0.103} & 15.55 & 14.77 & 14.97 & 12.29 & \textbf{11.95} \\
\quad S35 & 18.14 & 18.47 & 19.04 & 18.66 & \textbf{21.31} & 0.599 & 0.620 & 0.605 & 0.583 & \textbf{0.726} & 0.207 & 0.198 & 0.176 & 0.179 & \textbf{0.130} & 19.10 & 15.31 & 13.48 & 12.68 & \textbf{12.26} \\
\midrule
\multicolumn{21}{l}{\emph{Real measured dataset}} \\
\quad S36 & 21.00 & 21.59 & 19.23 & 20.31 & \textbf{22.18} & 0.755 & 0.803 & 0.732 & 0.776 & \textbf{0.806} & 0.052 & 0.051 & 0.079 & 0.067 & \textbf{0.043} & 5.17 & 5.38 & 5.45 & 5.61 & \textbf{4.12} \\
\bottomrule
\end{tabular}}
\end{table}

\subsubsection{Per-Scene Results}
\label{app:breakdown_perscene}

\autoref{tab:e1perscene} reports all four metrics for each of the~\(36\) scenes, grouped by category.
Queries are pooled within scenes, and the table averages reproduce~\autoref{tab:perscene} after rounding.
Because every method is trained and evaluated within the same scene, this setting measures spectrum-synthesis fidelity without testing cross-scene generalization.

\textbf{The Advantage Is Consistent Across Scenes.}
\ourSystemPR leads on every scene and metric, covering all~\(144\) scene-metric comparisons.
Thus, the aggregate improvement is not driven by a few favorable scenes.
Relative to the strongest baseline for each scene and metric, the average gains are~\(2.02\,\mathrm{dB}\) in~PSNR and~\(0.073\) in~SSIM, with a~\(1.07^\circ\) reduction in~AoA error.
The largest PSNR gain is~\(2.74\,\mathrm{dB}\) on classroom scene~S6, while the largest SSIM gain is~\(0.112\) on corridor scene~S20.
The margin is smallest on measured scene~S36, where~\ourSystemPR reaches~\(0.806\)~SSIM against~\(0.803\) for~GSRF.

\textbf{Scene Difficulty Is Consistent Across Methods.}
Per-scene~SSIM is strongly correlated between~\ourSystemPR and each baseline, with~\(r=0.93\)--\(0.98\).
Corridors are among the most difficult scenes, with~\ourSystemPR achieving~\(0.653\)--\(0.767\)~SSIM, followed by laboratories.
Several bedrooms and classrooms are easier, reaching~\(0.917\)~SSIM on~S7.
These patterns suggest that scene properties contribute substantially to performance variation across methods.

\textbf{Scene Variants Produce Substantial Variation.}
Across categories, the SSIM spread among five variants is~\(0.09\)--\(0.16\), comparable to or larger than the average~\(0.073\) gap between~\ourSystemPR and the strongest baseline.
Layouts and clutter therefore matter even within a category, motivating the scene-level breakdown rather than relying on category averages alone.

\begin{table}[t]
\centering\small
\caption{\textit{Scene-level results for unseen scene variants with few-shot baselines.}
Each scene is evaluated by the within-category fold that excludes it using \(M{=}32\) references.
}
\label{tab:e2perscene}
\setlength{\tabcolsep}{2.5pt}
\resizebox{\linewidth}{!}{%
\begin{tabular}{lcccccccccccccccccccc}
\toprule
& \multicolumn{5}{c}{PSNR\(\uparrow\)} & \multicolumn{5}{c}{SSIM\(\uparrow\)} & \multicolumn{5}{c}{NMSE\(\downarrow\)} & \multicolumn{5}{c}{AoA err (\(^\circ\))\(\downarrow\)} \\
\cmidrule(lr){2-6}\cmidrule(lr){7-11}\cmidrule(lr){12-16}\cmidrule(lr){17-21}
Scene & NeRF\(^2\)-FS & GSRF-FS & WRF-GS-FS & GRaF & \ourSystemPR & NeRF\(^2\)-FS & GSRF-FS & WRF-GS-FS & GRaF & \ourSystemPR & NeRF\(^2\)-FS & GSRF-FS & WRF-GS-FS & GRaF & \ourSystemPR & NeRF\(^2\)-FS & GSRF-FS & WRF-GS-FS & GRaF & \ourSystemPR \\
\midrule
\multicolumn{21}{l}{\emph{Conference}} \\
\quad S1 & 15.18 & 12.09 & 17.13 & 15.53 & \textbf{28.44} & 0.431 & 0.302 & 0.538 & 0.487 & \textbf{0.876} & 0.499 & 0.805 & 0.354 & 0.418 & \textbf{0.023} & 14.40 & 51.26 & 16.62 & 20.27 & \textbf{2.10} \\
\quad S2 & 15.26 & 11.66 & 17.12 & 13.83 & \textbf{25.61} & 0.480 & 0.301 & 0.549 & 0.439 & \textbf{0.811} & 0.427 & 0.850 & 0.287 & 0.558 & \textbf{0.069} & 16.48 & 50.03 & 14.66 & 26.26 & \textbf{4.40} \\
\quad S3 & 12.80 & 12.16 & 15.68 & 13.73 & \textbf{21.12} & 0.371 & 0.336 & 0.485 & 0.441 & \textbf{0.710} & 0.657 & 0.687 & 0.364 & 0.518 & \textbf{0.128} & 25.35 & 54.77 & 21.14 & 25.87 & \textbf{8.17} \\
\quad S4 & 14.11 & 11.99 & 15.17 & 14.10 & \textbf{20.97} & 0.444 & 0.336 & 0.484 & 0.444 & \textbf{0.712} & 0.491 & 0.694 & 0.380 & 0.460 & \textbf{0.117} & 19.19 & 52.14 & 17.87 & 25.50 & \textbf{7.54} \\
\quad S5 & 13.87 & 12.15 & 14.92 & 14.45 & \textbf{19.10} & 0.412 & 0.329 & 0.473 & 0.468 & \textbf{0.643} & 0.538 & 0.740 & 0.424 & 0.452 & \textbf{0.227} & 23.96 & 55.71 & 23.81 & 27.87 & \textbf{14.92} \\
\multicolumn{21}{l}{\emph{Classroom}} \\
\quad S6 & 15.71 & 11.60 & 17.67 & 15.25 & \textbf{25.62} & 0.456 & 0.284 & 0.562 & 0.500 & \textbf{0.824} & 0.420 & 0.798 & 0.275 & 0.451 & \textbf{0.042} & 12.99 & 48.06 & 12.33 & 15.24 & \textbf{2.78} \\
\quad S7 & 14.25 & 11.71 & 17.83 & 13.61 & \textbf{30.40} & 0.412 & 0.281 & 0.566 & 0.443 & \textbf{0.897} & 0.644 & 0.944 & 0.313 & 0.668 & \textbf{0.021} & 23.38 & 67.48 & 14.24 & 24.85 & \textbf{1.92} \\
\quad S8 & 14.80 & 11.45 & 16.47 & 13.53 & \textbf{24.42} & 0.481 & 0.304 & 0.532 & 0.465 & \textbf{0.787} & 0.506 & 0.836 & 0.380 & 0.659 & \textbf{0.082} & 16.42 & 63.04 & 19.94 & 23.45 & \textbf{5.33} \\
\quad S9 & 13.59 & 12.25 & 17.13 & 14.79 & \textbf{25.27} & 0.426 & 0.325 & 0.513 & 0.449 & \textbf{0.803} & 0.786 & 0.853 & 0.360 & 0.579 & \textbf{0.072} & 19.83 & 68.48 & 17.20 & 26.54 & \textbf{4.96} \\
\quad S10 & 10.99 & 11.73 & 15.49 & 13.47 & \textbf{22.89} & 0.300 & 0.323 & 0.463 & 0.433 & \textbf{0.746} & 1.348 & 0.845 & 0.445 & 0.639 & \textbf{0.104} & 34.94 & 57.91 & 23.82 & 25.16 & \textbf{7.67} \\
\multicolumn{21}{l}{\emph{Bedroom}} \\
\quad S11 & 14.89 & 14.72 & 16.96 & 13.92 & \textbf{29.28} & 0.446 & 0.392 & 0.514 & 0.415 & \textbf{0.888} & 0.577 & 0.533 & 0.407 & 0.641 & \textbf{0.020} & 16.31 & 34.36 & 20.66 & 23.42 & \textbf{1.65} \\
\quad S12 & 15.03 & 12.99 & 15.81 & 13.10 & \textbf{26.74} & 0.458 & 0.339 & 0.472 & 0.408 & \textbf{0.848} & 0.533 & 0.690 & 0.434 & 0.706 & \textbf{0.034} & 14.94 & 39.90 & 19.30 & 27.25 & \textbf{2.20} \\
\quad S13 & 14.53 & 13.39 & 16.59 & 13.97 & \textbf{23.94} & 0.433 & 0.375 & 0.512 & 0.430 & \textbf{0.785} & 0.566 & 0.622 & 0.384 & 0.590 & \textbf{0.070} & 19.17 & 35.24 & 16.97 & 25.72 & \textbf{3.80} \\
\quad S14 & 14.73 & 13.62 & 16.31 & 14.24 & \textbf{22.29} & 0.462 & 0.381 & 0.510 & 0.442 & \textbf{0.748} & 0.504 & 0.573 & 0.374 & 0.507 & \textbf{0.106} & 17.06 & 35.44 & 18.41 & 28.48 & \textbf{8.42} \\
\quad S15 & 14.23 & 12.72 & 15.51 & 13.78 & \textbf{25.45} & 0.428 & 0.339 & 0.468 & 0.421 & \textbf{0.806} & 0.567 & 0.713 & 0.423 & 0.606 & \textbf{0.067} & 16.51 & 48.30 & 23.63 & 25.01 & \textbf{4.56} \\
\multicolumn{21}{l}{\emph{Corridor}} \\
\quad S16 & 12.04 & 11.68 & 12.92 & 13.19 & \textbf{15.36} & 0.402 & 0.349 & 0.420 & 0.460 & \textbf{0.573} & 0.621 & 0.579 & 0.494 & 0.477 & \textbf{0.287} & 31.91 & 48.11 & 31.30 & 27.52 & \textbf{22.37} \\
\quad S17 & 14.16 & 14.16 & 16.33 & 14.10 & \textbf{21.90} & 0.411 & 0.395 & 0.487 & 0.441 & \textbf{0.712} & 0.633 & 0.561 & 0.399 & 0.594 & \textbf{0.116} & 40.63 & 55.37 & 33.01 & 46.88 & \textbf{13.23} \\
\quad S18 & 11.99 & 11.25 & 13.86 & 12.79 & \textbf{14.81} & 0.349 & 0.318 & 0.445 & 0.421 & \textbf{0.532} & 0.648 & 0.681 & 0.422 & 0.590 & \textbf{0.348} & 24.90 & 46.52 & 21.72 & 24.85 & \textbf{19.94} \\
\quad S19 & 11.76 & 13.59 & 14.54 & 13.53 & \textbf{19.69} & 0.290 & 0.350 & 0.421 & 0.383 & \textbf{0.643} & 0.871 & 0.521 & 0.454 & 0.558 & \textbf{0.176} & 57.47 & 48.95 & 43.13 & 52.41 & \textbf{21.03} \\
\quad S20 & 12.93 & 12.94 & 14.46 & 13.56 & \textbf{18.21} & 0.342 & 0.360 & 0.417 & 0.384 & \textbf{0.610} & 0.651 & 0.619 & 0.457 & 0.559 & \textbf{0.270} & 38.39 & 61.16 & 34.75 & 45.70 & \textbf{24.02} \\
\multicolumn{21}{l}{\emph{Open office}} \\
\quad S21 & 14.00 & 11.76 & 17.30 & 14.71 & \textbf{25.19} & 0.405 & 0.348 & 0.550 & 0.502 & \textbf{0.813} & 0.562 & 0.767 & 0.271 & 0.478 & \textbf{0.055} & 16.61 & 53.11 & 15.08 & 17.44 & \textbf{3.24} \\
\quad S22 & 13.59 & 13.04 & 17.56 & 15.03 & \textbf{27.33} & 0.452 & 0.398 & 0.559 & 0.478 & \textbf{0.850} & 0.767 & 0.682 & 0.343 & 0.536 & \textbf{0.036} & 15.52 & 42.97 & 16.49 & 25.00 & \textbf{2.55} \\
\quad S23 & 15.05 & 10.79 & 17.91 & 14.02 & \textbf{25.77} & 0.474 & 0.279 & 0.576 & 0.484 & \textbf{0.820} & 0.486 & 0.951 & 0.269 & 0.598 & \textbf{0.047} & 13.85 & 63.81 & 11.32 & 21.62 & \textbf{3.35} \\
\quad S24 & 11.90 & 11.03 & 14.31 & 12.75 & \textbf{20.12} & 0.374 & 0.296 & 0.465 & 0.436 & \textbf{0.686} & 0.817 & 0.830 & 0.466 & 0.677 & \textbf{0.182} & 27.30 & 61.17 & 23.52 & 27.91 & \textbf{12.10} \\
\quad S25 & 13.15 & 12.03 & 15.37 & 14.26 & \textbf{20.93} & 0.380 & 0.349 & 0.464 & 0.444 & \textbf{0.686} & 0.633 & 0.783 & 0.438 & 0.485 & \textbf{0.191} & 43.08 & 62.38 & 29.35 & 32.67 & \textbf{15.18} \\
\multicolumn{21}{l}{\emph{Lab}} \\
\quad S26 & 14.17 & 13.07 & 15.49 & 14.11 & \textbf{23.97} & 0.444 & 0.339 & 0.496 & 0.473 & \textbf{0.775} & 0.626 & 0.617 & 0.459 & 0.607 & \textbf{0.077} & 16.92 & 34.46 & 18.49 & 22.66 & \textbf{4.40} \\
\quad S27 & 14.35 & 12.21 & 16.35 & 13.76 & \textbf{22.04} & 0.447 & 0.326 & 0.514 & 0.435 & \textbf{0.743} & 0.515 & 0.697 & 0.312 & 0.515 & \textbf{0.094} & 18.83 & 48.92 & 16.06 & 26.43 & \textbf{5.80} \\
\quad S28 & 13.65 & 12.31 & 15.57 & 13.94 & \textbf{20.99} & 0.375 & 0.339 & 0.503 & 0.455 & \textbf{0.716} & 0.627 & 0.686 & 0.437 & 0.553 & \textbf{0.136} & 26.91 & 50.64 & 23.77 & 27.24 & \textbf{11.21} \\
\quad S29 & 13.56 & 12.02 & 15.08 & 14.53 & \textbf{20.31} & 0.413 & 0.333 & 0.478 & 0.449 & \textbf{0.699} & 0.576 & 0.707 & 0.400 & 0.430 & \textbf{0.138} & 30.32 & 52.15 & 23.79 & 32.41 & \textbf{9.70} \\
\quad S30 & 11.65 & 11.94 & 12.86 & 12.93 & \textbf{16.59} & 0.321 & 0.315 & 0.375 & 0.394 & \textbf{0.562} & 0.928 & 0.758 & 0.720 & 0.679 & \textbf{0.403} & 49.25 & 58.62 & 45.54 & 46.01 & \textbf{28.68} \\
\multicolumn{21}{l}{\emph{Lounge}} \\
\quad S31 & 15.11 & 12.87 & 17.53 & 14.45 & \textbf{27.36} & 0.501 & 0.332 & 0.562 & 0.452 & \textbf{0.858} & 0.502 & 0.701 & 0.313 & 0.550 & \textbf{0.036} & 14.18 & 42.54 & 12.62 & 17.87 & \textbf{2.48} \\
\quad S32 & 14.27 & 12.10 & 16.21 & 12.85 & \textbf{25.84} & 0.447 & 0.295 & 0.497 & 0.398 & \textbf{0.827} & 0.615 & 0.789 & 0.391 & 0.758 & \textbf{0.052} & 18.40 & 54.48 & 17.10 & 26.83 & \textbf{3.41} \\
\quad S33 & 13.92 & 12.76 & 15.87 & 12.81 & \textbf{23.84} & 0.449 & 0.358 & 0.509 & 0.420 & \textbf{0.776} & 0.644 & 0.666 & 0.419 & 0.844 & \textbf{0.086} & 17.37 & 47.18 & 16.07 & 26.13 & \textbf{5.11} \\
\quad S34 & 12.89 & 12.98 & 15.12 & 13.09 & \textbf{22.31} & 0.387 & 0.347 & 0.485 & 0.392 & \textbf{0.733} & 0.736 & 0.626 & 0.422 & 0.682 & \textbf{0.129} & 34.83 & 46.72 & 31.17 & 40.58 & \textbf{11.50} \\
\quad S35 & 13.12 & 13.23 & 14.22 & 12.98 & \textbf{20.48} & 0.381 & 0.372 & 0.437 & 0.403 & \textbf{0.703} & 0.618 & 0.534 & 0.472 & 0.634 & \textbf{0.150} & 40.61 & 47.04 & 36.32 & 36.94 & \textbf{12.18} \\
\bottomrule
\end{tabular}}
\end{table}

\begin{table}[t]
\centering\small
\caption{\textit{Scene-level results for unseen scene variants with transfer baselines.}
The three transfer baselines fine-tune source-scene models using the same \(M{=}32\) references; GRaF and~\ourSystemPR are repeated unchanged from the few-shot comparison.}
\label{tab:e2perscene_tr}
\setlength{\tabcolsep}{2.5pt}
\resizebox{\linewidth}{!}{%
\begin{tabular}{lcccccccccccccccccccc}
\toprule
& \multicolumn{5}{c}{PSNR\(\uparrow\)} & \multicolumn{5}{c}{SSIM\(\uparrow\)} & \multicolumn{5}{c}{NMSE\(\downarrow\)} & \multicolumn{5}{c}{AoA err (\(^\circ\))\(\downarrow\)} \\
\cmidrule(lr){2-6}\cmidrule(lr){7-11}\cmidrule(lr){12-16}\cmidrule(lr){17-21}
Scene & NeRF\(^2\)-TR & GSRF-TR & WRF-GS-TR & GRaF & \ourSystemPR & NeRF\(^2\)-TR & GSRF-TR & WRF-GS-TR & GRaF & \ourSystemPR & NeRF\(^2\)-TR & GSRF-TR & WRF-GS-TR & GRaF & \ourSystemPR & NeRF\(^2\)-TR & GSRF-TR & WRF-GS-TR & GRaF & \ourSystemPR \\
\midrule
\multicolumn{21}{l}{\emph{Conference}} \\
\quad S1 & 17.33 & 19.22 & 20.94 & 15.53 & \textbf{28.44} & 0.504 & 0.626 & 0.653 & 0.487 & \textbf{0.876} & 0.319 & 0.202 & 0.139 & 0.418 & \textbf{0.023} & 7.11 & 13.57 & 11.58 & 20.27 & \textbf{2.10} \\
\quad S2 & 14.50 & 11.14 & 12.47 & 13.83 & \textbf{25.61} & 0.442 & 0.362 & 0.397 & 0.439 & \textbf{0.811} & 0.483 & 1.004 & 0.710 & 0.558 & \textbf{0.069} & 19.55 & 51.60 & 23.46 & 26.26 & \textbf{4.40} \\
\quad S3 & 12.84 & 12.83 & 13.22 & 13.73 & \textbf{21.12} & 0.375 & 0.428 & 0.429 & 0.441 & \textbf{0.710} & 0.693 & 0.725 & 0.590 & 0.518 & \textbf{0.128} & 24.05 & 30.46 & 22.72 & 25.87 & \textbf{8.17} \\
\quad S4 & 12.19 & 12.45 & 12.69 & 14.10 & \textbf{20.97} & 0.368 & 0.405 & 0.449 & 0.444 & \textbf{0.712} & 0.766 & 0.683 & 0.647 & 0.460 & \textbf{0.117} & 30.25 & 42.80 & 28.89 & 25.50 & \textbf{7.54} \\
\quad S5 & 13.07 & 13.85 & 13.71 & 14.45 & \textbf{19.10} & 0.397 & 0.435 & 0.480 & 0.468 & \textbf{0.643} & 0.661 & 0.591 & 0.523 & 0.452 & \textbf{0.227} & 30.71 & 33.75 & 30.66 & 27.87 & \textbf{14.92} \\
\multicolumn{21}{l}{\emph{Classroom}} \\
\quad S6 & 12.28 & 11.21 & 12.87 & 15.25 & \textbf{25.62} & 0.370 & 0.360 & 0.444 & 0.500 & \textbf{0.824} & 0.864 & 0.950 & 0.735 & 0.451 & \textbf{0.042} & 20.48 & 52.78 & 26.97 & 15.24 & \textbf{2.78} \\
\quad S7 & 14.90 & 12.25 & 13.84 & 13.61 & \textbf{30.40} & 0.412 & 0.381 & 0.460 & 0.443 & \textbf{0.897} & 0.533 & 0.942 & 0.611 & 0.668 & \textbf{0.021} & 19.59 & 48.85 & 42.60 & 24.85 & \textbf{1.92} \\
\quad S8 & 13.51 & 13.11 & 13.92 & 13.53 & \textbf{24.42} & 0.391 & 0.441 & 0.474 & 0.465 & \textbf{0.787} & 0.635 & 0.717 & 0.577 & 0.659 & \textbf{0.082} & 35.24 & 33.33 & 27.19 & 23.45 & \textbf{5.33} \\
\quad S9 & 12.53 & 11.99 & 12.14 & 14.79 & \textbf{25.27} & 0.351 & 0.353 & 0.402 & 0.449 & \textbf{0.803} & 1.004 & 0.994 & 0.844 & 0.579 & \textbf{0.072} & 31.34 & 47.57 & 33.40 & 26.54 & \textbf{4.96} \\
\quad S10 & 13.23 & 12.41 & 13.56 & 13.47 & \textbf{22.89} & 0.427 & 0.400 & 0.453 & 0.433 & \textbf{0.746} & 0.712 & 0.885 & 0.528 & 0.639 & \textbf{0.104} & 27.13 & 47.20 & 43.14 & 25.16 & \textbf{7.67} \\
\multicolumn{21}{l}{\emph{Bedroom}} \\
\quad S11 & 11.52 & 13.03 & 13.61 & 13.92 & \textbf{29.28} & 0.294 & 0.425 & 0.427 & 0.415 & \textbf{0.888} & 1.188 & 0.876 & 0.657 & 0.641 & \textbf{0.020} & 37.81 & 33.11 & 39.07 & 23.42 & \textbf{1.65} \\
\quad S12 & 13.60 & 11.18 & 11.95 & 13.10 & \textbf{26.74} & 0.398 & 0.353 & 0.397 & 0.408 & \textbf{0.848} & 0.747 & 1.120 & 0.887 & 0.706 & \textbf{0.034} & 20.39 & 50.38 & 27.36 & 27.25 & \textbf{2.20} \\
\quad S13 & 13.16 & 14.14 & 13.48 & 13.97 & \textbf{23.94} & 0.354 & 0.455 & 0.448 & 0.430 & \textbf{0.785} & 0.698 & 0.665 & 0.616 & 0.590 & \textbf{0.070} & 24.47 & 25.72 & 24.86 & 25.72 & \textbf{3.80} \\
\quad S14 & 12.89 & 13.02 & 11.69 & 14.24 & \textbf{22.29} & 0.387 & 0.425 & 0.384 & 0.442 & \textbf{0.748} & 0.675 & 0.628 & 0.857 & 0.507 & \textbf{0.106} & 28.13 & 37.98 & 34.26 & 28.48 & \textbf{8.42} \\
\quad S15 & 13.60 & 13.43 & 14.17 & 13.78 & \textbf{25.45} & 0.423 & 0.437 & 0.447 & 0.421 & \textbf{0.806} & 0.651 & 0.672 & 0.526 & 0.606 & \textbf{0.067} & 28.23 & 38.97 & 34.19 & 25.01 & \textbf{4.56} \\
\multicolumn{21}{l}{\emph{Corridor}} \\
\quad S16 & 10.67 & 12.15 & 13.16 & 13.19 & \textbf{15.36} & 0.316 & 0.399 & 0.460 & 0.460 & \textbf{0.573} & 0.848 & 0.553 & 0.472 & 0.477 & \textbf{0.287} & 44.73 & 39.26 & 31.57 & 27.52 & \textbf{22.37} \\
\quad S17 & 13.15 & 11.45 & 13.70 & 14.10 & \textbf{21.90} & 0.376 & 0.372 & 0.465 & 0.441 & \textbf{0.712} & 0.757 & 1.070 & 0.573 & 0.594 & \textbf{0.116} & 46.22 & 59.26 & 56.93 & 46.88 & \textbf{13.23} \\
\quad S18 & 12.59 & 13.15 & 14.37 & 12.79 & \textbf{14.81} & 0.331 & 0.442 & 0.478 & 0.421 & \textbf{0.532} & 0.530 & 0.485 & 0.364 & 0.590 & \textbf{0.348} & 24.68 & 27.10 & 20.47 & 24.85 & \textbf{19.94} \\
\quad S19 & 11.61 & 12.22 & 8.64 & 13.53 & \textbf{19.69} & 0.295 & 0.380 & 0.315 & 0.383 & \textbf{0.643} & 0.880 & 0.776 & 1.781 & 0.558 & \textbf{0.176} & 60.44 & 67.37 & 80.24 & 52.41 & \textbf{21.03} \\
\quad S20 & 14.09 & 14.77 & 14.65 & 13.56 & \textbf{18.21} & 0.433 & 0.483 & 0.467 & 0.384 & \textbf{0.610} & 0.521 & 0.466 & 0.417 & 0.559 & \textbf{0.270} & 40.30 & 38.73 & 33.15 & 45.70 & \textbf{24.02} \\
\multicolumn{21}{l}{\emph{Open office}} \\
\quad S21 & 11.33 & 12.27 & 14.40 & 14.71 & \textbf{25.19} & 0.338 & 0.400 & 0.492 & 0.502 & \textbf{0.813} & 1.082 & 0.793 & 0.523 & 0.478 & \textbf{0.055} & 33.82 & 40.15 & 23.01 & 17.44 & \textbf{3.24} \\
\quad S22 & 13.88 & 12.12 & 16.19 & 15.03 & \textbf{27.33} & 0.446 & 0.402 & 0.512 & 0.478 & \textbf{0.850} & 0.711 & 0.923 & 0.411 & 0.536 & \textbf{0.036} & 20.52 & 41.86 & 21.30 & 25.00 & \textbf{2.55} \\
\quad S23 & 13.69 & 11.87 & 13.83 & 14.02 & \textbf{25.77} & 0.417 & 0.402 & 0.472 & 0.484 & \textbf{0.820} & 0.631 & 0.886 & 0.548 & 0.598 & \textbf{0.047} & 17.44 & 41.03 & 20.72 & 21.62 & \textbf{3.35} \\
\quad S24 & 11.54 & 12.22 & 12.39 & 12.75 & \textbf{20.12} & 0.336 & 0.399 & 0.438 & 0.436 & \textbf{0.686} & 0.914 & 0.728 & 0.675 & 0.677 & \textbf{0.182} & 34.45 & 40.31 & 28.58 & 27.91 & \textbf{12.10} \\
\quad S25 & 12.45 & 13.19 & 12.58 & 14.26 & \textbf{20.93} & 0.367 & 0.420 & 0.412 & 0.444 & \textbf{0.686} & 0.716 & 0.645 & 0.631 & 0.485 & \textbf{0.191} & 54.60 & 50.05 & 43.77 & 32.67 & \textbf{15.18} \\
\multicolumn{21}{l}{\emph{Lab}} \\
\quad S26 & 12.58 & 12.46 & 13.90 & 14.11 & \textbf{23.97} & 0.381 & 0.407 & 0.446 & 0.473 & \textbf{0.775} & 0.826 & 0.765 & 0.565 & 0.607 & \textbf{0.077} & 22.52 & 36.17 & 24.81 & 22.66 & \textbf{4.40} \\
\quad S27 & 13.70 & 11.74 & 13.73 & 13.76 & \textbf{22.04} & 0.442 & 0.389 & 0.470 & 0.435 & \textbf{0.743} & 0.579 & 0.854 & 0.476 & 0.515 & \textbf{0.094} & 21.06 & 60.21 & 36.45 & 26.43 & \textbf{5.80} \\
\quad S28 & 13.67 & 12.97 & 13.28 & 13.94 & \textbf{20.99} & 0.391 & 0.427 & 0.450 & 0.455 & \textbf{0.716} & 0.612 & 0.689 & 0.586 & 0.553 & \textbf{0.136} & 26.96 & 32.39 & 29.86 & 27.24 & \textbf{11.21} \\
\quad S29 & 12.60 & 12.76 & 12.83 & 14.53 & \textbf{20.31} & 0.358 & 0.417 & 0.441 & 0.449 & \textbf{0.699} & 0.702 & 0.675 & 0.613 & 0.430 & \textbf{0.138} & 39.80 & 48.43 & 33.79 & 32.41 & \textbf{9.70} \\
\quad S30 & 11.79 & 12.31 & 12.54 & 12.93 & \textbf{16.59} & 0.337 & 0.394 & 0.401 & 0.394 & \textbf{0.562} & 0.871 & 0.753 & 0.673 & 0.679 & \textbf{0.403} & 49.67 & 60.86 & 61.93 & 46.01 & \textbf{28.68} \\
\multicolumn{21}{l}{\emph{Lounge}} \\
\quad S31 & 12.25 & 11.81 & 13.30 & 14.45 & \textbf{27.36} & 0.337 & 0.384 & 0.419 & 0.452 & \textbf{0.858} & 0.893 & 0.914 & 0.686 & 0.550 & \textbf{0.036} & 25.26 & 49.74 & 26.70 & 17.87 & \textbf{2.48} \\
\quad S32 & 13.54 & 11.12 & 12.59 & 12.85 & \textbf{25.84} & 0.388 & 0.358 & 0.411 & 0.398 & \textbf{0.827} & 0.675 & 1.034 & 0.750 & 0.758 & \textbf{0.052} & 30.20 & 54.78 & 31.82 & 26.83 & \textbf{3.41} \\
\quad S33 & 13.05 & 13.38 & 13.49 & 12.81 & \textbf{23.84} & 0.400 & 0.433 & 0.472 & 0.420 & \textbf{0.776} & 0.759 & 0.731 & 0.660 & 0.844 & \textbf{0.086} & 21.84 & 31.89 & 22.80 & 26.13 & \textbf{5.11} \\
\quad S34 & 12.28 & 12.42 & 12.28 & 13.09 & \textbf{22.31} & 0.362 & 0.397 & 0.413 & 0.392 & \textbf{0.733} & 0.869 & 0.743 & 0.775 & 0.682 & \textbf{0.129} & 49.40 & 63.11 & 82.30 & 40.58 & \textbf{11.50} \\
\quad S35 & 12.78 & 12.51 & 13.54 & 12.98 & \textbf{20.48} & 0.394 & 0.417 & 0.435 & 0.403 & \textbf{0.703} & 0.692 & 0.692 & 0.531 & 0.634 & \textbf{0.150} & 50.71 & 63.57 & 39.64 & 36.94 & \textbf{12.18} \\
\bottomrule
\end{tabular}}
\end{table}

\subsubsection{Cross-Scene Variant Results}
\label{app:breakdown_unseen}

\autoref{tab:e2perscene} and~\autoref{tab:e2perscene_tr} break down the~\ourSystemPR results in~\autoref{tab:seen} by scene, using the same~\(M=32\) target-scene references.
They compare few-shot baselines and baselines transferred from another scene in the same category, respectively.
These tables report scene-level summaries, whereas~\autoref{tab:seen} aggregates over folds; unweighted averages of their rows need not reproduce the main-table means.
Comparison with the per-scene fits in~\S\ref{app:breakdown_perscene} shows the combined effect of sparse references and unseen-scene inference.

\textbf{Per-Scene Baselines Depend Strongly on Target-Scene Fitting.}
Across the~\(35\) simulated scenes, moving from full per-scene fitting to~\(32\) target-scene references reduces baseline~PSNR by~\(6.2\)--\(8.0\,\mathrm{dB}\).
NeRF$^2$ falls from~\(21.59\) to~\(13.75\,\mathrm{dB}\), GSRF from~\(20.41\) to~\(12.40\,\mathrm{dB}\), and WRF-GS from~\(22.01\) to~\(15.85\,\mathrm{dB}\).
Feed-forward~GRaF also falls from~\(22.00\) to~\(13.85\,\mathrm{dB}\) under sparse references.
By comparison,~\ourSystemPR decreases by~\(1.34\,\mathrm{dB}\), from~\(24.33\) to~\(22.99\,\mathrm{dB}\).

\textbf{The Advantage Persists Across Scenes.}
\ourSystemPR leads on all~\(35\) scenes against both few-shot and transfer baselines in these breakdowns.
Its average~PSNR gain over the strongest few-shot competitor on each scene is~\(7.13\,\mathrm{dB}\), with a minimum of~\(0.94\,\mathrm{dB}\) on corridor scene~S18.
Against the strongest transfer competitor, the average gain is~\(8.72\,\mathrm{dB}\), with a minimum of~\(0.44\,\mathrm{dB}\), again on~S18.
The feed-forward and power-refinement variants are compared separately in~\autoref{tab:revision:ffpa}.

\textbf{Source-Scene Transfer Provides Limited Benefit.}
Initializing a baseline from another scene in the same category and adapting it with the~\(32\) references does not consistently help.
GSRF rises from~\(12.40\) to~\(12.69\,\mathrm{dB}\) and improves on~\(20\) scenes.
NeRF$^2$ loses~\(0.77\,\mathrm{dB}\) on average and improves on only~\(9\) scenes; WRF-GS loses~\(2.43\,\mathrm{dB}\) and improves on only~\(4\).
A field fitted to one scene is therefore not generally a reliable initialization for another, even within the same category.

\textbf{Harder Scenes Incur Larger Generalization Losses.}
For~\ourSystemPR, the SSIM decrease from per-scene fitting to unseen-variant inference is largest for corridors, from~\(0.687\) to~\(0.614\), and smallest for lounges, from~\(0.805\) to~\(0.779\).
At the scene level, the gap is negatively correlated with per-scene~SSIM, with~\(r=-0.64\).
The six largest gaps occur on~S18, S30, S5, S24, S16, and~S20, which are also among the more difficult scenes under per-scene fitting.
Thus, scenes that are difficult with full target-scene data tend to lose more when synthesis relies on sparse references in an unseen scene.

\begin{table}[t]
\centering\small
\caption{\textit{Scene-level results with the entire scene category unseen using few-shot baselines.}
Each scene is evaluated by the cross-category fold that excludes its category using \(M{=}32\) references. Few-shot baseline results are unchanged from the unseen-variant setting.
}
\label{tab:e3perscene}
\setlength{\tabcolsep}{2.5pt}
\resizebox{\linewidth}{!}{%
\begin{tabular}{lcccccccccccccccccccc}
\toprule
& \multicolumn{5}{c}{PSNR\(\uparrow\)} & \multicolumn{5}{c}{SSIM\(\uparrow\)} & \multicolumn{5}{c}{NMSE\(\downarrow\)} & \multicolumn{5}{c}{AoA err (\(^\circ\))\(\downarrow\)} \\
\cmidrule(lr){2-6}\cmidrule(lr){7-11}\cmidrule(lr){12-16}\cmidrule(lr){17-21}
Scene & NeRF\(^2\)-FS & GSRF-FS & WRF-GS-FS & GRaF & \ourSystemPR & NeRF\(^2\)-FS & GSRF-FS & WRF-GS-FS & GRaF & \ourSystemPR & NeRF\(^2\)-FS & GSRF-FS & WRF-GS-FS & GRaF & \ourSystemPR & NeRF\(^2\)-FS & GSRF-FS & WRF-GS-FS & GRaF & \ourSystemPR \\
\midrule
\multicolumn{21}{l}{\emph{Conference}} \\
\quad S1 & 15.18 & 12.09 & 17.13 & 14.55 & \textbf{28.24} & 0.431 & 0.302 & 0.538 & 0.461 & \textbf{0.871} & 0.499 & 0.805 & 0.354 & 0.524 & \textbf{0.024} & 14.40 & 51.26 & 16.62 & 21.53 & \textbf{2.10} \\
\quad S2 & 15.26 & 11.66 & 17.12 & 14.48 & \textbf{25.72} & 0.480 & 0.301 & 0.549 & 0.451 & \textbf{0.823} & 0.427 & 0.850 & 0.287 & 0.478 & \textbf{0.060} & 16.48 & 50.03 & 14.66 & 26.08 & \textbf{4.14} \\
\quad S3 & 12.80 & 12.16 & 15.68 & 14.32 & \textbf{20.77} & 0.371 & 0.336 & 0.485 & 0.451 & \textbf{0.698} & 0.657 & 0.687 & 0.364 & 0.459 & \textbf{0.141} & 25.35 & 54.77 & 21.14 & 24.54 & \textbf{9.05} \\
\quad S4 & 14.11 & 11.99 & 15.17 & 14.09 & \textbf{20.34} & 0.444 & 0.336 & 0.484 & 0.448 & \textbf{0.698} & 0.491 & 0.694 & 0.380 & 0.465 & \textbf{0.151} & 19.19 & 52.14 & 17.87 & 24.14 & \textbf{8.13} \\
\quad S5 & 13.87 & 12.15 & 14.92 & 14.60 & \textbf{19.22} & 0.412 & 0.329 & 0.473 & 0.471 & \textbf{0.647} & 0.538 & 0.740 & 0.424 & 0.437 & \textbf{0.217} & 23.96 & 55.71 & 23.81 & 27.42 & \textbf{14.98} \\
\multicolumn{21}{l}{\emph{Classroom}} \\
\quad S6 & 15.71 & 11.60 & 17.67 & 14.04 & \textbf{25.68} & 0.456 & 0.284 & 0.562 & 0.471 & \textbf{0.820} & 0.420 & 0.798 & 0.275 & 0.637 & \textbf{0.044} & 12.99 & 48.06 & 12.33 & 20.36 & \textbf{2.75} \\
\quad S7 & 14.25 & 11.71 & 17.83 & 13.47 & \textbf{30.43} & 0.412 & 0.281 & 0.566 & 0.441 & \textbf{0.897} & 0.644 & 0.944 & 0.313 & 0.725 & \textbf{0.020} & 23.38 & 67.48 & 14.24 & 24.44 & \textbf{1.92} \\
\quad S8 & 14.80 & 11.45 & 16.47 & 13.71 & \textbf{23.79} & 0.481 & 0.304 & 0.532 & 0.472 & \textbf{0.767} & 0.506 & 0.836 & 0.380 & 0.652 & \textbf{0.098} & 16.42 & 63.04 & 19.94 & 22.46 & \textbf{5.84} \\
\quad S9 & 13.59 & 12.25 & 17.13 & 14.31 & \textbf{26.52} & 0.426 & 0.325 & 0.513 & 0.450 & \textbf{0.819} & 0.786 & 0.853 & 0.360 & 0.673 & \textbf{0.068} & 19.83 & 68.48 & 17.20 & 27.22 & \textbf{4.96} \\
\quad S10 & 10.99 & 11.73 & 15.49 & 12.86 & \textbf{22.95} & 0.300 & 0.323 & 0.463 & 0.424 & \textbf{0.742} & 1.348 & 0.845 & 0.445 & 0.757 & \textbf{0.101} & 34.94 & 57.91 & 23.82 & 24.91 & \textbf{7.58} \\
\multicolumn{21}{l}{\emph{Bedroom}} \\
\quad S11 & 14.89 & 14.72 & 16.96 & 12.80 & \textbf{29.49} & 0.446 & 0.392 & 0.514 & 0.371 & \textbf{0.892} & 0.577 & 0.533 & 0.407 & 0.796 & \textbf{0.019} & 16.31 & 34.36 & 20.66 & 37.57 & \textbf{1.65} \\
\quad S12 & 15.03 & 12.99 & 15.81 & 13.59 & \textbf{26.74} & 0.458 & 0.339 & 0.472 & 0.410 & \textbf{0.831} & 0.533 & 0.690 & 0.434 & 0.647 & \textbf{0.035} & 14.94 & 39.90 & 19.30 & 27.51 & \textbf{2.17} \\
\quad S13 & 14.53 & 13.39 & 16.59 & 14.20 & \textbf{25.02} & 0.433 & 0.375 & 0.512 & 0.437 & \textbf{0.811} & 0.566 & 0.622 & 0.384 & 0.542 & \textbf{0.058} & 19.17 & 35.24 & 16.97 & 26.03 & \textbf{3.79} \\
\quad S14 & 14.73 & 13.62 & 16.31 & 13.67 & \textbf{22.67} & 0.462 & 0.381 & 0.510 & 0.426 & \textbf{0.751} & 0.504 & 0.573 & 0.374 & 0.560 & \textbf{0.105} & 17.06 & 35.44 & 18.41 & 29.84 & \textbf{8.39} \\
\quad S15 & 14.23 & 12.72 & 15.51 & 13.24 & \textbf{25.55} & 0.428 & 0.339 & 0.468 & 0.400 & \textbf{0.813} & 0.567 & 0.713 & 0.423 & 0.677 & \textbf{0.065} & 16.51 & 48.30 & 23.63 & 27.73 & \textbf{4.59} \\
\multicolumn{21}{l}{\emph{Corridor}} \\
\quad S16 & 12.04 & 11.68 & 12.92 & 13.16 & \textbf{15.26} & 0.402 & 0.349 & 0.420 & 0.457 & \textbf{0.569} & 0.621 & 0.579 & 0.494 & 0.488 & \textbf{0.294} & 31.91 & 48.11 & 31.30 & 28.04 & \textbf{22.30} \\
\quad S17 & 14.16 & 14.16 & 16.33 & 14.97 & \textbf{22.14} & 0.411 & 0.395 & 0.487 & 0.457 & \textbf{0.695} & 0.633 & 0.561 & 0.399 & 0.485 & \textbf{0.121} & 40.63 & 55.37 & 33.01 & 42.46 & \textbf{13.30} \\
\quad S18 & 11.99 & 11.25 & 13.86 & 13.72 & \textbf{15.76} & 0.349 & 0.318 & 0.445 & 0.450 & \textbf{0.566} & 0.648 & 0.681 & 0.422 & 0.457 & \textbf{0.286} & 24.90 & 46.52 & 21.72 & 23.63 & \textbf{17.03} \\
\quad S19 & 11.76 & 13.59 & 14.54 & 13.55 & \textbf{19.12} & 0.290 & 0.350 & 0.421 & 0.401 & \textbf{0.624} & 0.871 & 0.521 & 0.454 & 0.552 & \textbf{0.195} & 57.47 & 48.95 & 43.13 & 50.13 & \textbf{21.74} \\
\quad S20 & 12.93 & 12.94 & 14.46 & 13.80 & \textbf{18.06} & 0.342 & 0.360 & 0.417 & 0.395 & \textbf{0.603} & 0.651 & 0.619 & 0.457 & 0.530 & \textbf{0.278} & 38.39 & 61.16 & 34.75 & 45.93 & \textbf{26.50} \\
\multicolumn{21}{l}{\emph{Open office}} \\
\quad S21 & 14.00 & 11.76 & 17.30 & 13.93 & \textbf{23.42} & 0.405 & 0.348 & 0.550 & 0.478 & \textbf{0.768} & 0.562 & 0.767 & 0.271 & 0.567 & \textbf{0.086} & 16.61 & 53.11 & 15.08 & 20.60 & \textbf{3.74} \\
\quad S22 & 13.59 & 13.04 & 17.56 & 15.04 & \textbf{26.67} & 0.452 & 0.398 & 0.559 & 0.474 & \textbf{0.839} & 0.767 & 0.682 & 0.343 & 0.526 & \textbf{0.039} & 15.52 & 42.97 & 16.49 & 24.14 & \textbf{2.66} \\
\quad S23 & 15.05 & 10.79 & 17.91 & 14.27 & \textbf{24.01} & 0.474 & 0.279 & 0.576 & 0.492 & \textbf{0.788} & 0.486 & 0.951 & 0.269 & 0.544 & \textbf{0.057} & 13.85 & 63.81 & 11.32 & 21.45 & \textbf{3.41} \\
\quad S24 & 11.90 & 11.03 & 14.31 & 12.58 & \textbf{19.76} & 0.374 & 0.296 & 0.465 & 0.439 & \textbf{0.676} & 0.817 & 0.830 & 0.466 & 0.712 & \textbf{0.193} & 27.30 & 61.17 & 23.52 & 27.74 & \textbf{11.42} \\
\quad S25 & 13.15 & 12.03 & 15.37 & 14.08 & \textbf{19.96} & 0.380 & 0.349 & 0.464 & 0.442 & \textbf{0.665} & 0.633 & 0.783 & 0.438 & 0.499 & \textbf{0.245} & 43.08 & 62.38 & 29.35 & 37.30 & \textbf{16.43} \\
\multicolumn{21}{l}{\emph{Lab}} \\
\quad S26 & 14.17 & 13.07 & 15.49 & 12.71 & \textbf{23.69} & 0.444 & 0.339 & 0.496 & 0.441 & \textbf{0.776} & 0.626 & 0.617 & 0.459 & 0.823 & \textbf{0.080} & 16.92 & 34.46 & 18.49 & 26.47 & \textbf{4.50} \\
\quad S27 & 14.35 & 12.21 & 16.35 & 13.70 & \textbf{22.18} & 0.447 & 0.326 & 0.514 & 0.432 & \textbf{0.743} & 0.515 & 0.697 & 0.312 & 0.537 & \textbf{0.096} & 18.83 & 48.92 & 16.06 & 26.25 & \textbf{5.83} \\
\quad S28 & 13.65 & 12.31 & 15.57 & 13.93 & \textbf{20.58} & 0.375 & 0.339 & 0.503 & 0.460 & \textbf{0.682} & 0.627 & 0.686 & 0.437 & 0.555 & \textbf{0.155} & 26.91 & 50.64 & 23.77 & 27.83 & \textbf{11.18} \\
\quad S29 & 13.56 & 12.02 & 15.08 & 13.71 & \textbf{20.47} & 0.413 & 0.333 & 0.478 & 0.440 & \textbf{0.688} & 0.576 & 0.707 & 0.400 & 0.542 & \textbf{0.137} & 30.32 & 52.15 & 23.79 & 30.23 & \textbf{9.60} \\
\quad S30 & 11.65 & 11.94 & 12.86 & 12.02 & \textbf{16.63} & 0.321 & 0.315 & 0.375 & 0.382 & \textbf{0.547} & 0.928 & 0.758 & 0.720 & 0.833 & \textbf{0.404} & 49.25 & 58.62 & 45.54 & 45.77 & \textbf{28.76} \\
\multicolumn{21}{l}{\emph{Lounge}} \\
\quad S31 & 15.11 & 12.87 & 17.53 & 13.53 & \textbf{27.37} & 0.501 & 0.332 & 0.562 & 0.414 & \textbf{0.858} & 0.502 & 0.701 & 0.313 & 0.644 & \textbf{0.036} & 14.18 & 42.54 & 12.62 & 22.92 & \textbf{2.45} \\
\quad S32 & 14.27 & 12.10 & 16.21 & 12.99 & \textbf{25.84} & 0.447 & 0.295 & 0.497 & 0.386 & \textbf{0.823} & 0.615 & 0.789 & 0.391 & 0.748 & \textbf{0.053} & 18.40 & 54.48 & 17.10 & 26.49 & \textbf{3.39} \\
\quad S33 & 13.92 & 12.76 & 15.87 & 13.30 & \textbf{23.55} & 0.449 & 0.358 & 0.509 & 0.426 & \textbf{0.766} & 0.644 & 0.666 & 0.419 & 0.744 & \textbf{0.091} & 17.37 & 47.18 & 16.07 & 25.12 & \textbf{5.16} \\
\quad S34 & 12.89 & 12.98 & 15.12 & 12.57 & \textbf{21.71} & 0.387 & 0.347 & 0.485 & 0.383 & \textbf{0.718} & 0.736 & 0.626 & 0.422 & 0.762 & \textbf{0.126} & 34.83 & 46.72 & 31.17 & 41.10 & \textbf{11.47} \\
\quad S35 & 13.12 & 13.23 & 14.22 & 12.82 & \textbf{20.24} & 0.381 & 0.372 & 0.437 & 0.392 & \textbf{0.697} & 0.618 & 0.534 & 0.472 & 0.661 & \textbf{0.148} & 40.61 & 47.04 & 36.32 & 38.86 & \textbf{12.24} \\
\bottomrule
\end{tabular}}
\end{table}

\begin{table}[t]
\centering\small
\caption{\textit{Scene-level results with the entire scene category unseen using transfer baselines.}
The three transfer baselines fine-tune models from outside the target category using the same \(M{=}32\) references; GRaF and~\ourSystemPR are unchanged from the few-shot comparison.}
\label{tab:e3perscene_tr}
\setlength{\tabcolsep}{2.5pt}
\resizebox{\linewidth}{!}{%
\begin{tabular}{lcccccccccccccccccccc}
\toprule
& \multicolumn{5}{c}{PSNR\(\uparrow\)} & \multicolumn{5}{c}{SSIM\(\uparrow\)} & \multicolumn{5}{c}{NMSE\(\downarrow\)} & \multicolumn{5}{c}{AoA err (\(^\circ\))\(\downarrow\)} \\
\cmidrule(lr){2-6}\cmidrule(lr){7-11}\cmidrule(lr){12-16}\cmidrule(lr){17-21}
Scene & NeRF\(^2\)-TR & GSRF-TR & WRF-GS-TR & GRaF & \ourSystemPR & NeRF\(^2\)-TR & GSRF-TR & WRF-GS-TR & GRaF & \ourSystemPR & NeRF\(^2\)-TR & GSRF-TR & WRF-GS-TR & GRaF & \ourSystemPR & NeRF\(^2\)-TR & GSRF-TR & WRF-GS-TR & GRaF & \ourSystemPR \\
\midrule
\multicolumn{21}{l}{\emph{Conference}} \\
\quad S1 & 12.77 & 13.22 & 12.21 & 14.55 & \textbf{28.24} & 0.348 & 0.408 & 0.404 & 0.461 & \textbf{0.871} & 0.782 & 0.729 & 0.924 & 0.524 & \textbf{0.024} & 30.64 & 38.85 & 27.99 & 21.53 & \textbf{2.10} \\
\quad S2 & 12.77 & 13.39 & 10.72 & 14.48 & \textbf{25.72} & 0.338 & 0.415 & 0.358 & 0.451 & \textbf{0.823} & 0.681 & 0.657 & 1.102 & 0.478 & \textbf{0.060} & 31.41 & 35.87 & 39.86 & 26.08 & \textbf{4.14} \\
\quad S3 & 12.52 & 13.11 & 10.95 & 14.32 & \textbf{20.77} & 0.371 & 0.421 & 0.406 & 0.451 & \textbf{0.698} & 0.791 & 0.646 & 1.047 & 0.459 & \textbf{0.141} & 33.03 & 39.49 & 36.36 & 24.54 & \textbf{9.05} \\
\quad S4 & 12.22 & 12.45 & 12.40 & 14.09 & \textbf{20.34} & 0.369 & 0.404 & 0.436 & 0.448 & \textbf{0.698} & 0.761 & 0.682 & 0.705 & 0.465 & \textbf{0.151} & 29.71 & 42.88 & 29.38 & 24.14 & \textbf{8.13} \\
\quad S5 & 12.51 & 13.00 & 10.35 & 14.60 & \textbf{19.22} & 0.352 & 0.408 & 0.375 & 0.471 & \textbf{0.647} & 0.792 & 0.706 & 1.150 & 0.437 & \textbf{0.217} & 39.05 & 37.67 & 32.44 & 27.42 & \textbf{14.98} \\
\multicolumn{21}{l}{\emph{Classroom}} \\
\quad S6 & 14.29 & 12.06 & 9.15 & 14.04 & \textbf{25.68} & 0.453 & 0.416 & 0.317 & 0.471 & \textbf{0.820} & 0.593 & 0.916 & 1.547 & 0.637 & \textbf{0.044} & 17.36 & 36.61 & 32.80 & 20.36 & \textbf{2.75} \\
\quad S7 & 14.30 & 12.10 & 12.53 & 13.47 & \textbf{30.43} & 0.422 & 0.389 & 0.425 & 0.441 & \textbf{0.897} & 0.638 & 1.040 & 0.798 & 0.725 & \textbf{0.020} & 20.96 & 35.67 & 40.02 & 24.44 & \textbf{1.92} \\
\quad S8 & 15.36 & 13.99 & 14.46 & 13.71 & \textbf{23.79} & 0.488 & 0.472 & 0.484 & 0.472 & \textbf{0.767} & 0.424 & 0.564 & 0.450 & 0.652 & \textbf{0.098} & 17.56 & 20.56 & 26.77 & 22.46 & \textbf{5.84} \\
\quad S9 & 14.10 & 12.07 & 11.96 & 14.31 & \textbf{26.52} & 0.415 & 0.379 & 0.400 & 0.450 & \textbf{0.819} & 0.683 & 1.089 & 0.964 & 0.673 & \textbf{0.068} & 23.18 & 39.75 & 39.46 & 27.22 & \textbf{4.96} \\
\quad S10 & 12.76 & 12.45 & 12.75 & 12.86 & \textbf{22.95} & 0.396 & 0.420 & 0.450 & 0.424 & \textbf{0.742} & 0.806 & 0.890 & 0.706 & 0.757 & \textbf{0.101} & 32.34 & 35.22 & 39.22 & 24.91 & \textbf{7.58} \\
\multicolumn{21}{l}{\emph{Bedroom}} \\
\quad S11 & 13.93 & 13.41 & 14.16 & 12.80 & \textbf{29.49} & 0.385 & 0.416 & 0.457 & 0.371 & \textbf{0.892} & 0.719 & 0.838 & 0.588 & 0.796 & \textbf{0.019} & 26.56 & 32.28 & 36.20 & 37.57 & \textbf{1.65} \\
\quad S12 & 13.36 & 12.15 & 14.15 & 13.59 & \textbf{26.74} & 0.379 & 0.398 & 0.457 & 0.410 & \textbf{0.831} & 0.728 & 0.959 & 0.530 & 0.647 & \textbf{0.035} & 24.17 & 48.11 & 25.34 & 27.51 & \textbf{2.17} \\
\quad S13 & 16.70 & 17.38 & 16.77 & 14.20 & \textbf{25.02} & 0.515 & 0.573 & 0.546 & 0.437 & \textbf{0.811} & 0.354 & 0.315 & 0.341 & 0.542 & \textbf{0.058} & 12.01 & 11.86 & 21.86 & 26.03 & \textbf{3.79} \\
\quad S14 & 13.83 & 12.52 & 12.70 & 13.67 & \textbf{22.67} & 0.393 & 0.411 & 0.415 & 0.426 & \textbf{0.751} & 0.595 & 0.798 & 0.696 & 0.560 & \textbf{0.105} & 23.39 & 49.58 & 32.96 & 29.84 & \textbf{8.39} \\
\quad S15 & 13.05 & 12.05 & 12.71 & 13.24 & \textbf{25.55} & 0.361 & 0.395 & 0.421 & 0.400 & \textbf{0.813} & 0.728 & 0.938 & 0.731 & 0.677 & \textbf{0.065} & 27.50 & 50.49 & 31.31 & 27.73 & \textbf{4.59} \\
\multicolumn{21}{l}{\emph{Corridor}} \\
\quad S16 & 10.48 & 12.44 & 13.63 & 13.16 & \textbf{15.26} & 0.274 & 0.422 & 0.464 & 0.457 & \textbf{0.569} & 0.873 & 0.503 & 0.400 & 0.488 & \textbf{0.294} & 48.77 & 35.55 & 28.54 & 28.04 & \textbf{22.30} \\
\quad S17 & 11.46 & 13.06 & 14.74 & 14.97 & \textbf{22.14} & 0.277 & 0.425 & 0.493 & 0.457 & \textbf{0.695} & 1.106 & 0.757 & 0.482 & 0.485 & \textbf{0.121} & 59.78 & 43.08 & 39.90 & 42.46 & \textbf{13.30} \\
\quad S18 & 10.94 & 12.05 & 13.88 & 13.72 & \textbf{15.76} & 0.270 & 0.399 & 0.459 & 0.450 & \textbf{0.566} & 0.799 & 0.614 & 0.423 & 0.457 & \textbf{0.286} & 45.62 & 36.06 & 23.04 & 23.63 & \textbf{17.03} \\
\quad S19 & 10.13 & 12.18 & 12.57 & 13.55 & \textbf{19.12} & 0.248 & 0.373 & 0.432 & 0.401 & \textbf{0.624} & 1.280 & 0.861 & 0.660 & 0.552 & \textbf{0.195} & 67.11 & 41.91 & 53.63 & 50.13 & \textbf{21.74} \\
\quad S20 & 10.82 & 11.97 & 13.75 & 13.80 & \textbf{18.06} & 0.273 & 0.381 & 0.462 & 0.395 & \textbf{0.603} & 1.084 & 0.808 & 0.503 & 0.530 & \textbf{0.278} & 56.01 & 54.19 & 56.23 & 45.93 & \textbf{26.50} \\
\multicolumn{21}{l}{\emph{Open office}} \\
\quad S21 & 13.80 & 13.33 & 12.71 & 13.93 & \textbf{23.42} & 0.430 & 0.436 & 0.452 & 0.478 & \textbf{0.768} & 0.568 & 0.612 & 0.708 & 0.567 & \textbf{0.086} & 24.64 & 39.25 & 29.89 & 20.60 & \textbf{3.74} \\
\quad S22 & 13.34 & 14.05 & 12.99 & 15.04 & \textbf{26.67} & 0.406 & 0.446 & 0.369 & 0.474 & \textbf{0.839} & 0.777 & 0.679 & 0.704 & 0.526 & \textbf{0.039} & 29.96 & 41.21 & 23.68 & 24.14 & \textbf{2.66} \\
\quad S23 & 13.17 & 13.33 & 13.32 & 14.27 & \textbf{24.01} & 0.419 & 0.430 & 0.471 & 0.492 & \textbf{0.788} & 0.708 & 0.649 & 0.622 & 0.544 & \textbf{0.057} & 21.18 & 32.94 & 24.24 & 21.45 & \textbf{3.41} \\
\quad S24 & 11.73 & 12.54 & 12.18 & 12.58 & \textbf{19.76} & 0.375 & 0.408 & 0.427 & 0.439 & \textbf{0.676} & 0.917 & 0.661 & 0.799 & 0.712 & \textbf{0.193} & 32.35 & 40.73 & 31.25 & 27.74 & \textbf{11.42} \\
\quad S25 & 12.47 & 13.21 & 12.73 & 14.08 & \textbf{19.96} & 0.370 & 0.421 & 0.388 & 0.442 & \textbf{0.665} & 0.709 & 0.642 & 0.635 & 0.499 & \textbf{0.245} & 54.64 & 49.65 & 40.45 & 37.30 & \textbf{16.43} \\
\multicolumn{21}{l}{\emph{Lab}} \\
\quad S26 & 13.22 & 11.99 & 13.81 & 12.71 & \textbf{23.69} & 0.420 & 0.395 & 0.468 & 0.441 & \textbf{0.776} & 0.713 & 0.891 & 0.591 & 0.823 & \textbf{0.080} & 27.90 & 40.05 & 20.67 & 26.47 & \textbf{4.50} \\
\quad S27 & 16.17 & 14.19 & 15.00 & 13.70 & \textbf{22.18} & 0.526 & 0.463 & 0.478 & 0.432 & \textbf{0.743} & 0.414 & 0.536 & 0.430 & 0.537 & \textbf{0.096} & 17.41 & 34.88 & 24.65 & 26.25 & \textbf{5.83} \\
\quad S28 & 13.23 & 12.38 & 13.63 & 13.93 & \textbf{20.58} & 0.406 & 0.399 & 0.454 & 0.460 & \textbf{0.682} & 0.630 & 0.752 & 0.507 & 0.555 & \textbf{0.155} & 34.57 & 46.65 & 40.68 & 27.83 & \textbf{11.18} \\
\quad S29 & 13.53 & 12.10 & 12.12 & 13.71 & \textbf{20.47} & 0.406 & 0.400 & 0.413 & 0.440 & \textbf{0.688} & 0.576 & 0.800 & 0.840 & 0.542 & \textbf{0.137} & 30.77 & 52.29 & 26.14 & 30.23 & \textbf{9.60} \\
\quad S30 & 11.36 & 11.25 & 13.42 & 12.02 & \textbf{16.63} & 0.345 & 0.353 & 0.418 & 0.382 & \textbf{0.547} & 0.941 & 0.974 & 0.571 & 0.833 & \textbf{0.404} & 57.68 & 54.25 & 56.97 & 45.77 & \textbf{28.76} \\
\multicolumn{21}{l}{\emph{Lounge}} \\
\quad S31 & 13.08 & 12.80 & 11.64 & 13.53 & \textbf{27.37} & 0.373 & 0.404 & 0.377 & 0.414 & \textbf{0.858} & 0.741 & 0.884 & 0.889 & 0.644 & \textbf{0.036} & 23.60 & 31.73 & 33.40 & 22.92 & \textbf{2.45} \\
\quad S32 & 12.98 & 12.47 & 12.48 & 12.99 & \textbf{25.84} & 0.375 & 0.399 & 0.399 & 0.386 & \textbf{0.823} & 0.763 & 0.916 & 0.816 & 0.748 & \textbf{0.053} & 27.00 & 39.34 & 34.64 & 26.49 & \textbf{3.39} \\
\quad S33 & 12.45 & 12.49 & 10.72 & 13.30 & \textbf{23.55} & 0.382 & 0.404 & 0.365 & 0.426 & \textbf{0.766} & 0.867 & 0.873 & 1.075 & 0.744 & \textbf{0.091} & 26.03 & 30.32 & 49.79 & 25.12 & \textbf{5.16} \\
\quad S34 & 12.67 & 11.78 & 10.77 & 12.57 & \textbf{21.71} & 0.351 & 0.378 & 0.378 & 0.383 & \textbf{0.718} & 0.755 & 0.985 & 1.103 & 0.762 & \textbf{0.126} & 38.69 & 44.31 & 50.75 & 41.10 & \textbf{11.47} \\
\quad S35 & 12.19 & 11.63 & 11.22 & 12.82 & \textbf{20.24} & 0.355 & 0.383 & 0.382 & 0.392 & \textbf{0.697} & 0.797 & 0.906 & 1.002 & 0.661 & \textbf{0.148} & 42.09 & 48.45 & 49.75 & 38.86 & \textbf{12.24} \\
\bottomrule
\end{tabular}}
\end{table}

\subsubsection{Cross-Scene Category Results}
\label{app:breakdown_unseentype}

\autoref{tab:heldout} reports~\ourSystemPR results averaged over seven excluded scene categories.
\autoref{tab:e3perscene} and~\autoref{tab:e3perscene_tr} provide scene-level comparisons with few-shot and transfer baselines.
Their unweighted scene averages may differ from the fold-level means in~\autoref{tab:heldout}.
The category comparisons below pool results over each category's five scenes.

\textbf{Excluding an Entire Category Causes a Small Additional Drop.}
Relative to unseen-variant evaluation, the unweighted scene summaries for~\ourSystemPR decrease by~\(0.14\,\mathrm{dB}\) in~PSNR and~\(0.007\) in~SSIM; GRaF changes by~\(0.18\,\mathrm{dB}\) and~\(0.005\), respectively.
For~\ourSystemPR, SSIM falls on~\(24\) scenes and rises on~\(11\), with the largest scene-level decrease of~\(0.045\) on~S21.
The largest category-level change is~\(0.024\) for open offices, while the overall category mean moves from~\(0.750\) to~\(0.743\).
Thus, category exclusion causes a modest average loss, although its effect varies by scene.

\textbf{The Margin over GRaF Remains Large.}
\ourSystemPR exceeds~GRaF on all four metrics in every category.
Its category-level SSIM margin ranges from~\(0.18\) on corridors to~\(0.41\) on bedrooms.
AoA error ranges from~\(4.1^\circ\) to~\(20.2^\circ\) for~\ourSystemPR, compared with~\(23.9^\circ\) to~\(38.0^\circ\) for~GRaF.
Across individual scenes,~\ourSystemPR spans~\(0.547\)--\(0.897\)~SSIM, showing that absolute synthesis quality still varies despite the consistent margin.

\textbf{Source-Scene Transfer Has Mixed Effects.}
With target references fixed, using a source scene outside the target category changes~PSNR by~\(-0.02\,\mathrm{dB}\) for~NeRF$^2$, \(+0.07\,\mathrm{dB}\) for~GSRF, and~\(-0.70\,\mathrm{dB}\) for~WRF-GS relative to within-category transfer.
A same-category source therefore provides no consistent advantage.
Likewise,~\autoref{tab:heldout} shows that transfer improves some GSRF metrics but degrades NeRF$^2$ and WRF-GS relative to their few-shot fits.

\textbf{The Advantage Persists Across Scenes.}
\ourSystemPR leads on all~\(35\) scenes and all four metrics against both baseline families, covering~\(280\) scene-metric comparisons.
The aggregate gain is therefore not concentrated in a few favorable scenes.

\subsubsection{Power Refinement}

This ablation repeats the unseen-variant and unseen-category protocols of~\autoref{tab:seen} and~\autoref{tab:heldout} in an independent run.
All variants condition the same pretrained model on~\(M=32\) target-scene references.
\emph{Feed-forward} denotes~\ourSystem and performs no target-scene fitting.
\emph{Anchor offsets} fits one scene-specific power correction per anchor, shared across queries.
\emph{Query network} fits a network that adjusts component powers for each query.
\emph{Both} fits these two corrections and denotes~\ourSystemPR.
The pretrained model, anchor positions, and arrival directions remain fixed.
Each fitted variant uses~\(24\) references for optimization and~\(8\) for step selection.
Setup time includes conditioning and fitting to the selected step; query time is per synthesized spectrum.

\begin{table}[!htbp]
\centering
\vspace{0.1in}
\caption{\textit{Power-refinement ablation.} Mean~\(\pm\) standard deviation over folds.}
\label{tab:revision:ffpa}
\setlength{\tabcolsep}{3.5pt}
\resizebox{\linewidth}{!}{%
\begin{tabular}{llccccccc}
\toprule
Setting & Variant & PSNR$\uparrow$ & SSIM$\uparrow$ & NMSE$\downarrow$ & AoA$\downarrow$ & Adapted params. & Setup & Query \\
\midrule
\multirow{4}{*}{Unseen variants}
& Feed-forward & 22.95$\pm$1.94 & 0.740$\pm$0.054 & 0.121$\pm$0.050 & 8.40$\pm$3.85 & 0 & 0.01 s & 8.5 ms \\
& Anchor offsets & 23.16$\pm$1.98 & 0.751$\pm$0.052 & 0.116$\pm$0.048 & 8.39$\pm$3.85 & 0.004M & 202 s & 8.5 ms \\
& Query network & 22.90$\pm$2.11 & 0.743$\pm$0.054 & 0.124$\pm$0.052 & 8.72$\pm$3.94 & 0.266M & 47 s & 8.5 ms \\
& Both & 22.92$\pm$2.09 & 0.744$\pm$0.054 & 0.124$\pm$0.052 & 8.70$\pm$4.02 & 0.270M & 80 s & 8.5 ms \\
\midrule
\multirow{4}{*}{Unseen categories}
& Feed-forward & 22.83$\pm$2.69 & 0.733$\pm$0.074 & 0.124$\pm$0.069 & 8.41$\pm$5.18 & 0 & 0.01 s & 8.5 ms \\
& Anchor offsets & 23.01$\pm$2.70 & 0.743$\pm$0.074 & 0.120$\pm$0.067 & 8.40$\pm$5.17 & 0.004M & 202 s & 8.5 ms \\
& Query network & 22.84$\pm$2.73 & 0.740$\pm$0.074 & 0.123$\pm$0.067 & 8.67$\pm$5.25 & 0.266M & 47 s & 8.5 ms \\
& Both & 22.83$\pm$2.72 & 0.739$\pm$0.074 & 0.124$\pm$0.068 & 8.69$\pm$5.32 & 0.270M & 80 s & 8.5 ms \\
\bottomrule
\end{tabular}}
\vspace{0.1in}
\end{table}

As shown in~\autoref{tab:revision:ffpa}, joint refinement changes PSNR relative to feed-forward~\ourSystem by~\(-0.03\,\mathrm{dB}\) on unseen variants and~\(+0.01\,\mathrm{dB}\) on unseen categories.
The respective~\(95\%\) confidence intervals are~\([-0.24,0.16]\) and~\([-0.15,0.16]\).
Anchor offsets give the largest gains, \(0.21\) and~\(0.18\,\mathrm{dB}\), but require~\(202\) seconds of setup.
The query network peaks near~\(50\) optimization steps and then overfits, falling to roughly~\(21.16\,\mathrm{dB}\) after~\(3{,}000\) steps.
The selection gate rejects a fitted correction in~\(28\) of~\(70\) scene evaluations.
These results show that the cross-scene advantage is largely present in feed-forward~\ourSystem.
Power refinement provides limited additional accuracy at a one-time setup cost, while leaving per-query rendering time unchanged.

\begin{figure}[t]
\centering
\includegraphics[width=\linewidth]{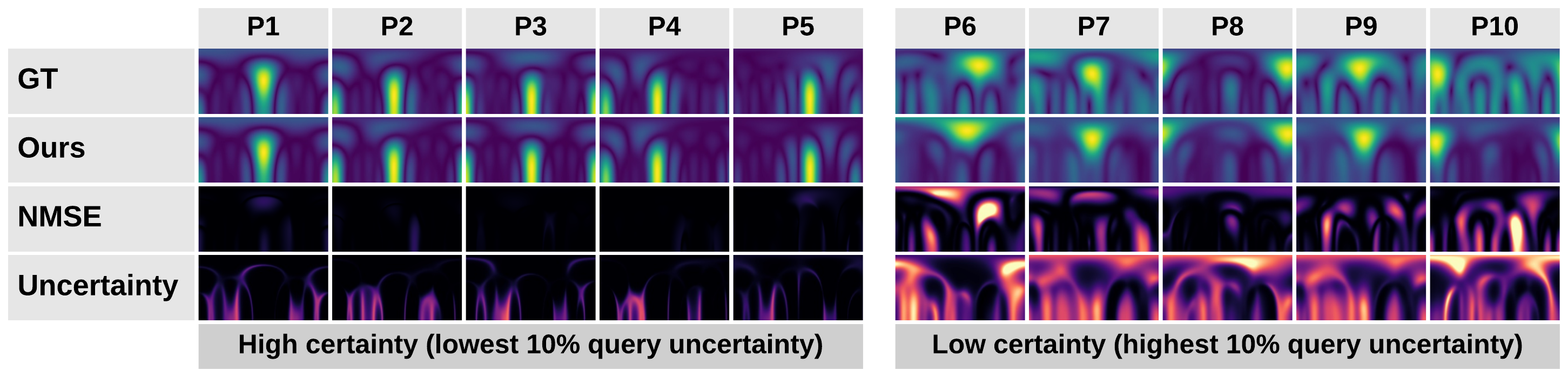}
\caption{\textit{Predictions at low and high uncertainty.}
P1--P5 are low-uncertainty queries and P6--P10 are high-uncertainty queries, paired by scene. Rows show ground truth (GT), predicted mean (Ours), normalized squared error, and predicted standard deviation.}
\label{fig:qual_uncertainty}
\end{figure}

\begin{table}[t]
\centering\small
\caption{\textit{Uncertainty scale and within-spectrum error localization.}
Results are mean~\(\pm\) standard deviation across folds.}
\label{tab:calib}
\begin{tabular}{lcccc}
\toprule
Setting & \(r_{\mathrm{pix}}\) & cov@\(1\sigma\) & cov@\(2\sigma\) & \(c_\sigma\) \\
\midrule
Unseen scene variants & $-0.027$\,{\scriptsize$\pm$0.034} & 0.809\,{\scriptsize$\pm$0.068} & 0.931\,{\scriptsize$\pm$0.033} & 1.11\,{\scriptsize$\pm$0.28} \\
Unseen scene categories & $-0.034$\,{\scriptsize$\pm$0.088} & 0.803\,{\scriptsize$\pm$0.111} & 0.925\,{\scriptsize$\pm$0.062} & 1.12\,{\scriptsize$\pm$0.44} \\
\bottomrule
\end{tabular}
\end{table}

\subsection{Reliability}
\label{app:reliability_analysis}
\suppressfloats[t]

We first examine whether the variance from feed-forward~\ourSystem ranks query difficulty, has an appropriate overall scale, and localizes error within a spectrum.
We then test how optional power refinement affects these properties.

\subsubsection{Feed-Forward Reliability}

The main paper reports the aggregate query-ranking result in~\S\ref{sec:eval:uncertainty}.
\autoref{fig:qual_uncertainty} illustrates the corresponding low- and high-uncertainty predictions.

\textbf{Query-level separation.}
In the first unseen-variant fold, the~\(10\%\) of~\(6{,}436\) test queries with the lowest propagated variance achieve mean~SSIM~\(0.889\) and~NMSE~\(0.017\).
The highest-variance~\(10\%\) achieve~\(0.553\)~SSIM and~\(0.326\)~NMSE.
The separation persists among queries whose ground-truth spectra have at least three lobes: mean~SSIM is~\(0.862\) in the low-uncertainty group and~\(0.539\) in the high-uncertainty group.
For the figure, the two groups are matched by scene, and each column shows the median-SSIM query from that scene within its uncertainty group.
The low-uncertainty examples achieve~\(0.86\)--\(0.89\)~SSIM and~\(0.012\)--\(0.034\)~NMSE.
The high-uncertainty examples achieve~\(0.40\)--\(0.72\)~SSIM and~\(0.11\)--\(0.34\)~NMSE, with larger errors around the dominant lobes.

\textbf{Uncertainty scale and spatial localization.}
We assess the overall scale of predicted uncertainty and whether it identifies angular cells with larger amplitude error~\(\varepsilon=\widehat S-S\).
For each spectrum,~\(r_{\mathrm{pix}}\) is the correlation between predicted standard deviation~\(\sigma\) and~\(\left|\varepsilon\right|\) over~\(400\) randomly sampled angular cells; we then average this correlation across queries.
Computing the correlation within each spectrum prevents differences in query difficulty from driving this spatial measure.
We also report empirical coverage within~\(1\sigma\) and~\(2\sigma\), whose Gaussian reference values are~\(0.68\) and~\(0.95\), and the scale factor
\begin{equation}
c_\sigma
=
\sqrt{
\mathbb E
\left[
\frac{\varepsilon^2}{\sigma^2}
\right]
},
\label{eq:app_uncertainty_scale}
\end{equation}
where~\(c_\sigma\approx1\) indicates approximately correct average scale.

\autoref{tab:calib} shows that the average uncertainty scale is similar in the two simulated generalization settings.
Coverage within~\(2\sigma\) is about~\(0.93\), near the Gaussian reference of~\(0.95\), and the scale factors are~\(1.11\) and~\(1.12\).
Coverage within~\(1\sigma\) is higher than the Gaussian reference, at about~\(0.81\)~\versus~\(0.68\), indicating that the residual distribution is not exactly Gaussian.
Spatial localization is much weaker than query-level ranking.
The within-spectrum correlations are approximately~\(-0.03\) in both settings, so angular cells assigned higher uncertainty are not systematically those with larger errors in the same spectrum.
The propagated variance is therefore most informative as a query-level reliability signal.

\begin{table}[t]
\centering
\small
\caption{\textit{Amplitude-domain reliability before and after power refinement.}}
\label{tab:revision:uqpa}
\resizebox{\linewidth}{!}{%
\begin{tabular}{lrrrrrr}
\toprule
Condition & Spearman & NLL & Cov.~\(1\sigma\) & Cov.~\(2\sigma\) & \(c_\sigma\) & Sharpness \\
\midrule
CoRF & 0.603 & -0.630 & 0.809 & 0.930 & 1.10 & 0.1110 \\
\ourSystemPR mean, CoRF variance & 0.599 & -0.641 & 0.804 & 0.929 & 1.09 & 0.1110 \\
\ourSystemPR variance at refined powers & 0.523 & -0.650 & 0.819 & 0.938 & 1.04 & 0.1217 \\
\ourSystemPR variance, 8-reference recalibration & 0.498 & -0.741 & 0.762 & 0.922 & 1.20 & 0.1057 \\
\bottomrule
\end{tabular}}
\end{table}

\subsubsection{Reliability after Power Refinement}
\label{app:revision:uqpa}

We test whether power refinement preserves query-level ranking and amplitude-domain calibration on the first~\(300\) test queries per scene across all~\(12\) folds.
In~\autoref{tab:revision:uqpa}, Spearman correlates mean predicted variance with mean squared amplitude error across queries, while NLL is pixel-level Gaussian negative log-likelihood.
Coverage is the fraction of angular cells with absolute error within one or two predicted standard deviations.
The scale factor~\(c_\sigma\) is defined in~\Eqref{eq:app_uncertainty_scale}, and sharpness is the mean predicted standard deviation.

Carrying the feed-forward variance unchanged to the refined mean nearly preserves Spearman correlation, from~\(0.603\) to~\(0.599\), and~\(1\sigma\) coverage, from~\(0.809\) to~\(0.804\).
Recomputing variance at refined powers lowers the correlation to~\(0.523\).
Recalibration using eight selection references improves NLL but lowers~\(1\sigma\) coverage to~\(0.762\) and raises~\(c_\sigma\) to~\(1.20\), indicating underestimated error scale.
These results favor carrying the feed-forward variance unchanged alongside a power-refined mean.
Converting variance to the power domain by the delta method requires scale factors of about~\(9.7\)--\(11.8\); the calibration results here concern normalized amplitude.

\begin{table}[t]
\centering\small
\caption{\textit{Tolerance to array-parameter error on \(\mathcal A_2\) (\(8\times2\)).}
Uncalibrated results use the corrupted array parameters directly, while calibrated results fit \(\boldsymbol\nu_{\mathrm{arr}}\) using \(N_{\mathrm C}=64\) measurements per scene.}
\label{tab:array_error}
\setlength{\tabcolsep}{4pt}
\resizebox{\linewidth}{!}{%
\begin{tabular}{llllllllll}
\toprule
& & \multicolumn{4}{l}{Uncalibrated} & \multicolumn{4}{l}{Calibrated} \\
\cmidrule(lr){3-6}\cmidrule(lr){7-10}
Orientation & Spacing & PSNR\(\uparrow\) & SSIM\(\uparrow\) & NMSE\(\downarrow\) & AoA\(\downarrow\) & PSNR\(\uparrow\) & SSIM\(\uparrow\) & NMSE\(\downarrow\) & AoA\(\downarrow\) \\
\midrule
\(\phantom{0}0^\circ\) & ---           & 22.32 & 0.701 & 0.095 & 6.73  & 22.21 & 0.701 & 0.096 & 6.73 \\
\(\phantom{0}2^\circ\) & \(+2/-2\%\)   & 21.22 & 0.671 & 0.107 & 6.74  & 22.17 & 0.700 & 0.096 & 6.73 \\
\(\phantom{0}4^\circ\) & \(+4/-3\%\)   & 19.79 & 0.624 & 0.134 & 7.70  & 22.22 & 0.700 & 0.096 & 6.73 \\
\(\phantom{0}6^\circ\) & \(+7/-6\%\)   & 18.45 & 0.570 & 0.175 & 9.01  & 22.09 & 0.697 & 0.097 & 6.78 \\
\(\phantom{0}9^\circ\) & \(+10/-9\%\)  & 17.05 & 0.514 & 0.235 & 10.44 & 22.14 & 0.697 & 0.096 & 6.77 \\
\(12^\circ\)           & \(+14/-12\%\) & 16.01 & 0.475 & 0.293 & 12.22 & 22.07 & 0.695 & 0.097 & 6.71 \\
\bottomrule
\end{tabular}%
}
\end{table}

\subsection{Tolerance to Array-Parameter Error}
\label{app:array_error}

\S\ref{sec:eval:array} evaluates receiver calibration at one level of geometry mismatch.
Here we vary the nominal array error and test how much accuracy calibration recovers.

\textbf{Deployment Requirements for a New Array.}
The analytic decoder requires receiver element positions, carrier wavelength, and array position and orientation.
Element geometry and wavelength are typically known from the array specification, while receiver position is obtained at installation.
Orientation may be reported by deployed infrastructure or estimated through calibration.
For example, AISG supports reporting antenna azimuth, tilt, and roll~\citep{aisg2024als}, while 3GPP specifies positioning information and array-orientation conventions~\citep{3gpp38455,3gpp37355,3gpp38901}.
When this metadata is unavailable or inaccurate, the receiver response can be calibrated using measurements collected at known locations, as demonstrated on commercial~5G arrays~\citep{pan2023insitu}.
A new array therefore requires either an accurate response specification or receiver-specific calibration measurements.

\textbf{Protocol.}
We use the unseen~\(8\times2\) receiver array~\(\mathcal A_2\) and perturb its in-plane orientation and element spacing over six levels, from exact geometry to~\(12^\circ\) of orientation error with~\(+14\%\) and~\(-12\%\) axis scaling.
At each level, we evaluate the nominal response and a response calibrated with~\(N_{\mathrm C}=64\) measurements.
The propagation model,~\(M=32\) target-scene references, and test queries remain unchanged; calibration measurements are disjoint from the references and queries.

\textbf{Uncalibrated Accuracy Degrades with Array Error.}
\autoref{tab:array_error} reports the results.
As the nominal geometry becomes less accurate, each additional~\(2^\circ\)--\(3^\circ\) of orientation error with its corresponding spacing error reduces~PSNR by approximately~\(1.0\)--\(1.4\,\mathrm{dB}\).
At the largest perturbation, PSNR falls to~\(16.01\,\mathrm{dB}\), approaching the~\(14.96\,\mathrm{dB}\) obtained by retaining the training-array response for~\(\mathcal A_2\).
NMSE increases from~\(0.095\) to~\(0.293\), and~AoA error from~\(6.73^\circ\) to~\(12.22^\circ\).
Thus, receiver-geometry errors distort rendered spectrum even though the predicted propagation components are unchanged.

\textbf{Calibration Recovers In-Family Geometry Errors.}
After calibration, PSNR remains between~\(22.07\) and~\(22.22\,\mathrm{dB}\) across the perturbation range.
NMSE remains between~\(0.096\) and~\(0.097\), and~AoA error between~\(6.71^\circ\) and~\(6.78^\circ\).
The injected errors match the fitted calibration family: one in-plane rotation and two axis scalings.
After fitting, recovered element positions differ from their true positions by~\(0.010\)--\(0.028\) wavelengths on average, with maximum displacement below~\(0.058\lambda\).
These results establish recovery within the modeled geometry family, not for arbitrary receiver mismatch.

\textbf{Calibration Becomes More Valuable as Mismatch Increases.}
Calibration improves~PSNR by~\(0.95\,\mathrm{dB}\) at the smallest nonzero perturbation and~\(6.06\,\mathrm{dB}\) at the largest.
With exact geometry, it changes~PSNR by only~\(-0.11\,\mathrm{dB}\).
Calibration therefore adds little when the receiver specification is accurate but becomes more valuable as its geometry becomes less reliable.

\textbf{Calibration Robustness.}
Under the same in-family perturbation, using~\(N_{\mathrm C}=8,16,32,\) and~\(64\) calibration measurements yields~\(21.96\), \(22.03\), \(22.05\), and~\(22.08\,\mathrm{dB}\), respectively, compared with~\(14.96\,\mathrm{dB}\) without calibration and~\(22.32\,\mathrm{dB}\) with exact geometry.
Ten random initializations at~\(N_{\mathrm C}=64\) yield~\(22.06\pm0.02\,\mathrm{dB}\).
Calibration-measurement SNRs of~\(20\), \(10\), and~\(0\,\mathrm{dB}\) yield~\(22.07\), \(22.00\), and~\(21.33\,\mathrm{dB}\).
Thus, the geometric fit is stable across these measurement budgets and initializations, although severe measurement noise reduces accuracy.

\subsection{Case Study: Angular Localization with Synthesized Spatial Spectra}
\label{app:localization}
\suppressfloats[t]

We evaluate whether spectra synthesized from sparse references can augment training for downstream localization; beam management is studied in~\S\ref{sec:eval:beam}.

\begin{figure}[t]
\centering
\includegraphics[width=0.8\linewidth]{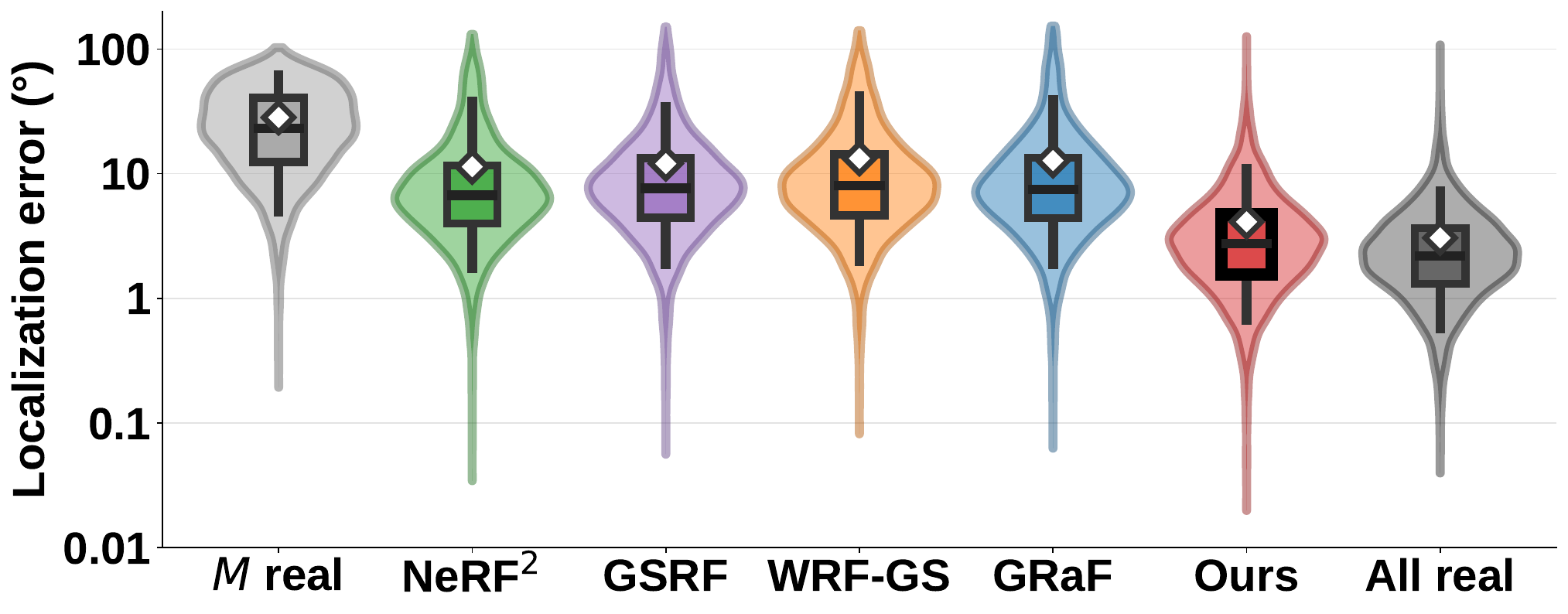}
\caption{\textit{Localization with \(M=32\) reference spectra and synthesized training spectra.}
Only the \(32\) references per scene are treated as available observations; each method synthesizes the remaining training spectra.
Localization error is evaluated on the same test queries across unseen scene variants.
}
\label{fig:localization}
\end{figure}

Following~\citet{zhao2023nerf2}, we train an angular artificial neural network~(AANN) with a ResNet-50 backbone and an MLP head.
Given a spatial spectrum, it predicts the geometric direction from the receiver to the transmitter as a unit vector.
The training loss is cosine distance to the direction computed from the known transmitter and receiver positions.

\textbf{Protocol.}
Only the~\(M=32\) target-scene reference spectra are available to each synthesis method.
Each method generates spectra at the remaining training locations; feed-forward methods condition on the references, while per-scene baselines are fitted from scratch using their few-shot configurations.
For synthesis method~\(m\), the localization model is trained on
\begin{equation}
\mathcal D_{\mathrm{loc}}^{m}
=
\mathcal D_{\mathrm{GT}}^{M}
\cup
\widehat{\mathcal D}_{m},
\label{eq:localization_training}
\end{equation}
where~\(\mathcal D_{\mathrm{GT}}^{M}\) contains the~\(32\) reference spectra and~\(\widehat{\mathcal D}_{m}\) contains spectra synthesized at the remaining training locations.
Synthesized spectra retain the transmitter locations and geometric localization labels of their corresponding ground-truth spectra.

We compare seven training conditions: the~\(32\) references alone; augmentation with NeRF$^2$-FS, GSRF-FS, WRF-GS-FS, GRaF, or~\ourSystem; and the full ground-truth training set.
The evaluation covers seven unseen scene variants in one fold, with one scene from each category.
All AANNs use the same architecture, training configuration, and seed.
Every condition is evaluated on the same~\(6{,}436\) ground-truth test spectra, never on synthesized test spectra.
For test query~\(\pi\), localization error is
\begin{equation}
E^{\mathrm{loc}}_\pi
=
d_{\mathrm{ang}}
\left(
\widehat{\Omega}^{\mathrm{loc}}_\pi,
\Omega^{\mathrm{gt}}_\pi
\right),
\label{eq:localization_error}
\end{equation}
where~\(\widehat{\Omega}^{\mathrm{loc}}_\pi\) is the AANN prediction, \(\Omega^{\mathrm{gt}}_\pi\) is the geometric transmitter direction, and~\(d_{\mathrm{ang}}\) is spherical angular distance.

\textbf{Results.}
As shown in~\autoref{fig:localization}, training on the~\(32\) references alone yields mean angular error~\(28.27^\circ\).
Augmentation with~\ourSystem spectra reduces it to~\(4.11^\circ\), close to the~\(3.05^\circ\) obtained with the full ground-truth training set.
This closes~\(96\%\) of the gap between the reference-only and full-data conditions.
The strongest baseline, NeRF$^2$-FS, achieves~\(11.36^\circ\); GSRF-FS, GRaF, and WRF-GS-FS achieve~\(12.06^\circ\), \(12.75^\circ\), and~\(13.27^\circ\), respectively.
With~\ourSystem augmentation,~\(77\%\) of test queries fall within~\(5^\circ\) of the ground-truth direction, compared with~\(28\%\)--\(35\%\) for the baselines.
\ourSystem also yields the lowest mean localization error in each of the seven evaluated scenes.

\textbf{Synthesis Fidelity and Localization.}
Localization labels always use the true geometric directions, so synthesis errors affect the training inputs rather than the labels.
Misplaced or distorted spectral lobes create a mismatch between synthesized training inputs and ground-truth test spectra.
Across method--scene pairs, localization error is negatively correlated with the SSIM of synthesized training spectra, with Pearson~\(r=-0.73\).
This association is consistent with improved synthesis fidelity contributing to lower localization error.

Overall, augmenting~\(32\) reference spectra with~\ourSystem predictions trains a localization model whose accuracy approaches that obtained with the full ground-truth training set.
Augmentation with baseline-synthesized spectra leaves substantially larger localization error.

\subsection{Simulation-to-Real Transfer Study}
\label{app:real}

We transfer the propagation model trained on~\(35\) simulated scenes to measured scene~S36, which uses a physical~\(4\times4\) receiver array.
Each method has a budget of~\(96\) target-domain spectra, although the methods use these measurements differently.
Both~\ourSystemCAL and~\ourSystemPRCAL condition on~\(M=32\) references and calibrate the receiver with~\(N_{\mathrm C}=64\) disjoint spectra.
\ourSystemPRCAL also reuses both sets for power refinement without additional measurements.

\begin{table}[h!]
\centering
\small
\vspace{0.1in}
\caption{\textit{Measurement-matched transfer to measured scene~S36.}}
\label{tab:revision:real}
\begin{tabular}{lrrrr}
\toprule
Method & PSNR$\uparrow$ & SSIM$\uparrow$ & NMSE$\downarrow$ & AoA$\downarrow$ \\
\midrule
NeRF$^2$ & 12.94 & 0.457 & 0.287 & 16.24 \\
GSRF & 16.85 & 0.587 & 0.145 & 12.09 \\
WRF-GS+ & 16.96 & 0.592 & 0.140 & 9.26 \\
GRaF & 14.24 & 0.510 & 0.219 & \textbf{7.87} \\
\midrule
\ourSystemCAL & 16.74 & 0.620 & 0.125 & 8.12 \\
\ourSystemPRCAL & \textbf{17.45} & \textbf{0.653} & \textbf{0.118} & 7.88 \\
\bottomrule
\end{tabular}
\vspace{0.1in}
\end{table}

In~\autoref{tab:revision:real}, \ourSystemPRCAL leads in~PSNR, SSIM, and~NMSE, exceeding~WRF-GS+ by~\(0.49\,\mathrm{dB}\) in~PSNR and~\(0.061\) in~SSIM.
Even without power refinement, \ourSystemCAL achieves higher~SSIM and lower~NMSE and~AoA error than~WRF-GS+, despite slightly lower~PSNR.
GRaF has nearly the same~AoA error as~\ourSystemPRCAL but substantially worse full-spectrum metrics.
AoA measures the strongest direction, while~SSIM and~NMSE also reflect the locations and relative strengths of secondary spectral structure.

The calibration controls show that modifying the receiver response removes substantial error while the conditioned propagation estimates remain fixed.
Uncalibrated feed-forward~\ourSystem reaches~\(14.79\,\mathrm{dB}\) PSNR.
Correcting array rotation and spacing with three geometric parameters raises this to~\(14.98\,\mathrm{dB}\), whereas the element-wise response model reaches~\(16.74\,\mathrm{dB}\).
The small geometric gain suggests that a global rotation and two spacing corrections cannot capture most of the correctable response discrepancy.
The element-wise model additionally represents gain, phase, position, and angular-response differences, providing a plausible mechanism for its larger gain.
Because these parameters are fitted together, the comparison does not identify which element-wise effect contributes most.

Power refinement addresses a different part of the synthesis pipeline by correcting predicted component strengths while the pretrained propagation model remains frozen.
Under the same measurement budget, \ourSystemPRCAL reaches~\(17.45\,\mathrm{dB}\).
This further improvement is consistent with component-power error remaining after receiver calibration.
The two adaptations therefore address complementary discrepancies: calibration changes the mapping from predicted propagation to the measured spectrum, while refinement changes the powers supplied to that mapping.

The~\(0.49\,\mathrm{dB}\) real-data margin is smaller than the~\(7.16\,\mathrm{dB}\) margin over the strongest few-shot baseline on unseen simulated variants.
These comparisons use different measurement budgets and adaptation settings, so their margins should be interpreted within their respective protocols.
In the measured-scene comparison, WRF-GS+ fits its field to~S36, whereas~\ourSystem retains a simulation-trained propagation model and adapts its receiver response and component powers.
The baseline can therefore fit scene-specific multipath directly, while~\ourSystem must infer it from the transferred model and sparse references.
Calibration and power refinement can correct the response and strengths of predicted arrivals without retraining the learned propagation model, but neither can add a missing propagation component.
Secondary reflected structure may remain unresolved even when the dominant direction is accurate, consistent with the failures in~\S\ref{app:qualitative:fails}.

A separate larger-budget diagnostic uses the~\(612\)-spectrum validation split in addition to the~\(M=32\) conditioning references.
Fitting the receiver response reaches~\(17.39\,\mathrm{dB}\) PSNR, while fitting both the response and power correction reaches~\(19.99\,\mathrm{dB}\).
This shows that additional target-domain data can improve both corrections, but the resulting~\(644\)-spectrum protocol differs from the matched~\(96\)-spectrum comparison in~\autoref{tab:revision:real}.
In a deployment with an accurately specified effective receiver response, receiver-specific calibration measurements could be omitted.
Hardware specifications and network metadata may provide geometry and pose~\citep{aisg2024als,3gpp38455}, but the~\(19.99\,\mathrm{dB}\) experiment also uses validation spectra to fit the scene-specific power correction.
It therefore does not establish the same accuracy from metadata and~\(M=32\) references alone.

\subsection{Role of Target-Scene References}
\label{app:reference_analysis}
\suppressfloats[t]

\begin{figure}[t]
\centering
\subfigure[Unseen scene variants]{\includegraphics[width=\linewidth]{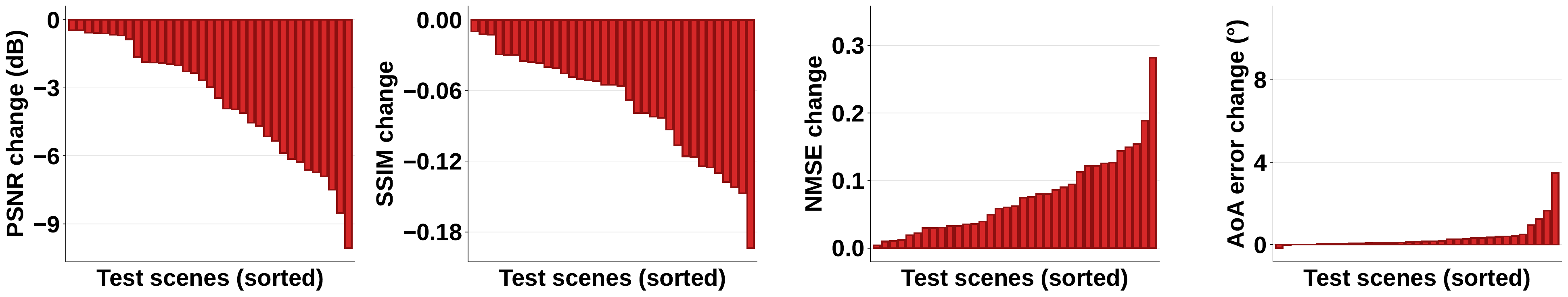}\label{fig:reference_usage_drop_a}}\\[4pt]
\subfigure[Unseen scene categories]{\includegraphics[width=\linewidth]{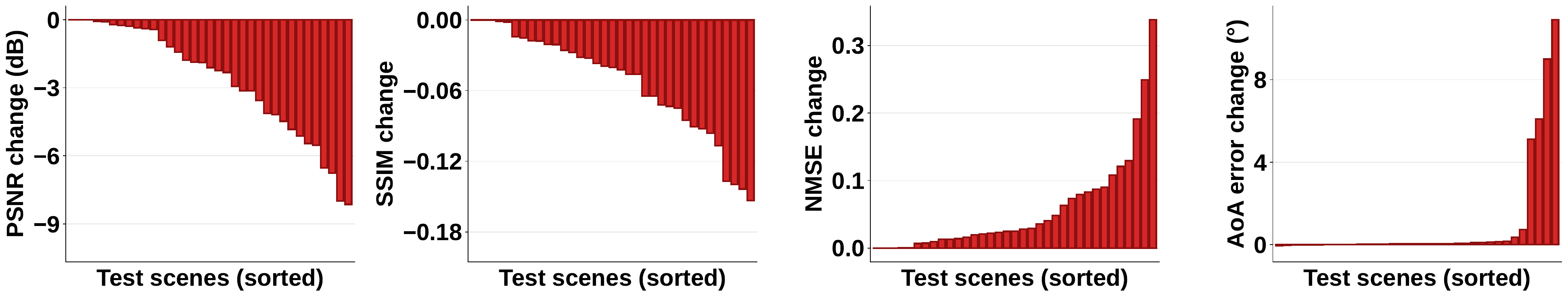}\label{fig:reference_usage_drop_b}}
\caption{\textit{Effect of cross-scene reference replacement.}
For each test scene, a replacement set is selected by the largest PSNR loss on~\(150\) queries.
Each bar shows its mean metric change over all test queries relative to correct-scene references, with scenes sorted separately for each metric.}
\label{fig:reference_usage_drop}
\end{figure}

\subsubsection{Reference Dependence}
\label{app:reference_usage}

A conditional model could achieve strong accuracy while relying primarily on query geometry and ignoring the reference spectra.
We therefore directly test whether~\ourSystem uses the target-scene references by replacing them with references from other scenes while keeping the trained model and test queries fixed.
All model parameters remain fixed throughout this experiment, so any performance change is caused only by changing the reference set.

\textbf{Protocol.}
For each test scene in the unseen-variant and unseen-category evaluations, we replace its~\(M=32\) references with the reference set from each of the other~\(34\) simulated scenes.
For each replacement scene, we first evaluate~\(150\) evenly spaced test queries and identify the reference set that produces the largest decrease in~PSNR.
We then evaluate all test queries using both the correct target-scene references and this selected replacement set.
We additionally average over all~\(34\) replacement scenes to determine whether the effect persists beyond the single reference set producing the largest degradation.

\textbf{Correct Scene References Matter.}
\autoref{fig:reference_usage_drop} shows that predictions respond to which scene supplies the references.
With the correct references, the model achieves~\(22.95\,\mathrm{dB}\) PSNR and~\(0.740\) SSIM on unseen scene variants, and~\(22.83\,\mathrm{dB}\) and~\(0.733\) on unseen scene categories.
Replacing them with the reference set that produces the largest degradation reduces~PSNR by~\(3.62\,\mathrm{dB}\) and~\(2.65\,\mathrm{dB}\), reduces~SSIM by~\(0.072\) and~\(0.053\), and increases~NMSE by~\(63\%\) and~\(45\%\), respectively.
These are stress-test losses from the selected replacements, not typical scene swaps.

The effect occurs broadly across test scenes rather than being driven by a few extreme cases.
For unseen scene variants, every test scene loses at least~\(0.46\,\mathrm{dB}\), with the largest decrease reaching~\(10.1\,\mathrm{dB}\).
For unseen scene categories, the largest decrease reaches~\(8.1\,\mathrm{dB}\).
The effect also remains when averaging over all~\(34\) replacement scenes:~PSNR decreases by~\(0.96\,\mathrm{dB}\) for unseen variants and~\(0.51\,\mathrm{dB}\) for unseen categories.
Scenes that are more sensitive to the selected replacement set are also more sensitive on average, with Spearman correlations of~\(\rho=0.92\) and~\(0.77\), respectively.
These results show that the dependence on the target-scene references is systematic rather than arising from a small number of unfavorable scene pairs.

\textbf{Reference Sensitivity Depends on Scene Structure.}
The magnitude of the degradation varies across target scenes.
For unseen scene variants, the reference-swap PSNR change is negatively correlated with the PSNR obtained using the correct references, with Spearman correlation~\(\rho=-0.83\).
Classroom and bedroom scenes can lose approximately~\(4\)--\(6\,\mathrm{dB}\), whereas the most difficult corridor scenes often lose less than~\(1\,\mathrm{dB}\).
This suggests that when the model captures more scene-specific spectral structure, replacing the corresponding reference evidence removes more useful information.

\textbf{References Primarily Affect Full-Spectrum Structure.}
Changing the references has a much smaller effect on dominant-bearing accuracy than on full-spectrum synthesis.
Mean~AoA error increases by only~\(0.35^\circ\) for unseen variants and~\(0.91^\circ\) for unseen categories, despite the substantially larger changes in~PSNR, SSIM, and~NMSE.
This is consistent with the analytic decoder preserving geometric information about arrival directions while the reference-conditioned field supplies scene-specific information needed to synthesize the complete spatial spectrum.

\begin{table}[t]
\centering
\small
\caption{\textit{Reference-budget sensitivity.}
Entries are median PSNR losses in~\(\mathrm{dB}\) across scenes relative to each scene's mean at~\(M=32\).
The average first averages over ten reference draws; the least favorable result uses the worst draw for each scene.}
\label{tab:reference_budget_summary}
\begin{tabular}{lcccc}
\toprule
& \multicolumn{2}{c}{Unseen variants} & \multicolumn{2}{c}{Unseen categories} \\
\cmidrule(lr){2-3}\cmidrule(lr){4-5}
Budget & Average & Least favorable & Average & Least favorable \\
\midrule
\(M=1\) & 0.62 & 2.40 & 0.41 & 2.10 \\
\(M=8\) & \(<0.05\) & 0.53 & \(<0.05\) & 0.30 \\
\(M=32\) & 0 & 0.16 & 0 & 0.09 \\
\bottomrule
\end{tabular}
\end{table}

\subsubsection{Reference Budget}
\label{app:reference_budget}

The main experiments use a reference budget of~\(M=32\).
Here we vary the number of references available at inference while keeping the trained model fixed, allowing us to isolate how reference coverage affects cross-scene prediction.

\textbf{Protocol.}
For each test scene, we evaluate
\[
M\in\left\{1,2,4,8,16,32,64,128\right\}
\]
using~\(R=10\) reference sets per budget.
The reference sets are nested within each draw, so the references available at a smaller budget are retained when the budget increases.
At~\(M=32\), draw~\(0\) corresponds to the reference set used in the main evaluation.
All unseen-variant and unseen-category folds are evaluated on their complete test splits.
For each scene, we report performance relative to its mean result at~\(M=32\), considering both the average over the~\(10\) draws and the least favorable draw.
At~\(M=32\), the average difference is zero by definition, but the least favorable draw can fall below the scene mean.

\textbf{Average Performance Saturates Quickly.}
\autoref{tab:reference_budget_summary} shows that relatively few references are sufficient to recover most of the average performance.
At~\(M=1\), the median~PSNR decrease across scenes is only~\(0.62\,\mathrm{dB}\) for unseen scene variants and~\(0.41\,\mathrm{dB}\) for unseen scene categories.
By~\(M=8\), the median decrease is below~\(0.05\,\mathrm{dB}\) in both settings.
Increasing the budget beyond~\(M=32\) produces negligible additional improvement.
Thus, the mean prediction quality saturates after a relatively small number of target-scene references.

\begin{figure}[t]
\centering
\subfigure[Unseen scene variants]{\includegraphics[width=\linewidth]{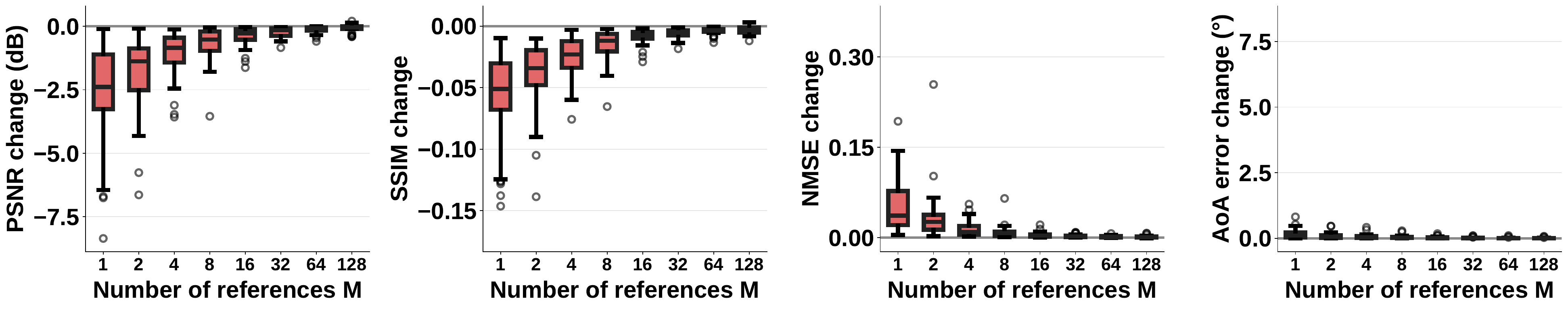}\label{fig:reference_budget_scene_a}}\\[4pt]
\subfigure[Unseen scene categories]{\includegraphics[width=\linewidth]{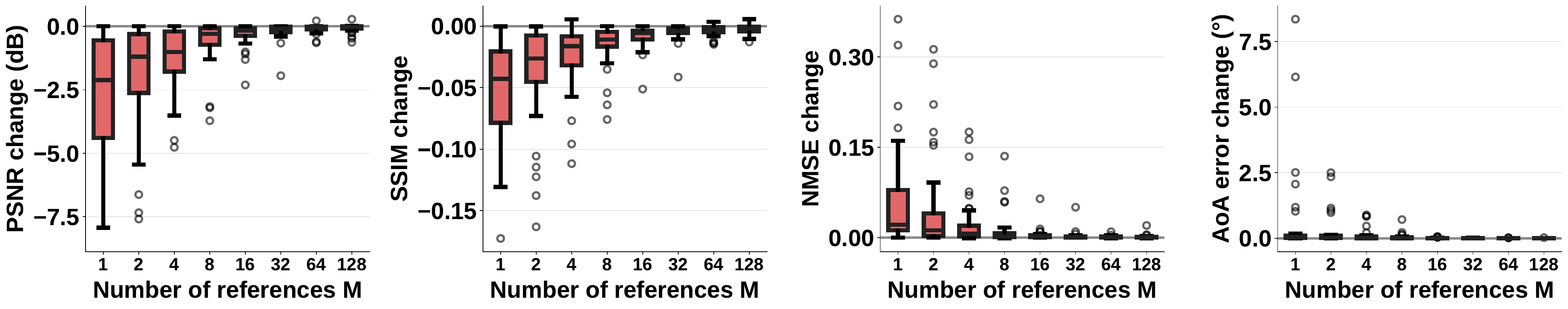}\label{fig:reference_budget_scene_b}}
\caption{\textit{Effect of the reference budget across test scenes.}
For each budget \(M\), each point shows the worst change over ten reference draws relative to the scene's average performance at \(M=32\). Results are shown for unseen scene variants and unseen scene categories.}
\label{fig:reference_budget_scene}
\end{figure}

\textbf{More References Improve Selection Robustness.}
Although the average result saturates quickly, small reference budgets are substantially more sensitive to which measurements are selected.
\autoref{fig:reference_budget_scene} shows the least favorable draw for each scene across reference budgets.
At~\(M=1\), the~\(10\) reference draws span a median of~\(2.4\,\mathrm{dB}\) across unseen-variant scenes and~\(3.0\,\mathrm{dB}\) across unseen-category scenes.
The least favorable draw decreases~PSNR by a median of~\(2.4\,\mathrm{dB}\) and~\(2.1\,\mathrm{dB}\), respectively, with individual scenes losing as much as~\(8.4\,\mathrm{dB}\) and~\(7.9\,\mathrm{dB}\).

This variability decreases rapidly as more references are provided.
At~\(M=4\), the loss under the least favorable draw decreases to~\(0.85\,\mathrm{dB}\) for unseen variants and~\(1.02\,\mathrm{dB}\) for unseen categories.
At~\(M=8\), it further decreases to~\(0.53\,\mathrm{dB}\) and~\(0.30\,\mathrm{dB}\), and at~\(M=32\) to only~\(0.16\,\mathrm{dB}\) and~\(0.09\,\mathrm{dB}\).

\textbf{Reference Sensitivity Varies Across Scenes.}
Scenes that achieve higher fidelity with the default reference set generally exhibit greater sensitivity when only one or a few references are available.
For unseen scene variants, the signed~PSNR change under the least favorable draw at~\(M=1\) is negatively correlated with~PSNR at~\(M=32\), with Spearman correlation~\(\rho=-0.60\).
For example, some classroom scenes lose several~\(\mathrm{dB}\) with an unfavorable single reference, whereas the most difficult corridor scenes change much less.
This pattern is consistent with~\S\ref{app:reference_usage}: scenes for which the conditioned field captures more scene-specific structure also depend more strongly on receiving representative target-scene evidence.

Overall, increasing the reference budget primarily reduces sensitivity to reference selection rather than substantially increasing average accuracy.
A small number of references is often sufficient for a typical prediction, but~\(M=32\) provides substantially more robust performance across reference selections and scenes while additional references beyond this budget yield little further benefit.
This nested-draw experiment begins at~\(M=1\) and does not evaluate reference-free inference.

\begin{figure}[t]
\centering
\subfigure[Unseen scene variants]{\includegraphics[width=\linewidth]{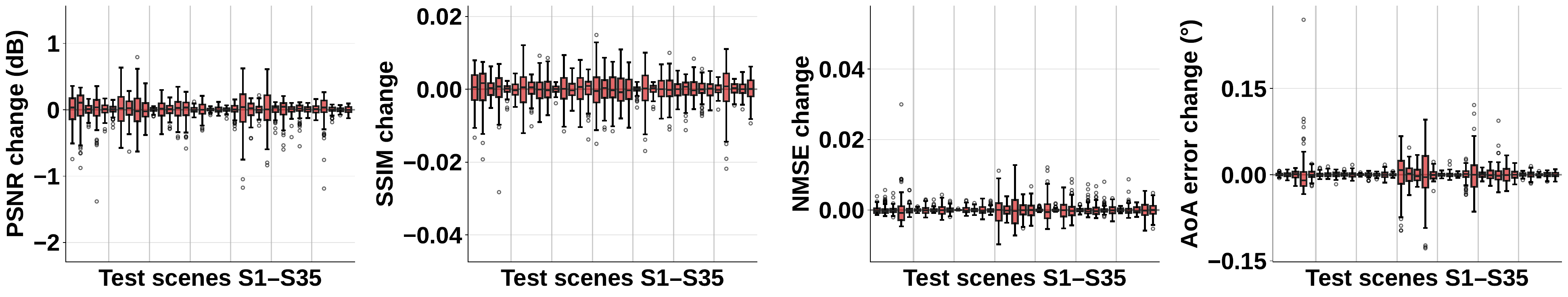}\label{fig:reference_sampling_a}}\\[4pt]
\subfigure[Unseen scene categories]{\includegraphics[width=\linewidth]{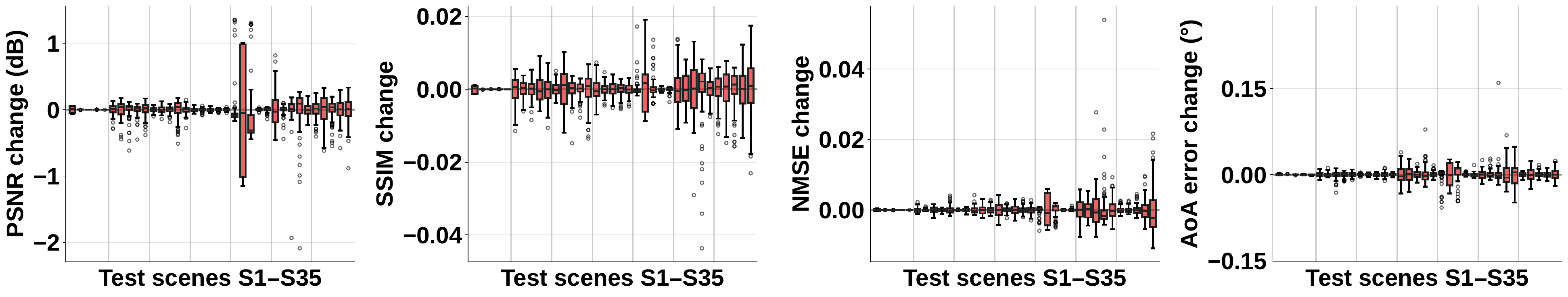}\label{fig:reference_sampling_b}}
\caption{\textit{Variation across \(R=100\) reference draws.}
Each box summarizes the metric change for one test scene relative to its mean across draws at \(M=32\). 
}
\label{fig:reference_sampling}
\end{figure}

\subsubsection{Reference-Set Sampling}
\label{app:reference_sampling}

The main experiments use a fixed budget of~\(M=32\) target-scene references, but the particular measurements available in practice may vary.
We therefore evaluate how sensitive~\ourSystem is to reference selection while keeping the reference budget, trained model, and test queries fixed.

\textbf{Protocol.}
For each test scene~\(s\), let~\(\mathcal R_s\) denote its pool of candidate reference measurements.
We independently sample~\(R=100\) reference sets,
\begin{equation}
\mathcal C_s^{\left(r\right)}
\sim
\operatorname{UniformSubset}
\left(
\mathcal R_s,
M
\right),
\qquad
r=0,\ldots,R-1,
\label{eq:reference_sampling}
\end{equation}
with~\(M=32\).
Sampling is performed without replacement within each set, while the test queries remain fixed and disjoint from all sampled references.
The reference sets are selected before evaluation, and~\(r=0\) corresponds to the original reference selection; this analysis evaluates the fixed feed-forward model, not the power-refined main-table rows.
We evaluate all~\(12\) unseen-variant and unseen-category folds, giving~\(1200\) fold-by-reference-set evaluations in total.
Because the model parameters remain fixed, any variation across repetitions is caused only by the selected references.

\textbf{Performance Is Robust to Reference Selection.}
Averaged over folds and reference sets,~\ourSystem achieves~\(22.95\,\mathrm{dB}\) PSNR, \(0.739\) SSIM, \(0.122\) NMSE, and~\(8.40^\circ\) AoA error on unseen scene variants.
For unseen scene categories, the corresponding values are~\(22.86\,\mathrm{dB}\), \(0.733\), \(0.124\), and~\(8.41^\circ\).
The standard deviation of fold-averaged~PSNR across the~\(100\) reference sets is only~\(0.026\,\mathrm{dB}\) and~\(0.037\,\mathrm{dB}\), respectively, and no reference set changes the fold-averaged~PSNR by more than~\(0.12\,\mathrm{dB}\) relative to the original reference selection.

\autoref{fig:reference_sampling} examines the variation within individual scenes.
For most scenes, reference selection has only a small effect.
Across scenes, the median width of the interquartile range is~\(0.11\,\mathrm{dB}\) for unseen variants and~\(0.07\,\mathrm{dB}\) for unseen categories, while the median span between the 5th and 95th percentiles is~\(0.30\,\mathrm{dB}\) and~\(0.20\,\mathrm{dB}\), respectively.
The reference set used in the main evaluation is also representative, with a median difference from the corresponding scene mean of only~\(-0.02\,\mathrm{dB}\) and~\(0.00\,\mathrm{dB}\).
A small number of scenes exhibit greater sensitivity.
Individual reference sets can shift scene-level~PSNR by as much as~\(1.38\,\mathrm{dB}\) for unseen variants and~\(2.09\,\mathrm{dB}\) for unseen categories.
However, SSIM, NMSE, and especially~AoA remain substantially more stable.
Across all scenes, a single reference selection changes~AoA error by at most~\(0.27^\circ\), indicating that reference selection affects detailed spectrum synthesis much more than the dominant arrival direction.

\textbf{Reference Selection Changes Scene-Level Spectral Structure.}
We further examine the scenes with the largest variation across reference sets.
Changing the references tends to shift performance across many queries in the same scene rather than only at queries near the selected measurements.
Between a scene's lowest- and highest-performing reference sets,~\(63\%\)--\(93\%\) of its test queries improve together, while performance is essentially uncorrelated with distance to the nearest reference.

The dominant spectral peak is also highly stable.
Across these scenes, the predicted peak remains at the same angular cell as in the main reference selection for a median of~\(96\%\)--\(100\%\) of test queries.
Instead, most of the variation appears in the broader spectral background.
Reference sets containing more spectra with rich multipath tend to produce a stronger background response across the scene, whereas reference sets dominated by a strong direct path tend to produce a weaker background.
These relationships are moderate rather than deterministic, indicating that~\ourSystem combines information across the complete reference set rather than relying on individual measurements.

Together with~\S\ref{app:reference_usage}, these results distinguish two properties of the reference-conditioning mechanism.
Changing \emph{which measurements} are selected from the correct target scene usually produces only small variations, whereas replacing them with references from a \emph{different scene} causes substantial degradation.
Thus,~\(M=32\) references provide a stable scene-level representation without requiring a specially selected reference set.

\subsubsection{Local Coverage for GRaF}
\label{app:revision:graf}

GRaF selects nearby reference measurements for each query, so its performance may depend more strongly on local reference coverage than that of \ourSystem.
We examine this effect on unseen-variant fold~1 using~\(150\) test queries per scene and identical nested reference sets for both methods.
The budgets range from~\(M=32\) to~\(M=256\), with each larger set retaining the references in the smaller sets.
We select GRaF's neighbor count from~\(L\in\left\{4,8,16,32\right\}\) using these test queries; the best choice is~\(L=4\).
This test-selected choice favors GRaF, so the experiment diagnoses reference-density sensitivity rather than repeating the main-table protocol.

As shown in~\autoref{tab:revision:graf}, GRaF's PSNR rises by~\(4.24\,\mathrm{dB}\) and its SSIM by~\(0.137\) as the budget increases from~\(M=32\) to~\(M=256\).
Over the same range, \ourSystem's PSNR and SSIM change little.
The PSNR gap narrows from~\(8.50\) to~\(4.30\,\mathrm{dB}\), but does not close even with eight times as many references.
At~\(M=32\), GRaF's PSNR falls from~\(16.79\) to~\(15.96\,\mathrm{dB}\) between queries in the nearest- and farthest-reference quintiles, whereas \ourSystem shows no monotone distance trend.
These observations suggest that GRaF benefits from denser query-local coverage, while \ourSystem remains comparatively stable across the tested budgets.
The experiment covers one fold and one reference draw per budget; unseen-category evaluation and spatially stratified GRaF references were not evaluated.

\begin{table}[t]
\centering
\small
\caption{\textit{Reference-budget sensitivity on one unseen-variant fold.}
GRaF uses the test-selected~\(L=4\).}
\label{tab:revision:graf}
\begin{tabular}{lrrrr}
\toprule
Method & \(M=32\) & \(M=64\) & \(M=128\) & \(M=256\) \\
\midrule
GRaF PSNR & 16.20 & 17.82 & 19.36 & 20.44 \\
GRaF SSIM & 0.525 & 0.576 & 0.627 & 0.662 \\
\midrule
\ourSystem{} PSNR & 24.70 & 24.70 & 24.74 & 24.74 \\
\ourSystem{} SSIM & 0.785 & 0.784 & 0.785 & 0.783 \\
\bottomrule
\end{tabular}
\end{table}

\subsection{Representation and Renderer Diagnostics}
\label{app:representation_choices}
\label{app:error_sources}

\subsubsection{Reference Coordinates}
\label{app:revision:coordinates}

We train spectrum-only, absolute-coordinate, and query-relative-coordinate reference-conditioning variants on one unseen-variant fold and one unseen-category fold, reusing the same pretrained encoder.
Absolute coordinates preserve once-per-scene conditioning, whereas relative coordinates require rebuilding the conditioned field for each query.

Zeroing a trained coordinate branch at evaluation changes accuracy by at most~\(0.13\,\mathrm{dB}\) and~\(0.003\) SSIM, with no consistent direction of change.
Across three seeds, the absolute-coordinate model differs from spectrum-only by~\(-0.05\,\mathrm{dB}\) on the unseen-variant fold and~\(+0.33\,\mathrm{dB}\) on the unseen-category fold, within seed spreads of~\(0.47\) and~\(0.95\,\mathrm{dB}\).
Relative coordinates reduce PSNR by~\(0.70\,\mathrm{dB}\) on one fold and improve it by~\(0.73\,\mathrm{dB}\) on the other, while zeroing their trained coordinate branch again has negligible effect.
On these tested folds, neither coordinate variant provides a consistent accuracy gain over spectrum-only conditioning, which also avoids reference-pose metadata.

\begin{table}[t]
\centering
\small
\caption{\textit{Encoder-training ablation at~\(M=32\).} Results are from seed~0.}
\label{tab:revision:pretraining}
\begin{tabular}{lrrrr}
\toprule
Encoder training & Var. PSNR & Var. SSIM & Cat. PSNR & Cat. SSIM \\
\midrule
Frozen random & 24.22 & 0.778 & 25.06 & 0.794 \\
Synthesis loss only & 24.08 & 0.770 & 25.54 & 0.809 \\
Masked spectrum & 24.75 & 0.793 & 24.84 & 0.790 \\
Supervised contrastive & 24.97 & 0.794 & 25.70 & 0.812 \\
\bottomrule
\end{tabular}
\end{table}

\subsubsection{Pretraining Objective}
\label{app:revision:pretraining}

We compare four encoder-training conditions on one unseen-variant fold and one unseen-category fold, using three seeds and the same~\(5{,}500\) total parameter updates.
The conditions are a frozen random encoder, synthesis-loss-only end-to-end training, masked-spectrum self-supervision, and the supervised contrastive objective used by \ourSystem.

Supervised contrastive pretraining gives the highest seed-0 PSNR and SSIM on both folds in~\autoref{tab:revision:pretraining}.
Across seeds, its PSNR exceeds the best label-free condition by~\(0.24\) and~\(0.35\,\mathrm{dB}\), gains comparable to the seed spreads.
Accuracy alone, however, does not establish that a model uses its references.
In the seed-0 reference-swap probe, seven of the eight trained models respond to a changed reference set, but the synthesis-loss-only model on the unseen-variant fold is exactly reference-blind despite reaching~\(24.08\,\mathrm{dB}\).
Scene-identification accuracy is also insufficient as a reference-use diagnostic: even a frozen random encoder separates scenes using raw spectral statistics.

\subsubsection{Sensitivity to Anchor Resolution}
\label{app:anchor_resolution}

The canonical anchor lattice controls the spatial granularity of the conditioned propagation field.
The default configuration uses~\(K=20\times20\times10=4000\) anchors shared across scenes.
We evaluate whether the reported performance depends strongly on this choice.

\textbf{Protocol.}
We evaluate five lattice resolutions with the same~\(2{:}2{:}1\) aspect ratio:
\[
K\in\left\{500,1372,4000,6912,10976\right\}.
\]
For each resolution, we retrain the propagation model on the first unseen-variant fold while reusing the same pretrained reference encoder.
All other training hyperparameters, the~\(M=32\) references, and the evaluation split remain unchanged.
Because the lattice spacing also determines the allowable anchor-position refinement and the initial spatial scale of each component, changing~\(K\) jointly changes the spatial discretization of the propagation field.

\textbf{The Default Resolution Provides the Best Accuracy.}
As shown in~\autoref{fig:anchor_resolution}, the default~\(K=4000\) lattice achieves the highest~PSNR of~\(24.97\,\mathrm{dB}\) and~SSIM of~\(0.794\), together with~\(0.083\) NMSE.
Coarser lattices reduce~PSNR by~\(0.30\)--\(0.45\,\mathrm{dB}\), indicating that insufficient spatial resolution moderately limits synthesis accuracy.
Increasing the resolution does not improve performance either:~\(K=6912\) and~\(K=10976\) reduce~PSNR by~\(1.29\,\mathrm{dB}\) and~\(0.27\,\mathrm{dB}\), respectively.
AoA error remains nearly unchanged across all resolutions, ranging only from~\(5.06^\circ\) to~\(5.20^\circ\).

\textbf{Higher Resolution Adds Cost Without Consistent Benefit.}
Training time increases steadily with the number of anchors.
On one~H100, training requires approximately~\(50\), \(51\), \(68\), \(92\), and~\(124\) minutes for the five resolutions from smallest to largest.
Thus, the two finer lattices increase training cost by approximately~\(1.4\times\) and~\(1.8\times\) relative to the default without improving synthesis accuracy.

Analysis of the predicted anchor powers also indicates that the model uses only a fraction of the available anchors.
At the default resolution, the anchor-power participation ratio corresponds to approximately~\(244\) effective anchors per query, substantially fewer than the~\(4000\) available anchors.
The finer lattices do not consistently increase this effective capacity: one run concentrates its power on very few anchors, while the other distributes it over substantially more anchors, yet neither improves accuracy.
Because each alternative resolution is trained once, these differences should not be interpreted as an inherent failure mode of finer lattices.
Rather, they show that increasing anchor density beyond the default does not provide a consistent accuracy benefit.

Overall, performance is not highly sensitive to anchor resolution around the default operating point.
A substantially coarser lattice modestly reduces synthesis fidelity, while finer lattices add computation without improving accuracy.
The default~\(K=4000\) therefore provides a practical balance between spatial resolution, synthesis quality, and training cost.

\begin{figure}[t]
\centering
\includegraphics[width=\linewidth]{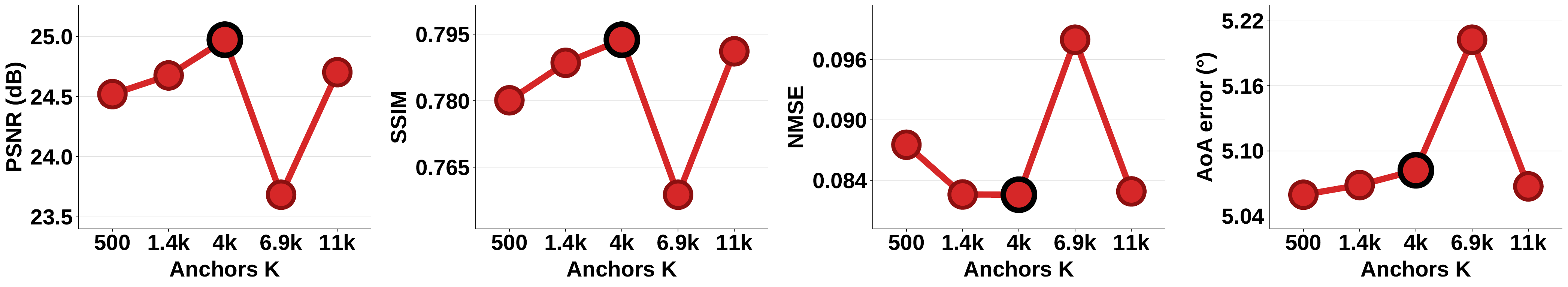}
\caption{\textit{Sensitivity to anchor-lattice resolution.}
Performance is shown across lattice sizes \(K\) with \(M=32\). The default \(K=4{,}000\) corresponds to the reported model.}
\label{fig:anchor_resolution}
\end{figure}

\subsubsection{Oracle Renderer Comparison}
\label{app:revision:oracle}

We ray-trace the first~\(100\) test queries in each simulated scene and render the exact paths coherently, preserving phase cross terms, or incoherently, matching \ourSystem's component-power approximation.
We also evaluate feed-forward~\ourSystem on the same queries.
All predictions are scored against the released noisy targets, so even an exact-path rendering need not match them perfectly.

As shown in~\autoref{tab:revision:oracle}, the incoherent oracle is~\(0.38\,\mathrm{dB}\) below the coherent oracle in average PSNR, while feed-forward~\ourSystem is another~\(0.66\,\mathrm{dB}\) below the incoherent oracle.
The incoherent oracle nevertheless scores slightly better on SSIM, NMSE, and AoA error, so the coherent renderer is not uniformly better against these noisy targets.
The independent noisy coherent sample further illustrates the effect of target noise on the measured scores.

The average PSNR difference conceals a larger approximation error in dense multipath.
When the Ricean factor is below~\(0\,\mathrm{dB}\), the incoherent oracle reaches~\(18.34\,\mathrm{dB}\), compared with~\(19.98\,\mathrm{dB}\) for the coherent oracle.
When at least four paths lie within~\(10\,\mathrm{dB}\) of the strongest, the corresponding scores are~\(16.81\) and~\(18.55\,\mathrm{dB}\).
Thus, omitting coherent interference incurs a modest average PSNR cost on this corpus but a substantially larger cost in its dense-multipath subsets.

\subsection{Deployment Efficiency}
\label{app:deployment_efficiency}

\begin{figure}[t]
\centering
\subfigure[PSNR]{\includegraphics[width=0.49\linewidth]{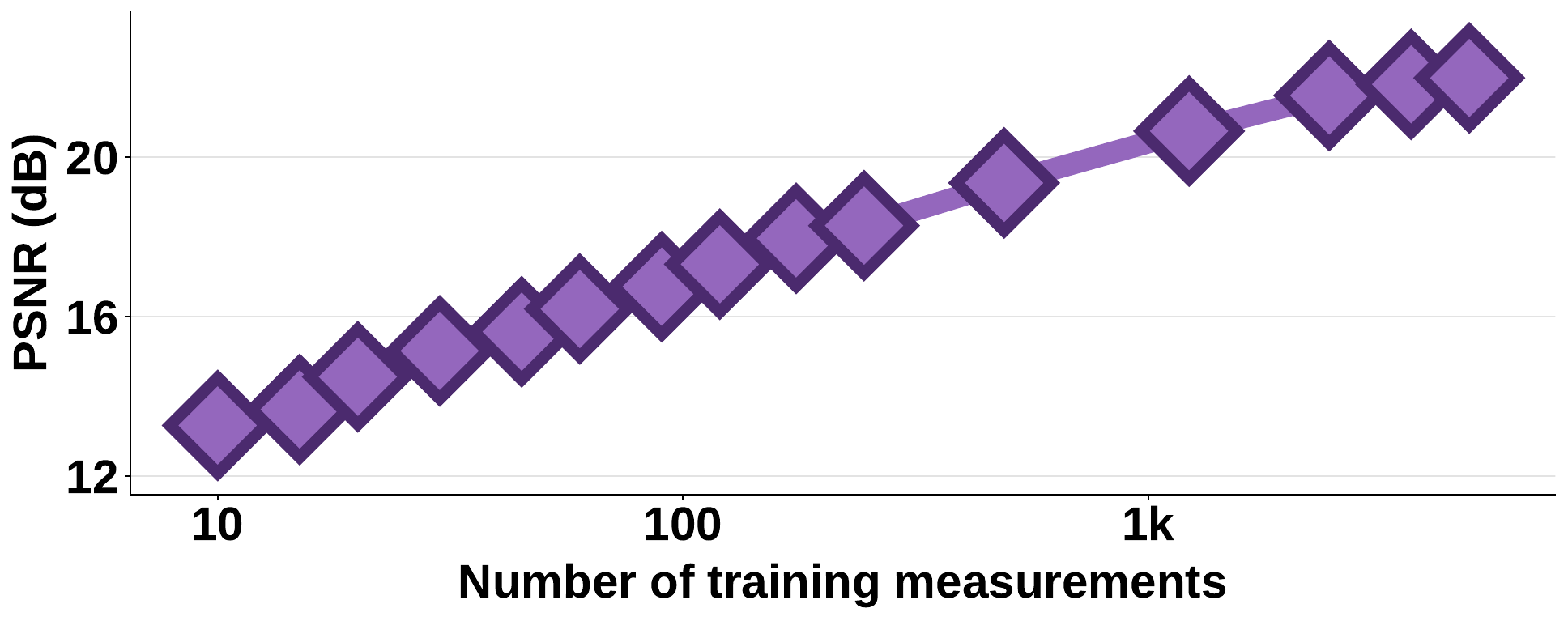}\label{fig:data_efficiency_a}}\hfill
\subfigure[SSIM]{\includegraphics[width=0.49\linewidth]{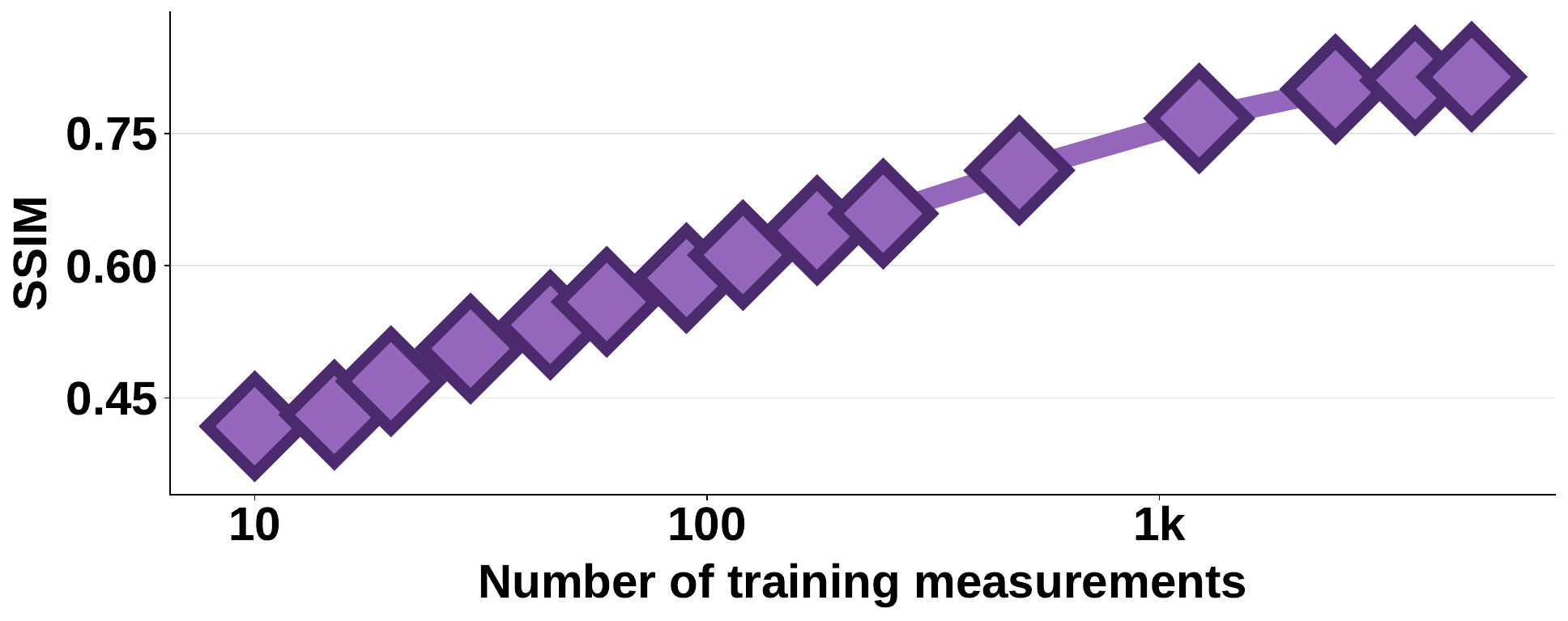}\label{fig:data_efficiency_b}}
\caption{\textit{GSRF data efficiency on S36.}
GSRF is trained on increasing subsets of the S36 measurements and evaluated on the same test set.
}
\label{fig:data_efficiency}
\end{figure}

\begin{table}[t]
\centering
\small
\caption{\textit{Oracle rendering on~\(3{,}500\) simulated queries.} 
}
\label{tab:revision:oracle}
\begin{tabular}{lrrrr}
\toprule
Prediction & PSNR$\uparrow$ & SSIM$\uparrow$ & NMSE$\downarrow$ & AoA$\downarrow$ \\
\midrule
Oracle coherent renderer & \textbf{23.77} & 0.744 & 0.111 & 7.91 \\
Independent noisy coherent sample & 22.62 & 0.720 & 0.129 & 8.90 \\
Oracle incoherent renderer & 23.39 & \textbf{0.746} & \textbf{0.103} & \textbf{7.63} \\
Feed-forward~\ourSystem & 22.73 & 0.734 & 0.127 & 8.67 \\
\bottomrule
\end{tabular}
\end{table}

\subsubsection{Data-Collection Efficiency}
\label{app:efficiency}

\textbf{Measurement Cost.}
Full-data per-scene training uses the target scene's training split: a median of~\(3376\) measurements per scene and a range of~\(2823\)--\(4286\).
In the cross-scene setting,~\ourSystem and~\ourSystemPR use~\(M=32\) target-scene references, approximately two orders of magnitude fewer measurements than full-data fitting.
The reference-budget analysis in~\S\ref{app:reference_budget} shows that this budget already lies in the regime where additional references provide little improvement in average accuracy.

The calibration-only protocol uses~\(M=32\) scene references and~\(N_{\mathrm C}=64\) separate calibration measurements, for a total of~\(96\) target-domain spectra.
The joint protocol conditions on the same~\(32\) references, calibrates with the same separate~\(64\) spectra, and reuses both sets for power refinement.
For an illustrative acquisition time of~\(10\,\mathrm{s}\) to~\(1\,\mathrm{min}\) per location, the upper end reflecting a Wi-Fi site-survey estimate~\citep{li2017turf}, collecting~\(3376\) measurements would take~\(9.4\)--\(56.3\) hours, compared with~\(5.3\)--\(32\) minutes for~\(32\) references.
The~\(64\) separate calibration measurements would add~\(10.7\)--\(64\) minutes under the same assumption; the reported protocol collects them for each scene.

\textbf{Empirical Data Efficiency.}
We directly examine how much target-scene data a per-scene field requires using the real scene~S36.
We train~GSRF on random subsets of its~\(4898\) non-test measurements, using~\(15\) subset sizes from~\(10\) to~\(4898\) and three random seeds per size, and evaluate every model on the same~\(1225\) test measurements.

\autoref{fig:data_efficiency} shows that accuracy improves steadily with the amount of real training data.
With~\(30\) measurements, close to the reference budget of~\ourSystem, GSRF achieves~\(15.14\,\mathrm{dB}\) PSNR and~\(0.506\) SSIM.
With~\(2449\) measurements on~S36, GSRF comes within~\(0.44\,\mathrm{dB}\) PSNR and~\(0.014\) SSIM of its full-data result of~\(21.98\,\mathrm{dB}\) and~\(0.814\).

\subsubsection{Computational Efficiency}
\label{app:compute_efficiency}
\autoref{tab:efficiency} reports wall-clock measurements on one NVIDIA~H100.
A single~\ourSystem model contains~\(12.7\)M inference-time parameters and is shared across scenes.
Training takes approximately~\(86\) minutes per fold, after which feed-forward conditioning on the~\(M=32\) references of a new scene requires~\(10\,\mathrm{ms}\).
This timing describes feed-forward~\ourSystem only.
Depending on whether anchor biases, the residual network, or both are fitted, the refinement optimizes~\(0.004\)--\(0.270\)M parameters and requires~\(47\)--\(202\) seconds of measured scene setup.

\begin{table}[t]
\centering\small
\caption{\textit{Cost of deployment in an unseen scene.}
Runtime is measured on the same NVIDIA~H100.
}
\label{tab:efficiency}
\setlength{\tabcolsep}{5pt}
\begin{tabular}{lccccc}
\toprule
Method & Parameters & Pre-training & Measurements & Setup & Inference \\
\midrule
\textbf{\ourSystem} & \textbf{12.7M} & \textbf{86\,min} & \textbf{32} & \textbf{10\,ms} & \textbf{8.5\,ms} \\
\ourSystemPR & 12.7M + 0.004--0.270M & 86\,min & 32 & 47--202\,s & 8.5\,ms \\
\midrule
GRaF & 12.6M & 177\,min & 32 & none & 2.9\,s \\
\midrule
WRF-GS & 1.3M & none & 3{,}376 & 21--25\,min & 3.7\,ms \\
GSRF & 4.4M & none & 3{,}376 & 45--99\,min & 12\,ms \\
NeRF$^2$ & 0.69M & none & 3{,}376 & 89--172\,min & 125\,ms \\
\bottomrule
\end{tabular}
\end{table}

Rendering requires approximately~\(8.5\,\mathrm{ms}\) per query.
A complete feed-forward test-scene map, containing a median of~\(957\) queries, takes approximately~\(16\)--\(21\,\mathrm{s}\) end to end, including scene conditioning and I/O; power refinement adds its one-time target-scene fit.
Once trained, the per-scene baselines have comparable per-query rendering costs for WRF-GS and GSRF, at approximately~\(3.7\) and~\(12\,\mathrm{ms}\), while NeRF$^2$ requires approximately~\(125\,\mathrm{ms}\).
Their primary deployment cost is scene-specific optimization before rendering:~\(21\)--\(25\) minutes for WRF-GS, \(45\)--\(99\) minutes for GSRF, and~\(89\)--\(172\) minutes for NeRF$^2$ under their released training configurations.

GRaF is also generalizable and therefore does not require per-scene fitting.
Its pretraining takes~\(177\) minutes per fold, but its neighbor retrieval and volumetric rendering require approximately~\(2.9\,\mathrm{s}\) per query.
A complete test-scene map therefore takes approximately~\(48\) minutes, compared with seconds for feed-forward~\ourSystem or a one-time fit plus seconds of rendering for~\ourSystemPR.

\subsection{Physical Distribution Shift}
\label{app:physical_stress}
\label{app:revision:material}

We test feed-forward~\ourSystem under material-property shifts by re-simulating all~\(35\) scenes after scaling the relative permittivity and conductivity of every non-metal material.
Geometry, poses, array, frequency, ray-tracing settings, and per-scene measurement~SNR remain fixed.
Each scene uses its unchanged unseen-variant-fold checkpoint, one fixed draw of~\(M=32\) shifted references, and its first~\(300\) test queries, without parameter adaptation.

The~\((1,1)\) row in~\autoref{tab:revision:material} is a re-simulated control.
Reducing the material parameters yields small PSNR gains in this test, whereas increasing them progressively degrades performance.
At the~\((1.2,1.5)\) shift, PSNR falls by~\(0.71\,\mathrm{dB}\).
At the most severe~\((2,10)\) shift, it falls by~\(5.14\,\mathrm{dB}\), and AoA error rises from~\(8.49^\circ\) to~\(21.66^\circ\).
This last setting serves as a stress test of the fixed propagation model, not a typical deployment perturbation.

\begin{table}[t]
\centering
\small
\caption{Material-property shift for feed-forward~\ourSystem, averaged over the five unseen-variant folds. 
}
\label{tab:revision:material}
\begin{tabular}{lrrrrr}
\toprule
Scale~\((\varepsilon_r,\sigma_{\mathrm{cond}})\) & PSNR & SSIM & NMSE & AoA & \(\Delta\)PSNR \\
\midrule
\((1,1)\) & 22.90 & 0.738 & 0.122 & 8.49 & 0.00 \\
\((0.85,0.67)\) & 22.99 & 0.740 & 0.119 & 8.05 & +0.09 \\
\((0.7,0.33)\) & 23.34 & 0.747 & 0.113 & 7.77 & +0.44 \\
\((1.2,1.5)\) & 22.19 & 0.725 & 0.132 & 9.35 & -0.71 \\
\((1.5,3)\) & 20.43 & 0.679 & 0.168 & 12.92 & -2.46 \\
\((2,10)\) & 17.76 & 0.598 & 0.254 & 21.66 & -5.14 \\
\bottomrule
\end{tabular}
\end{table}


\begin{thebibliography}{54}
\providecommand{\natexlab}[1]{#1}
\providecommand{\url}[1]{\texttt{#1}}
\expandafter\ifx\csname urlstyle\endcsname\relax
  \providecommand{\doi}[1]{doi: #1}\else
  \providecommand{\doi}{doi: \begingroup \urlstyle{rm}\Url}\fi

\bibitem[{3GPP}(2019)]{3gpp38901}
{3GPP}.
\newblock {Study on Channel Model for Frequencies from 0.5 to 100 GHz}.
\newblock Technical Report 38.901, V16.1.0 (Release 16), 3rd Generation Partnership Project, December 2019.
\newblock URL \url{https://www.3gpp.org/ftp/Specs/archive/38_series/38.901/38901-g10.zip}.

\bibitem[{3GPP}(2024)]{3gpp38455}
{3GPP}.
\newblock {NG-RAN; NR Positioning Protocol a (NRPPa)}.
\newblock Technical Specification 38.455, V18.4.0 (Release 18), 3rd Generation Partnership Project, December 2024.
\newblock URL \url{https://www.3gpp.org/ftp/Specs/archive/38_series/38.455/38455-i40.zip}.

\bibitem[{3GPP}(2026)]{3gpp37355}
{3GPP}.
\newblock {LTE Positioning Protocol (LPP)}.
\newblock Technical Specification 37.355, V19.3.0 (Release 19), 3rd Generation Partnership Project, June 2026.
\newblock URL \url{https://www.3gpp.org/ftp/Specs/archive/37_series/37.355/37355-j30.zip}.

\bibitem[{AISG}(2024)]{aisg2024als}
{AISG}.
\newblock {Antenna Location and Orientation Sensor}.
\newblock Subunit Type Standard {AISG-ST-ALS}, v{ALS}3.0.2.2, Antenna Interface Standards Group, June 2024.
\newblock URL \url{https://aisg.org.uk/file/AISG-ALS-vALS3.0.2.pdf}.

\bibitem[Alkhateeb et~al.(2017)Alkhateeb, Nam, Rahman, Zhang, and Heath]{alkhateeb2017beamassociation}
Ahmed Alkhateeb, Young-Han Nam, Md~Saifur Rahman, Jianzhong Zhang, and Robert~W. Heath.
\newblock {Initial Beam Association in Millimeter Wave Cellular Systems: Analysis and Design Insights}.
\newblock \emph{IEEE Transactions on Wireless Communications}, 16\penalty0 (5):\penalty0 2807--2821, 2017.

\bibitem[Berman et~al.(2025)Berman, Gannot, and Tirer]{spnet2025doa}
Lioz Berman, Sharon Gannot, and Tom Tirer.
\newblock {(SP)$^2$-Net: A Neural Spatial Spectrum Method for DOA Estimation}.
\newblock \emph{arXiv preprint arXiv:2509.15475}, 2025.

\bibitem[Bian et~al.(2025)Bian, Tao, Sun, Zhang, and Yu]{bian2025genert}
Kejia Bian, Meixia Tao, Shu Sun, Tongjia Zhang, and Jun Yu.
\newblock {GeNeRT: A Physics-Informed Approach to Intelligent Wireless Channel Modeling via Generalizable Neural Ray Tracing}.
\newblock \emph{arXiv preprint arXiv:2506.18295}, 2025.

\bibitem[Charatan et~al.(2024)Charatan, Li, Tagliasacchi, and Sitzmann]{charatan2024pixelsplat}
David Charatan, Sizhe~Lester Li, Andrea Tagliasacchi, and Vincent Sitzmann.
\newblock {pixelSplat: 3D Gaussian Splats from Image Pairs for Scalable Generalizable 3D Reconstruction}.
\newblock In \emph{IEEE/CVF Conference on Computer Vision and Pattern Recognition~(CVPR)}, 2024.

\bibitem[Chen et~al.(2021)Chen, Xu, Zhao, Zhang, Xiang, Yu, and Su]{chen2021mvsnerf}
Anpei Chen, Zexiang Xu, Fuqiang Zhao, Xiaoshuai Zhang, Fanbo Xiang, Jingyi Yu, and Hao Su.
\newblock {MVSNeRF: Fast Generalizable Radiance Field Reconstruction from Multi-View Stereo}.
\newblock In \emph{IEEE/CVF International Conference on Computer Vision~(ICCV)}, 2021.

\bibitem[Chen et~al.(2024{\natexlab{a}})Chen, Feng, Sun, Qian, and Zhang]{chen2024rfcanvas}
Xingyu Chen, Zihao Feng, Ke~Sun, Kun Qian, and Xinyu Zhang.
\newblock {RFCanvas: Modeling RF Channel by Fusing Visual Priors and Few-Shot RF Measurements}.
\newblock In \emph{ACM International Conference on Embedded Networked Sensor Systems~(SenSys)}, 2024{\natexlab{a}}.

\bibitem[Chen et~al.(2024{\natexlab{b}})Chen, Xu, Zheng, Zhuang, Pollefeys, Geiger, Cham, and Cai]{chen2024mvsplat}
Yuedong Chen, Haofei Xu, Chuanxia Zheng, Bohan Zhuang, Marc Pollefeys, Andreas Geiger, Tat-Jen Cham, and Jianfei Cai.
\newblock {MVSplat: Efficient 3D Gaussian Splatting from Sparse Multi-View Images}.
\newblock In \emph{European Conference on Computer Vision~(ECCV)}, 2024{\natexlab{b}}.

\bibitem[Garnelo et~al.(2018)Garnelo, Rosenbaum, Maddison, Ramalho, Saxton, Shanahan, Teh, Rezende, and Eslami]{garnelo2018cnp}
Marta Garnelo, Dan Rosenbaum, Chris~J. Maddison, Tiago Ramalho, David Saxton, Murray Shanahan, Yee~Whye Teh, Danilo~J. Rezende, and S.~M.~Ali Eslami.
\newblock {Conditional Neural Processes}.
\newblock In \emph{International Conference on Machine Learning~(ICML)}, 2018.

\bibitem[Guo et~al.(2026)Guo, Zhong, Tong, Lyu, and Zhang]{guo2025nbf}
Keqiang Guo, Yuheng Zhong, Xin Tong, Jiangbin Lyu, and Rui Zhang.
\newblock {Neural Beam Field for Spatial Beam RSRP Prediction}.
\newblock In \emph{IEEE Wireless Communications and Networking Conference~(WCNC)}, 2026.

\bibitem[Hoydis et~al.(2023)Hoydis, A{\"\i}t~Aoudia, Cammerer, Nimier-David, Binder, Marcus, and Keller]{hoydis2023sionnart}
Jakob Hoydis, Fay{\c{c}}al A{\"\i}t~Aoudia, Sebastian Cammerer, Merlin Nimier-David, Nikolaus Binder, Guillermo Marcus, and Alexander Keller.
\newblock {Sionna RT: Differentiable Ray Tracing for Radio Propagation Modeling}.
\newblock In \emph{IEEE Globecom Workshops (GC Wkshps)}, 2023.

\bibitem[Huang et~al.(2024)Huang, Miller, Prabhakara, Jin, Laroia, Kolter, and Rowe]{huang2024dart}
Tianshu Huang, John Miller, Akarsh Prabhakara, Tao Jin, Tarana Laroia, Zico Kolter, and Anthony Rowe.
\newblock {DART: Implicit Doppler Tomography for Radar Novel View Synthesis}.
\newblock In \emph{IEEE/CVF Conference on Computer Vision and Pattern Recognition~(CVPR)}, 2024.

\bibitem[{ITU-R}(2015)]{itur2015p2040}
{ITU-R}.
\newblock {Effects of Building Materials and Structures on Radiowave Propagation above about 100~MHz}.
\newblock Recommendation ITU-R P.2040-1, International Telecommunication Union, July 2015.
\newblock URL \url{https://www.itu.int/rec/R-REC-P.2040-1-201507-S/en}.

\bibitem[{ITU-R}(2023)]{itur2023p2040}
{ITU-R}.
\newblock {Effects of Building Materials and Structures on Radiowave Propagation above about 100~MHz}.
\newblock Recommendation ITU-R P.2040-3, International Telecommunication Union, August 2023.
\newblock URL \url{https://www.itu.int/rec/R-REC-P.2040-3-202308-S/en}.

\bibitem[Ji et~al.(2024)Ji, Mao, Xi, and Chen]{shmuel2024transmusic}
Junkai Ji, Wei Mao, Feng Xi, and Shengyao Chen.
\newblock {TransMUSIC: A Transformer-Aided Subspace Method for DOA Estimation with Low-Resolution ADCs}.
\newblock In \emph{IEEE International Conference on Acoustics, Speech and Signal Processing~(ICASSP)}, 2024.

\bibitem[Jin et~al.(2025)Jin, Jiang, Tan, Zhang, Bi, Zhang, Luan, Snavely, and Xu]{jin2025lvsm}
Haian Jin, Hanwen Jiang, Hao Tan, Kai Zhang, Sai Bi, Tianyuan Zhang, Fujun Luan, Noah Snavely, and Zexiang Xu.
\newblock {LVSM: A Large View Synthesis Model with Minimal 3D Inductive Bias}.
\newblock In \emph{International Conference on Learning Representations~(ICLR)}, 2025.

\bibitem[Kim et~al.(2019)Kim, Mnih, Schwarz, Garnelo, Eslami, Rosenbaum, Vinyals, and Teh]{kim2019anp}
Hyunjik Kim, Andriy Mnih, Jonathan Schwarz, Marta Garnelo, Ali Eslami, Dan Rosenbaum, Oriol Vinyals, and Yee~Whye Teh.
\newblock {Attentive Neural Processes}.
\newblock In \emph{International Conference on Learning Representations~(ICLR)}, 2019.

\bibitem[Krim \& Viberg(1996)Krim and Viberg]{krim1996twodecades}
Hamid Krim and Mats Viberg.
\newblock {Two Decades of Array Signal Processing Research: The Parametric Approach}.
\newblock \emph{IEEE Signal Processing Magazine}, 13\penalty0 (4):\penalty0 67--94, 1996.

\bibitem[Lee et~al.(2019)Lee, Lee, Kim, Kosiorek, Choi, and Teh]{lee2019settransformer}
Juho Lee, Yoonho Lee, Jungtaek Kim, Adam~R. Kosiorek, Seungjin Choi, and Yee~Whye Teh.
\newblock {Set Transformer: A Framework for Attention-Based Permutation-Invariant Neural Networks}.
\newblock In \emph{International Conference on Machine Learning~(ICML)}, 2019.

\bibitem[Li et~al.(2017)Li, Xu, Gong, and Zheng]{li2017turf}
Chenhe Li, Qiang Xu, Zhe Gong, and Rong Zheng.
\newblock {TuRF: Fast Data Collection for Fingerprint-Based Indoor Localization}.
\newblock In \emph{International Conference on Indoor Positioning and Indoor Navigation~(IPIN)}, 2017.

\bibitem[Liu et~al.(2026)Liu, Zhang, Yang, Cao, Xu, Xu, Sun, Dai, and Guan]{sudo2025swiftwrf}
Mufan Liu, Cixiao Zhang, Qi~Yang, Yujie Cao, Yiling Xu, Yin Xu, Shu Sun, Mingzeng Dai, and Yunfeng Guan.
\newblock {Deformable 2D Gaussian Splatting for Efficient Wireless Radiance Field Rendering}.
\newblock \emph{IEEE Transactions on Visualization and Computer Graphics}, 32\penalty0 (7):\penalty0 6363--6378, 2026.

\bibitem[Liu et~al.(2018)Liu, Zhang, and Yu]{liu2018dnnimperfections}
Zhang-Meng Liu, Chenwei Zhang, and Philip~S. Yu.
\newblock {Direction-Of-Arrival Estimation Based on Deep Neural Networks with Robustness to Array Imperfections}.
\newblock \emph{IEEE Transactions on Antennas and Propagation}, 66\penalty0 (12):\penalty0 7315--7327, 2018.

\bibitem[Lu et~al.(2024)Lu, Vattheuer, Mirzasoleiman, and Abari]{lu2024newrf}
Haofan Lu, Christopher Vattheuer, Baharan Mirzasoleiman, and Omid Abari.
\newblock {NeWRF: A Deep Learning Framework for Wireless Radiation Field Reconstruction and Channel Prediction}.
\newblock In \emph{International Conference on Machine Learning~(ICML)}, 2024.

\bibitem[Merkofer et~al.(2024)Merkofer, Revach, Shlezinger, Routtenberg, and van Sloun]{merkofer2024damusic}
Julian~P. Merkofer, Guy Revach, Nir Shlezinger, Tirza Routtenberg, and Ruud J.~G. van Sloun.
\newblock {DA-MUSIC: Data-Driven DoA Estimation via Deep Augmented MUSIC Algorithm}.
\newblock \emph{IEEE Transactions on Vehicular Technology}, 73\penalty0 (2):\penalty0 2771--2785, 2024.

\bibitem[{Mukund Varma T} et~al.(2023){Mukund Varma T}, Wang, Chen, Chen, Venugopalan, and Wang]{wang2023gnt}
{Mukund Varma T}, Peihao Wang, Xuxi Chen, Tianlong Chen, Subhashini Venugopalan, and Zhangyang Wang.
\newblock {Is Attention All That NeRF Needs?}
\newblock In \emph{International Conference on Learning Representations~(ICLR)}, 2023.

\bibitem[Nguyen \& Grover(2022)Nguyen and Grover]{nguyen2022tnp}
Tung Nguyen and Aditya Grover.
\newblock {Transformer Neural Processes: Uncertainty-Aware Meta Learning via Sequence Modeling}.
\newblock In \emph{International Conference on Machine Learning~(ICML)}, 2022.

\bibitem[Nukapotula et~al.(2025)Nukapotula, Tripathi, Pregler, Kalathil, Shakkottai, and Rappaport]{nukapotula2025gsparc}
Bhavya~Sai Nukapotula, Rishabh Tripathi, Seth Pregler, Dileep Kalathil, Srinivas Shakkottai, and Theodore~S. Rappaport.
\newblock {GSpaRC: Gaussian Splatting for Real-Time Reconstruction of RF Channels}.
\newblock \emph{arXiv preprint arXiv:2511.22793}, 2025.

\bibitem[Orekondy et~al.(2023)Orekondy, Pratik, Kadambi, Ye, Soriaga, and Behboodi]{orekondy2023winert}
Tribhuvanesh Orekondy, Kumar Pratik, Shreya Kadambi, Hao Ye, Joseph Soriaga, and Arash Behboodi.
\newblock {WiNeRT: Towards Neural Ray Tracing for Wireless Channel Modelling and Differentiable Simulations}.
\newblock In \emph{International Conference on Learning Representations~(ICLR)}, 2023.

\bibitem[Pan et~al.(2023)Pan, Liu, Liu, Qi, Huang, Zheng, Wu, and Gardill]{pan2023insitu}
Mengguan Pan, Shengheng Liu, Peng Liu, Wangdong Qi, Yongming Huang, Wang Zheng, Qihui Wu, and Markus Gardill.
\newblock {In Situ Calibration of Antenna Arrays for Positioning with 5G Networks}.
\newblock \emph{IEEE Transactions on Microwave Theory and Techniques}, 71\penalty0 (10):\penalty0 4600--4613, 2023.

\bibitem[Perez et~al.(2018)Perez, Strub, de~Vries, Dumoulin, and Courville]{perez2018film}
Ethan Perez, Florian Strub, Harm de~Vries, Vincent Dumoulin, and Aaron Courville.
\newblock {FiLM: Visual Reasoning with a General Conditioning Layer}.
\newblock In \emph{AAAI Conference on Artificial Intelligence~(AAAI)}, 2018.

\bibitem[Sajjadi et~al.(2022)Sajjadi, Meyer, Pot, Bergmann, Greff, Radwan, Vora, Lu{\v{c}}i{\'c}, Duckworth, Dosovitskiy, Uszkoreit, Funkhouser, and Tagliasacchi]{sajjadi2022srt}
Mehdi S.~M. Sajjadi, Henning Meyer, Etienne Pot, Urs Bergmann, Klaus Greff, Noha Radwan, Suhani Vora, Mario Lu{\v{c}}i{\'c}, Daniel Duckworth, Alexey Dosovitskiy, Jakob Uszkoreit, Thomas Funkhouser, and Andrea Tagliasacchi.
\newblock {Scene Representation Transformer: Geometry-Free Novel View Synthesis through Set-Latent Scene Representations}.
\newblock In \emph{IEEE/CVF Conference on Computer Vision and Pattern Recognition~(CVPR)}, 2022.

\bibitem[Schmidt(1986)]{schmidt1986music}
Ralph~O. Schmidt.
\newblock {Multiple Emitter Location and Signal Parameter Estimation}.
\newblock \emph{IEEE Transactions on Antennas and Propagation}, 34\penalty0 (3):\penalty0 276--280, 1986.

\bibitem[Shen et~al.(2026)Shen, Lago~Enamorado, Mao, and Wang]{shen2026gainerf}
Jingzhou Shen, Luis Lago~Enamorado, Shiwen Mao, and Xuyu Wang.
\newblock {A Geometric Algebra-Informed NeRF Framework for Generalizable Wireless Channel Prediction}.
\newblock In \emph{IEEE Conference on Computer Communications~(INFOCOM)}, 2026.

\bibitem[Shmuel et~al.(2025)Shmuel, Merkofer, Revach, van Sloun, and Shlezinger]{shmuel2023subspacenet}
Dor~H. Shmuel, Julian~P. Merkofer, Guy Revach, Ruud J.~G. van Sloun, and Nir Shlezinger.
\newblock {SubspaceNet: Deep Learning-Aided Subspace Methods for DoA Estimation}.
\newblock \emph{IEEE Transactions on Vehicular Technology}, 74\penalty0 (3):\penalty0 4962--4976, 2025.

\bibitem[Stoica et~al.(2011)Stoica, Babu, and Li]{stoica2011spice}
Petre Stoica, Prabhu Babu, and Jian Li.
\newblock {SPICE: A Sparse Covariance-Based Estimation Method for Array Processing}.
\newblock \emph{IEEE Transactions on Signal Processing}, 59\penalty0 (2):\penalty0 629--638, 2011.

\bibitem[Wang et~al.(2025)Wang, Chen, Karaev, Vedaldi, Rupprecht, and Novotny]{wang2025vggt}
Jianyuan Wang, Minghao Chen, Nikita Karaev, Andrea Vedaldi, Christian Rupprecht, and David Novotny.
\newblock {VGGT: Visual Geometry Grounded Transformer}.
\newblock In \emph{IEEE/CVF Conference on Computer Vision and Pattern Recognition~(CVPR)}, 2025.

\bibitem[Wang et~al.(2021)Wang, Wang, Genova, Srinivasan, Zhou, Barron, Martin-Brualla, Snavely, and Funkhouser]{wang2021ibrnet}
Qianqian Wang, Zhicheng Wang, Kyle Genova, Pratul Srinivasan, Howard Zhou, Jonathan~T. Barron, Ricardo Martin-Brualla, Noah Snavely, and Thomas Funkhouser.
\newblock {IBRNet: Learning Multi-View Image-Based Rendering}.
\newblock In \emph{IEEE/CVF Conference on Computer Vision and Pattern Recognition~(CVPR)}, 2021.

\bibitem[Wang et~al.(2026)Wang, Huang, and Cheng]{wang2026radiodiffv2}
Xiucheng Wang, Junxi Huang, and Nan Cheng.
\newblock {RadioDiff-v2: Generative Angular Radio Maps for Multi-Beam Selection and Localization}.
\newblock \emph{arXiv preprint arXiv:2607.08045}, 2026.

\bibitem[Wen et~al.(2025)Wen, Tong, Hu, Lin, and Zhang]{wen2025wrfgs}
Chaozheng Wen, Jingwen Tong, Yingdong Hu, Zehong Lin, and Jun Zhang.
\newblock {WRF-GS: Wireless Radiation Field Reconstruction with 3D Gaussian Splatting}.
\newblock In \emph{IEEE Conference on Computer Communications~(INFOCOM)}, 2025.

\bibitem[Wen et~al.(2026)Wen, Tong, Hu, Lin, and Zhang]{wen2024wrfgs}
Chaozheng Wen, Jingwen Tong, Yingdong Hu, Zehong Lin, and Jun Zhang.
\newblock {Neural Representation for Wireless Radiation Field Reconstruction: A 3D Gaussian Splatting Approach}.
\newblock \emph{IEEE Transactions on Wireless Communications}, 25:\penalty0 7490--7504, 2026.

\bibitem[Xue et~al.(2024)Xue, Ji, Ma, Guo, Xu, Chen, and Zhang]{xue2024beammanagement}
Qing Xue, Chengwang Ji, Shaodan Ma, Jiajia Guo, Yongjun Xu, Qianbin Chen, and Wei Zhang.
\newblock {A Survey of Beam Management for mmWave and THz Communications towards 6G}.
\newblock \emph{IEEE Communications Surveys and Tutorials}, 26\penalty0 (3):\penalty0 1520--1559, 2024.

\bibitem[Yang et~al.(2025)Yang, Dong, Ji, Du, and Srivastava]{yang2025gsrf}
Kang Yang, Gaofeng Dong, Sijie Ji, Wan Du, and Mani Srivastava.
\newblock {GSRF: Complex-Valued 3D Gaussian Splatting for Efficient Radio-Frequency Data Synthesis}.
\newblock In \emph{Conference on Neural Information Processing Systems~(NeurIPS)}, 2025.

\bibitem[Yang et~al.(2026)Yang, Chen, and Du]{yang2026graf}
Kang Yang, Yuning Chen, and Wan Du.
\newblock {Generalizable Radio-Frequency Radiance Fields for Spatial Spectrum Synthesis}.
\newblock In \emph{IEEE/CVF Conference on Computer Vision and Pattern Recognition~(CVPR)}, 2026.

\bibitem[Yang et~al.(2013)Yang, Xie, and Zhang]{yang2013offgridsbl}
Zai Yang, Lihua Xie, and Cishen Zhang.
\newblock {Off-Grid Direction of Arrival Estimation Using Sparse Bayesian Inference}.
\newblock \emph{IEEE Transactions on Signal Processing}, 61\penalty0 (1):\penalty0 38--43, 2013.

\bibitem[Yu et~al.(2021)Yu, Ye, Tancik, and Kanazawa]{yu2021pixelnerf}
Alex Yu, Vickie Ye, Matthew Tancik, and Angjoo Kanazawa.
\newblock {pixelNeRF: Neural Radiance Fields from One or Few Images}.
\newblock In \emph{IEEE/CVF Conference on Computer Vision and Pattern Recognition~(CVPR)}, 2021.

\bibitem[Zaheer et~al.(2017)Zaheer, Kottur, Ravanbakhsh, P{\'o}czos, Salakhutdinov, and Smola]{zaheer2017deepsets}
Manzil Zaheer, Satwik Kottur, Siamak Ravanbakhsh, Barnab{\'a}s P{\'o}czos, Ruslan Salakhutdinov, and Alexander Smola.
\newblock {Deep Sets}.
\newblock In \emph{Conference on Neural Information Processing Systems~(NeurIPS)}, 2017.

\bibitem[Zeng et~al.(2024)Zeng, Chen, Xu, Wu, Xu, Jin, Gao, Gesbert, Cui, and Zhang]{zeng2024ckm}
Yong Zeng, Junting Chen, Jie Xu, Di~Wu, Xiaoli Xu, Shi Jin, Xiqi Gao, David Gesbert, Shuguang Cui, and Rui Zhang.
\newblock {A Tutorial on Environment-Aware Communications via Channel Knowledge Map for 6G}.
\newblock \emph{IEEE Communications Surveys and Tutorials}, 26\penalty0 (3):\penalty0 1478--1519, 2024.

\bibitem[Zhang et~al.(2025)Zhang, Li, and Sun]{rfpgs2025}
Lihao Zhang, Zongtan Li, and Haijian Sun.
\newblock {RF-PGS: Fully-Structured Spatial Wireless Channel Representation with Planar Gaussian Splatting}.
\newblock \emph{arXiv preprint arXiv:2508.16849}, 2025.

\bibitem[Zhang et~al.(2026{\natexlab{a}})Zhang, Sun, Berweger, Gentile, and Hu]{zhang2024rf3dgs}
Lihao Zhang, Haijian Sun, Samuel Berweger, Camillo Gentile, and Rose~Qingyang Hu.
\newblock {RF-3DGS: Wireless Channel Modeling with Radio Radiance Field and 3D Gaussian Splatting}.
\newblock \emph{IEEE Transactions on Wireless Communications}, 25:\penalty0 10419--10433, 2026{\natexlab{a}}.

\bibitem[Zhang et~al.(2026{\natexlab{b}})Zhang, Zhao, Liu, Alkhateeb, Agrawal, and Qu]{zhang2026radtwin}
Yuru Zhang, Ming Zhao, Qiang Liu, Ahmed Alkhateeb, Abhishek~K. Agrawal, and Qi~Qu.
\newblock {RadTwin: Generalizable Wireless Digital Twin for Dynamic Environments}.
\newblock In \emph{International Conference on Computer Communications and Networks~(ICCCN)}, 2026{\natexlab{b}}.

\bibitem[Zhao et~al.(2023)Zhao, An, Pan, and Yang]{zhao2023nerf2}
Xiaopeng Zhao, Zhenlin An, Qingrui Pan, and Lei Yang.
\newblock {NeRF$^2$: Neural Radio-Frequency Radiance Fields}.
\newblock In \emph{Annual International Conference on Mobile Computing and Networking~(MobiCom)}, 2023.

\end{thebibliography}
\end{document}